\documentclass[11pt]{article}
\usepackage{graphicx} 
\usepackage[margin=1in]{geometry}
\usepackage{subcaption}
\usepackage{titletoc}
\usepackage{enumitem}
\usepackage{times}
\usepackage[compact]{titlesec}

\usepackage{amsmath}
\usepackage{amssymb}
\usepackage{mathtools}
\usepackage{amsthm}

\newtheoremstyle{tightexample}
  {1pt}                
  {1pt}                
  {\normalfont}        
  {}                   
  {\bfseries}          
  {.}                  
  {.5em}               
  {}                   

\usepackage{booktabs} 
\usepackage[noend,ruled]{algorithm2e} 
\usepackage{comment}
\usepackage[colorlinks=true,citecolor=black,linkcolor=blue,urlcolor=blue]{hyperref}

\usepackage{mathtools}
\usepackage{color}
\usepackage{natbib}

\usepackage{hyperref}
\usepackage{enumitem}

\usepackage{soul}

\newtheorem{theorem}{Theorem}
\newtheorem{corollary}{Corollary}
\newtheorem{lemma}{Lemma}
\newtheorem{proposition}{Proposition}
\theoremstyle{definition}
\newtheorem{definition}{Definition}

\theoremstyle{tightexample}
\newtheorem{remark}{Remark}
\newtheorem{example}{Example}

\DeclareMathOperator*{\argmax}{arg\,max}
\DeclareMathOperator*{\argmin}{arg\,min}

\title{Data-Driven Games with Coherent Risk Measures}
 \author{
   Bharat Gangwani\\
   Independent Researcher\\
   \texttt{bharatg.2020@economics.smu.edu.sg}
   \and
   Arunesh Sinha\\
   Management Science and Information Systems\\
   Rutgers University\\
   \texttt{arunesh.sinha@rutgers.edu}
 }

\date{}

\begin{document}

\begin{titlepage}
\maketitle

\begin{abstract}

    We introduce Coherent Utility Measure Games (CUMGs) in which players' uncertainty about the distribution of payoffs is modeled using coherent utility (risk) measures. Such measures, including mean semideviation risk and conditional value-at-risk, allow for interpretable notions of players' risk aversion while retaining formal equivalence to distributionally robust games. While CUMGs, which are a subclass of distributionally robust games, are continuous games in general, they can be viewed as finite games ``lifted'' to the mixed strategy space, which illustrates computational challenges. Prior results extend to guarantee equilibrium existence in data-driven CUMGs. We show that the computation of approximate equilibria for CUMGs parameterized by several risk measures lies in PPAD. Consequently, we obtain finite multilinear complementarity programs for the computation of equilibrium for these games, which grow with $K$, the number of data samples. Unlike standard games, these programs are not linear in a two-player setting. Next, we establish the existence of approximate equilibria in finite data-driven CUMGs with small supports in the pure actions for the players, together with sparse data subsamples that guide the search for such equilibria. We also develop a stochastic first-order approach for smoothed CUMGs using data mini-batches, with bounds linking first-order error to approximate equilibrium. We include numerical experiments comparing the sparse-support search algorithm with complementarity-program solvers.
\end{abstract}

\thispagestyle{empty}
\end{titlepage}

\section{Introduction}
In recent years, data-driven decision-making has become a central paradigm across economics, operations research (OR), and artificial intelligence (AI). As decisions increasingly rely on empirical data rather than fully known models, understanding and mitigating the impact of distributional uncertainty has emerged as a crucial challenge. In single-agent decision problems, a rich line of research has developed robust and risk-sensitive formulations, starting from classical Markowitz mean-variance
model~\citep{markowitz1952portfolio} to the unifying framework of coherent risk measures~\citep{artzner1999coherent}, which includes risk measures such as Conditional Value-at-Risk (CVaR) or Mean-semideviation (MSD). These formulations have played a foundational role in capturing aversion to uncertainty in finance~\citep{acharya2017measuring}, OR~\citep{dentcheva2024risk}, and also AI~\citep{lam2022risk}.

However, the extension of such distributional robustness principles \emph{based on risk measures} to multi-agent or strategic settings remains limited. Although robust equilibria and uncertainty-aware games have been studied (see related work), the incorporation of coherent risk measures into game-theoretic formulations remains nascent. This gap is consequential: coherent risk measures offer axiomatic rigor, interpretability, and a principled parameterization of risk attitudes such as tail or downside sensitivity. Thus, even when equilibrium responses are non-monotone in risk aversion, outcome changes can be traced directly to shifts in players’ risk preferences.
 By contrast, probability distance-based ambiguity sets (e.g., Wasserstein) entangle preference effects with statistical uncertainty, complicating comparative statics in strategic settings. We also note that, under standard duality results, optimization with a coherent risk measure is equivalent to a distributionally robust formulation over a suitable ambiguity set around a nominal distribution~\citep{ruszczynski2006optimization}. 

In this work, we explore data-driven games with a focus on risk-measure based games. In a reward setting, risk measures are called coherent utility measures. In the data-driven setting, we assume $K$ samples of the payoff matrix and treat the empirical distribution as the nominal distribution for various ambiguity sets. 
In this work, we use Mean-semideviation (MSD), Mean-deviation (MD), and CVaR as canonical examples of coherent utility measures. 
These risk measures are widely studied in the risk-management literature and offer easy interpretation, with mean–semideviation capturing downside risk and conditional value-at-risk quantifying tail risk. The contributions in this work are listed below:
\begin{enumerate}[noitemsep]
    \item \textcolor{black}{We show that CUMGs are continuous games, but can be viewed as finite games ``lifted'' to the mixed strategy space}. Utilizing prior existence results ~\citep{qu2017distributionally}, we 
    show that equilibrium always exists in \textcolor{black}{Coherent Utility Measure Games} (CUMGs) in the data-driven setting with $K$ samples. 
    \item We show that approximate equilibrium computation for MSD, MD, and CVaR CUMGs lies in PPAD (Theorem~\ref{thm:complexity}).
    \item  We obtain finite complementarity programs for the general data-driven CUMGs; for MSD, MD, and CVaR CUMGs, these programs are multilinear; yet, unlike in standard games, they are not linear even for two players.
    \item We obtain a small support result for general data-driven CUMGs (Theorem~\ref{thm:small_support_lifted_drg}), which motivates our algorithm that searches over small action supports and small data subsamples to find an approximate equilibrium.
    \item We develop a stochastic first-order framework for smoothed CUMGs using data mini-batches, with unbiased gradient estimators and bounds linking first-order error to approximate equilibrium; we also show that action-profile sampling yields biased estimators.
\end{enumerate}
We complement these results with numerical experiments comparing sparse-support search with complementarity-program solvers. The contributions above suggest that this class of games shares characteristics of both finite and continuous games, placing CUMGs in an intermediate regime between them. Overall, our framework \emph{unifies coherent risk modeling and strategic interaction}, offering a systematic approach to incorporating \textcolor{black}{interpretable} risk aversion into data-driven games. By grounding strategic behavior in well-established risk-theoretic principles, it provides both interpretability and analytical structure, enabling uncertainty-aware equilibrium analysis without relying solely on probability distance based ambiguity set descriptions.

\section{Preliminaries and Problem Formulation}

\textbf{Preliminaries of coherent utility (risk) measures}: Here we describe coherent risk measures. We define them in terms of gain or utility (more common in game theory), which are sometimes referred to as coherent utility measures~\citep{cheridito2006coherent}, unlike the more popular cost setting in the risk-measure literature. Let $\bar{\mathbb{R}}$ denote extended reals.
Let $\mathcal{X}: \Omega \rightarrow \mathbb{R}$ be the space of measurable functions on some sample space $\Omega$ with a given $\sigma$-algebra; intuitively, $\mathcal{X}$ can be thought of as random variables. A utility functional $\rho:\mathcal{X}\rightarrow \bar{\mathbb{R}}$ is coherent iff it satisfies the following properties for any $X,Y \in \mathcal{X}$ where $X,Y$ represent random utilities:
\begin{enumerate}[itemsep=0.0mm,parsep=0.2mm]
    \item[(A1)] \textbf{Concavity:} $\rho(\alpha X + (1-\alpha)Y) \geq \alpha\rho(X) + (1-\alpha)\rho(Y)$, and $\alpha\in[0,1]$
    \item[(A2)] \textbf{Monotonicity:} If $Y(\omega) \geq X(\omega), \forall \omega\in\Omega$, then $\rho(Y) \geq \rho(X)$
    \item[(A3)] \textbf{Translation equivariance:} If $a\in\mathbb{R}$, $\rho(X+a) = \rho(X) + a$
    \item[(A4)] \textbf{Positive homogeneity:} If $t>0$ and $X\in\mathcal{X}$, then $\rho(tX) = t\rho(X)$
\end{enumerate}
Coherent utility measures also admit a dual form, which connects them to distributional robustness. Take any \emph{nominal} probability measure $\mathbb{P}$ and consider the space of measurable functions $\mathcal{L}^p(\Omega, \Sigma, \mathbb{P})$ for $p \in [1, \infty)$, where $X \in \mathcal{L}^p(\Omega, \Sigma, \mathbb{P})$ means that the p-absolute moment of $X$ is bounded: $\mathbb{E}^{\mathbb{P}}[|X|^p] < \infty$. 
A concave function $f$ is proper iff $\text{dom}(f)\neq\emptyset$ and $f(\cdot)<+\infty$ for all points in its domain. Here, $\text{dom}(f)=\{x\mid f(x)> -\infty\}$ refers to the effective domain of $f$. 
The following result follows from prior work:
\begin{theorem} \cite[Theorem 2.2]{ruszczynski2006optimization} \label{thm:rs06} Let $\mathcal{P}$ be the set of probability measures over $(\Omega, \Sigma)$. Let $\mathcal{Y} = \left\{\mu \mid \frac{d\mu}{d\mathbb{P}} \in \mathcal{L}^q(\Omega, \Sigma, \mathbb{P}) \right\}$ be a set of measures, where $\frac{d\mu}{d\mathbb{P}}$ is the Radon-Nikodym derivative (probability density in non-measure theoretic sense). A proper, upper-semicontinuous, and concave utility functional $\rho$ is coherent if and only if it can be represented as
$
\rho(X) = \inf_{\mathbb{Q} \in U}\mathbb{E}^\mathbb{Q}[X], \ \forall X\in \mathcal{L}^p(\Omega, \Sigma, \mathbb{P}), 
$
where $U = \left\{\mathbb{Q} \in \mathcal{P} \cap \mathcal{Y} \mid  \mathbb{E}^\mathbb{Q}[X] \geq \rho(X),\forall X \in \mathcal{L}^p(\Omega, \Sigma, \mathbb{P} ) \right\}$ is a subset of probability measures.
\end{theorem}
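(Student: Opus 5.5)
The plan is to treat the statement as a Fenchel--Moreau duality argument on the paired spaces $\mathcal{L}^p(\Omega,\Sigma,\mathbb{P})$ and $\mathcal{L}^q(\Omega,\Sigma,\mathbb{P})$, with $1/p+1/q=1$. The pairing is $\langle \zeta, X\rangle = \mathbb{E}^{\mathbb{P}}[\zeta X]$, and we identify a measure $\mu\in\mathcal{Y}$ with its density $\zeta = d\mu/d\mathbb{P}$.

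\textbf{Easy direction.} Suppose $\rho(X)=\inf_{\mathbb{Q}\in U}\mathbb{E}^{\mathbb{Q}}[X]$. Each map $X\mapsto \mathbb{E}^{\mathbb{Q}}[X]=\langle\zeta,X\rangle$ is linear and continuous, so an infimum of such maps is concave and upper-semicontinuous. Each map is also positively homogeneous, so the infimum satisfies (A4). Monotonicity (A2) holds because $\zeta\ge 0$. Translation equivariance (A3) holds because $\mathbb{E}^{\mathbb{Q}}[X+a]=\mathbb{E}^{\mathbb{Q}}[X]+a$ whenever $\mathbb{E}^{\mathbb{P}}[\zeta]=1$. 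Hence $\rho$ is coherent.

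\textbf{Main direction.} Now let $\rho$ be proper, upper-semicontinuous, concave and coherent.
\begin{enumerate}
\item Define the concave conjugate $\rho^*(\zeta)=\inf_{X}\{\langle \zeta,X\rangle-\rho(X)\}$ on $\mathcal{L}^q$. Since the norm dual of $\mathcal{L}^p$ is $\mathcal{L}^q$ for $p\in[1,\infty)$, the Fenchel--Moreau theorem for proper usc concave functions gives $\rho(X)=\inf_{\zeta}\{\langle\zeta,X\rangle-\rho^*(\zeta)\}$.
\item Use (A4). Substituting $tX$ shows that $\rho^*(\zeta)$ equals either $0$ or $-\infty$. Indeed, $\rho^*(\zeta)\le \langle\zeta,0\rangle-\rho(0)=0$, since $\rho(0)=0$ by homogeneity and properness. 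If some $X$ gives a negative value, scaling $t\to\infty$ sends the infimum to $-\infty$. So $\rho^*$ is the (concave) indicator of a convex closed set $A=\{\zeta:\langle\zeta,X\rangle\ge\rho(X)\ \forall X\}$, and $\rho(X)=\inf_{\zeta\in A}\langle\zeta,X\rangle$.
\item Use (A2). If $\zeta\in A$ had $\mathbb{P}(\zeta<0)>0$, take $X=-t\mathbf{1}_{\{\zeta<0\}}\le 0$. Monotonicity gives $\rho(X)\ge\rho(0)=0$, while $\langle\zeta,X\rangle=-t\,\mathbb{E}^{\mathbb{P}}[\zeta\mathbf{1}_{\{\zeta<0\}}]$ is strictly negative for large $t$. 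This is a contradiction, so $\zeta\ge0$ a.s.
\item Use (A3) with constants $X=a$. Here $\rho(a)=a$, so $a\,\mathbb{E}^{\mathbb{P}}[\zeta]\ge a$ for all $a\in\mathbb{R}$, which forces $\mathbb{E}^{\mathbb{P}}[\zeta]=1$.
\end{enumerate}
Steps 3 and 4 show that every $\zeta\in A$ is the density of a probability measure $\mathbb{Q}\in\mathcal{P}\cap\mathcal{Y}$. Under this identification $A$ is exactly the set $U$ in the statement, which gives the representation.

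\textbf{Main obstacle.} The hard step is the first one: justifying biconjugation in this infinite-dimensional setting. Fenchel--Moreau needs $\rho$ to be usc in the weak topology $\sigma(\mathcal{L}^p,\mathcal{L}^q)$, not just the norm topology. For concave functions this follows because norm-closed convex superlevel sets are weakly closed (Mazur / Hahn--Banach). Properness is needed so that the conjugate is not identically $-\infty$, i.e., so that $A\neq\emptyset$. I would handle this with a Hahn--Banach separation of a point strictly above the hypograph from the closed convex hypograph. The remaining steps are routine test-function arguments.
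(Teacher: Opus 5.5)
Your architecture matches the standard proof of the cited result: Fenchel--Moreau biconjugation for the pairing of $\mathcal{L}^p$ with $\mathcal{L}^q$, then identifying the effective domain of the concave conjugate via (A4), (A2) and (A3). The paper gives no proof of its own; it only cites Ruszczy\'nski--Shapiro. The easy direction and Steps 1, 2 and 4 are fine.

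\textbf{Step 3 fails as written.} Both of its inequalities have the wrong sign.
\begin{itemize}
\item For $X=-t\mathbf{1}_{\{\zeta<0\}}\le 0$, monotonicity in the reward convention gives $\rho(X)\le\rho(0)=0$, not $\rho(X)\ge 0$.
\item The pairing is $\langle\zeta,X\rangle=-t\,\mathbb{E}^{\mathbb{P}}[\zeta\mathbf{1}_{\{\zeta<0\}}]=t\,\mathbb{E}^{\mathbb{P}}[|\zeta|\mathbf{1}_{\{\zeta<0\}}]>0$, which is positive, not negative.
\end{itemize}
So the defining inequality $\langle\zeta,X\rangle\ge\rho(X)$ of $A$ holds trivially for this $X$, and no contradiction arises. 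Your test direction is the one used in the cost/supremum convention of the original source, where one pushes the loss downward. With rewards and an infimum, the direction flips.

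The fix is to use the nonnegative test function $X=\mathbf{1}_{\{\zeta<0\}}$. Then (A2) gives $\rho(X)\ge\rho(0)=0$. Meanwhile $\langle\zeta,X\rangle=\mathbb{E}^{\mathbb{P}}[\zeta\mathbf{1}_{\{\zeta<0\}}]<0$, which contradicts $\zeta\in A$. No limit $t\to\infty$ is needed.

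\textbf{A smaller point in Step 2.} Positive homogeneity gives $\rho(0)=t\rho(0)$ for all $t>0$, so $\rho(0)\in\{0,\pm\infty\}$. Properness excludes only $+\infty$. Ruling out $\rho(0)=-\infty$ is where upper semicontinuity is needed; (A1)--(A4) plus properness do not by themselves exclude it. Pick $X_0$ with $\rho(X_0)$ finite. Then $tX_0\to 0$ in $\mathcal{L}^p$ and $\rho(tX_0)=t\rho(X_0)\to 0$ as $t\downarrow 0$, so usc gives $\rho(0)\ge 0$.

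With these two repairs the argument is complete.
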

The result above means that coherent utility (risk) measures are equivalent to a distributionally robust utility (cost) for a given uncertainty set $U$ defined w.r.t. a nominal distribution $\mathbb{P}$. Notably, the Radon-Nikodym derivative (or density) $\frac{d\mathbb{Q}}{d\mathbb{P}}$ exists and $\frac{d\textcolor{black}{\mathbb{Q}}}{d\mathbb{P}} \in \mathcal{L}^q(\Omega, \Sigma, \mathbb{P})$. In particular, a consequence of this existence, which we will use in the sequel, is that $\mathbb{Q}$ must be \emph{absolutely continuous} w.r.t. $\mathbb{P}$ ($\mathbb{Q}\ll\mathbb{P}$), meaning the support of $\mathbb{Q}$ is a subset of the support of $\mathbb{P}$.
\begin{example} \label{example:coherent}
Some popular examples of coherent utility measures are listed below. Note that all definitions are in terms of rewards; that is, the random variable $X$ specifies utility.
\begin{itemize}[noitemsep]
    \item (Mean-semideviation) Consider the following form with $p=1$ from~\citet{ruszczynski2006optimization}
  \[
      \rho_{\text{MSD},p}(X) =
\mathbb{E}^{\mathbb{P}}[X] - \gamma_s
\left\|
\max(0,\mathbb{E}^{\mathbb{P}}[X]-X)
\right\|_p  \text { where } \|Y\|_p=(\mathbb{E}^{\mathbb{P}}[|Y|^p])^{1/p}, \quad \gamma_s \in [0, 1].
  \]

  \item (Mean–deviation) Consider the following form with $p=1$ from~\citet{ruszczynski2006optimization}
  \[
    \rho_{\text{MD}}(X)
    =
    \mathbb{E}^{\mathbb{P}}[X] - 
    \gamma_d \left\|
 \mathbb{E}^{\mathbb{P}}[X] - X
\right\|_p  \text { where } \|Y\|_p=(\mathbb{E}^{\mathbb{P}}[|Y|^p])^{1/p}, \quad \gamma_d \in [0, 1/2^{1/p}].
\]
  \item (Conditional Value-at-Risk (CVaR) at level $\alpha \in(0,1)$) The well-known CVaR is defined for reward $X$ as $
    \mathrm{CVaR}_\alpha [X]=\sup_{z\in\mathbb{R}} \left\{
      z + \frac{1}{\alpha}\,\mathbb{E}^{\mathbb{P}}\!\big[\min(0,X-z)\big]
    \right\} 
  $, which measures the average of the worst 
$100\alpha$\% of rewards. A CVaR-based coherent utility functional can be written as
  \[
    \rho_{\text{CVaR}}(X)
    =
    (1-\gamma_c)\,\mathbb{E}^{\mathbb{P}}[X] + \gamma_c\,\mathrm{CVaR}_\alpha[X] = \mathbb{E}^{\mathbb{P}}[X] - \gamma_c \big(\mathbb{E}^{\mathbb{P}}[X] - \mathrm{CVaR}_\alpha[X] \big),
    \; \gamma_c \in [0,1].
  \]
  Note that with $X$ as reward, $\alpha$ is a small value which defines the lower tail of the distribution of $X$. The above form is from~\citet{ruszczynski2006optimization}.
\end{itemize}
\end{example}
Hereafter, MSD and MD denote the order-one case ($p=1$) analyzed in this paper; we suppress the order parameter, writing, e.g., $\rho_{\mathrm{MSD}}$ for $\rho_{\mathrm{MSD},1}$. The order-$p$ versions are mentioned only to note that, for $p>1$, the deviation term is nonlinear in the sample payoffs.

\noindent \textbf{Preliminaries of one-shot matrix games}: 
We denote vectors with a boldface $\mathbf{v}$ and the $i$-th component as $v_i$.
Consider a game with $m$ players, with $n_i > 1$ actions available to player $i$ where $i \in \{1, \dots , m\}$. $m$ and $n_i$ are finite. Let $\mathbf{A}_i$ denote the set of $n_i$ actions available to player $i$ and $\mathbf{a} = (a_{j_1}\dots a_{j_m})$ is a pure strategy profile chosen by the players where $j_i \in \{1,\dots ,n_i\}$ indexes the action taken by player $i$. Then $u_i: \mathbf{A}\rightarrow\mathbb{R}$ denotes the utility of player $i$ under pure strategy profile $\mathbf{a} \in \mathbf{A} = \prod_{i=1}^m \mathbf{A}_i$. We denote the mixed-strategy tuple of player $i$ as $\mathbf{x}_i \in \mathbf{X}_i$ from their strategy space $\mathbf{X}_i = \left\{\mathbf{x}_i\mid\mathbf{x}_i \in \prod_{i=1}^{n_i}[0,1],\mathbf{1}_{n_i}^T\mathbf{x}_i = 1 \right\}$. 
Then $x_i(a_{j_i}) \in [0,1]$ is a component of this $\mathbf{x}_i$ and denotes the probability of player $i$ choosing action $a_{j_i} \in \mathbf{A}_i$. Following standard convention, let $\mathbf{x}_{-i}$ denote the mixed strategy of all players \textit{except} player $i$, i.e. $\mathbf{x}_{-i} = [\mathbf{x}_1, \dots , \mathbf{x}_{i-1}, \mathbf{x}_{i+1}, \dots\mathbf{x}_m]$. Then, with slight abuse of and overloading notation, $u_i: \prod_{i=1}^m\mathbf{X}_i \rightarrow \mathbb{R}$ is the expected utility of player $i$ of playing mixed-strategy $\mathbf{x}_i$ given all other players' mixed strategies $\mathbf{x}_{-i}$:
\begin{equation}
    \label{eq::expUtil}
    u_i(\mathbf{x}_i, \mathbf{x}_{-i}) = \sum_{j_1=1}^{n_1}\dots \sum_{j_m=1}^{n_m}u_i(a_{j_1},\dots ,a_{j_m})\cdot x_{1}(a_{j_1})\cdot\dots \cdot x_{m}(a_{j_m}) = \sum_{\mathbf{a}\in \mathbf{A}}u_i(\mathbf{a})\prod_{s=1}^m x_s(a_{j_s})
\end{equation}
\noindent\textbf{Random utilities in games}: We consider utilities that are affected by \emph{randomness exogenous to the game}.
Let $\xi$ denote a random vector supported on $\Xi \subseteq \mathbb{R}^k$ for some positive integer $k$. Let $\mathcal{M}(\Xi)$ be the set of all probability measures on $\Xi$. All players are aware of $\Xi$. We abuse notation slightly to use the same letter $\xi$ for the random variable and its realization. Then, player $i$'s utility given pure strategy profile $\mathbf{a}$ and a realization $\xi$ is  $u_i(\mathbf{a} \mid \xi)$, which is a function from $\prod_{i=1}^m \mathbf{A}_i \times \Xi $ to $\mathbb{R}$. Consequently, given the utility function and a realization $\xi$, the player $i$'s expected utility is defined in a similar way to Equation \eqref{eq::expUtil}:
$
    \label{eq::expUtilXi}
    u_i(\mathbf{x}_i, \mathbf{x}_{-i}\mid \xi) 
    \!=\! \sum_{\mathbf{a}\in \mathbf{A}} u_i(\mathbf{a\mid\xi})\prod_{s=1}^m x_s(a_{j_s}).
$
Given probability measure $\mathbb{Q} \in \mathcal{M}(\Xi)$, we can write, \begin{equation} \label{eq:puresplit}
\mathbb{E}^{\mathbb{Q}}\left[u_i(\mathbf{x}_i, \mathbf{x}_{-i}\mid \xi)\right] = \sum_{\mathbf{a}\in \mathbf{A}}\mathbb{E}^\mathbb{Q}[u_i(\mathbf{a}\mid\xi)]\prod_{s=1}^m x_s(a_{j_s}) 
\end{equation}
Here, we introduce the shorthand $\mu^{\mathbb{Q}}_i(\mathbf{x}_i, \mathbf{x}_{-i}) = \mathbb{E}^{\mathbb{Q}}[u_i(\mathbf{x}_i, \mathbf{x}_{-i}\mid \xi)] $ and 
$\mu^{\mathbb{Q}}_i(\mathbf{a}) = \mathbb{E}^{\mathbb{Q}}[u_i(\mathbf{a}\mid \xi)] $.

\noindent\textbf{Distributionally robust game and equilibrium}: A \emph{distributionally robust game} (DRG) is one where there is uncertainty about the distribution of $\xi$. The uncertainty is given by an ambiguity set $U\subseteq\mathcal{M}(\Xi)$ of probability measures over $\Xi$ and is known by all players. In a \emph{distributionally robust equilibrium} (DRE), players maximize expected utility under the worst $\mathbb{Q} \in U$. Hence, $(\mathbf{x}^*_i, \mathbf{x}^*_{-i})$ is a DRE if:
\begin{equation}
\label{def::drEquilibrium}  
\mathbf{x}^*_i \in \argmax_{\mathbf{x}_i \in \mathbf{X}_i}\inf_{\mathbb{Q} \in U} \mu^{\mathbb{Q}}_i(\mathbf{x}_i, \mathbf{x}^*_{-i}), \; \forall i \in \{1,\dots ,m\}
\end{equation}
We also define $\rho_i(\mathbf{x}_i, \mathbf{x}_{-i}) \coloneq \inf_{\mathbb{Q} \in U} \mu^{\mathbb{Q}}_i(\mathbf{x}_i, \mathbf{x}_{-i}) = \inf_{\mathbb{Q} \in U} \sum_{\mathbf{a}\in \mathbf{A}}\mu^{\mathbb{Q}}_i(\mathbf{a}) \prod_{s=1}^m x_s(a_{j_s})  $. 

\noindent\textbf{Coherent utility measure game (CUMG) and distributional robustness}:
The utility of mixed strategy $(\mathbf{x}_i, \mathbf{x}_{-i})$ for player $i$ using a coherent utility measure $\rho$ is given as
$
    \rho_i(\mathbf{x}_i, \mathbf{x}_{-i}) = \rho(u_i(\mathbf{x}_i, \mathbf{x}_{-i}\mid \xi)).
$
Note that we overload the notation $\rho_i$ above, which is justified because the game with the above $\rho_i$ utility is 
a DRG due to Theorem~\ref{thm:rs06}. We call these games \emph{coherent utility measure games} (CUMGs); every CUMG is a DRG. 
As stated earlier, a key structural property of popular CUMGs (MSD, CVaR) is that the optimization over the ambiguity set $U$ has a closed form solution. 
We further note (and later prove in \textcolor{black}{Proposition~\ref{prop:DRGcont}}) that any DRG, including CUMG, is a continuous game and not a matrix game. This means that the strategy space of any DRG is the continuous space of mixed strategies $\prod_{i=1}^m \mathbf{X}_i$ and the utility $\rho_i$ generally \emph{cannot} be written as a convex combination of the utility of pure strategies, as done for empirical games in Equation~\ref{eq:puresplit}.

\noindent\textbf{Nature of ambiguity set}: For data-driven setting, often the ambiguity set is given as $U = \{\mathbb{Q} \mid d(\mathbb{Q},\mathbb{P}) \leq \eta \}$ where the nominal $\mathbb{P}$ is chosen as the \emph{empirical distribution} given by the data samples and $d$ is a distance between probability distributions; common choices for $d$ include Wasserstein distance, f-divergences such as KL and $\chi^2$ divergence. In particular, for CUMGs, one has a closed form solution for the inner optimization over the ambiguity set, yielding a closed form formula for $\rho_i$ as shown in Example~\ref{example:coherent}. We assume $K$ data samples with the corresponding samples of utilities for player $i$ for all strategy profiles $\mathbf{a}$ given as $u_i(\mathbf{a} \mid \xi_1), \ldots, u_i(\mathbf{a} \mid \xi_K)$. Then, $\mu^{\mathbb{P}}_i(\mathbf{a}) = \mathbb{E}^{\mathbb{P}}[u_i(\mathbf{a}\mid \xi)] = (1/K)\sum_{k=1}^K u_i(\mathbf{a} \mid \xi_k)$.

\section{Nature of DRG and Existence of Equilibrium}
\label{subsec:existence}
In this section, we present structural results relating to Distributionally Robust Equilibria in a data-driven setting with a focus on CUMG.
We start by showing that distributionally robust games are not matrix games. 
\begin{proposition}[DRG are continuous games]
    \label{prop:DRGcont}
    In general, DRG games are continuous games with pure strategy space as $\prod_{i=1}^m\mathbf{X}_i$ and they cannot be viewed as a matrix game in which the pure strategy space is $\mathbf{A}$ and payoffs are component-wise infima $\inf_{\mathbb{Q} \in U} \mathbb{E}^\mathbb{Q}[u_i(\mathbf{a} \mid \xi)]$ over the ambiguity set $U$. Mathematically,
    $
        \rho_i(\mathbf{x}_i, \mathbf{x}_{-i}) = \inf_{\mathbb{Q}\in U} \mu^{\mathbb{Q}}_i(\mathbf{x}_i, \mathbf{x}_{-i}) \geq \sum_{\mathbf{a}\in\mathbf{A}} \inf_{\mathbb{Q} \in U} \mu^{\mathbb{Q}}_i(\mathbf{a}) \prod_{s=1}^m x_s(a_{j_s})
    $
\end{proposition}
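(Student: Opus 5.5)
The proof has two parts: the inequality, which is a superadditivity-of-infimum argument, and the ``in general'' claim, which needs a counterexample showing the inequality can be strict.

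\textbf{The inequality.} I will use the decomposition in Equation~\eqref{eq:puresplit}. For every fixed $\mathbb{Q}\in U$,
\[
\mu^{\mathbb{Q}}_i(\mathbf{x}_i,\mathbf{x}_{-i})=\sum_{\mathbf{a}\in\mathbf{A}}\mu^{\mathbb{Q}}_i(\mathbf{a})\prod_{s=1}^m x_s(a_{j_s}).
\]
Each weight $\prod_s x_s(a_{j_s})$ is nonnegative. Also $\mu^{\mathbb{Q}}_i(\mathbf{a})\ge \inf_{\mathbb{Q}'\in U}\mu^{\mathbb{Q}'}_i(\mathbf{a})$ for each $\mathbf{a}$. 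Hence
\[
\mu^{\mathbb{Q}}_i(\mathbf{x}_i,\mathbf{x}_{-i})\ge \sum_{\mathbf{a}}\inf_{\mathbb{Q}'\in U}\mu^{\mathbb{Q}'}_i(\mathbf{a})\prod_s x_s(a_{j_s}).
\]
The right-hand side does not depend on $\mathbb{Q}$, so taking the infimum over $\mathbb{Q}\in U$ on the left gives the claimed inequality. If some infimum equals $-\infty$, the inequality holds trivially under the convention $0\cdot(-\infty)=0$.

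\textbf{Strictness in general.} This shows the payoff $\rho_i$ is not the multilinear extension of any fixed matrix of pure-profile payoffs. I would use the smallest possible setting:
\begin{itemize}[noitemsep]
\item $\Xi=\{\xi_1,\xi_2\}$ with nominal $\mathbb{P}$ uniform, and $U=\mathcal{M}(\Xi)$ (the worst case), or equivalently $\rho$ taken as MSD with $\gamma_s=1$.
\item Player 1 has two actions and player 2 has a single action. If the paper requires $n_i>1$, give player 2 two identical actions.
\item Payoffs: $u_1(a_1\mid\xi_1)=1$, $u_1(a_1\mid\xi_2)=0$, $u_1(a_2\mid\xi_1)=0$, $u_1(a_2\mid\xi_2)=1$.
\end{itemize}
Then each pure action has worst-case payoff $0$, so the right-hand side is $0$ for every mixed strategy. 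The mixture $x_1=(1/2,1/2)$ yields utility $1/2$ in every state, so $\rho_1=1/2>0$.

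This also rules out any other choice of matrix. A matrix game's payoff is multilinear in $\mathbf{x}$, hence affine in $\mathbf{x}_1$ here. But $\rho_1$ equals $0$ at both vertices and $1/2$ at the midpoint, so it is strictly concave along this segment and cannot be affine. So $\rho_i$ must be treated as a genuinely continuous (concave) payoff on $\prod_i\mathbf{X}_i$.

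I expect no real obstacle. The only delicate point is making the counterexample an admissible CUMG, i.e.\ confirming the ambiguity set is exactly as stated. For MSD with $p=1$ and $\gamma_s=1$, I would compute $\rho$ directly. For $X$ taking values $1,0$ with equal probability, $\rho=1/2-1\cdot\tfrac12\cdot\tfrac12=1/4$. The pure actions therefore give $1/4$, while the mixture gives $1/2$. The strict gap persists, so the conclusion does not depend on the choice of $U$ and the example can be stated directly for MSD.
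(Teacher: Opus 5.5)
Your proof is correct and follows the paper's route: the inequality is the superadditivity $\inf\sum\ge\sum\inf$ applied to the nonnegative-weight decomposition \eqref{eq:puresplit}, the same step used for concavity in the existence proof. Your counterexample is the slice of Example~\ref{ex:coordinationGame} with the column player fixed at $L$, only with a different ambiguity set. Two harmless slips: MSD with $\gamma_s=1$ is not equivalent to $U=\mathcal{M}(\Xi)$, since its envelope here is $\mathbb{Q}(\xi_1)\in[1/4,3/4]$, which is exactly why your direct computation gives $1/4$ rather than $0$; and $\rho_1=\min(t,1-t)$ with $t=x_1(a_1)$ is concave but not strictly concave, though non-affinity is all your argument needs.
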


\begin{example}\label{ex:coordinationGame}
We use the following two-sample DRG as an illustrative and running example.
\[
\begin{array}{c@{\qquad}c@{\qquad}c}
\begin{array}{c}
\begin{array}{c|cc}
 & L & R\\ \hline
U & (1,1) & (0,0)\\
D & (0,0) & (1,1)
\end{array}
\end{array}
&
\begin{array}{c}
\begin{array}{c|cc}
 & L & R\\ \hline
U & (0,0) & (1,1)\\
D & (1,1) & (0,0)
\end{array}
\end{array}
&
\parbox{0.55\linewidth}{\raggedright
A DRG with $K=2$ payoff samples. The empirical probability is $\hat{p}=\tfrac12$. Here $
\Xi=\{\xi_{1},\xi_{2}\}
$,
where \(\xi_{1}\) corresponds to left bimatrix and \(\xi_{2}\) to right bimatrix.
}
\end{array}
\]
We consider two possible DRG for illustration: (1) an ambiguity set centered around $\hat{p}$ given by $[0.3,0.7]$ (this can be induced by Wasserstein distance) and (2) a MSD game. 
For any mixed-strategy profile \((\mathbf{x}_i,\mathbf{x}_{-i})\) and any \(\mathbb{Q}\in\mathcal{M}(\Xi)\),
$
\mu_i^{\mathbb Q}(\mathbf{x}_i,\mathbf{x}_{-i})
=
\sum_{\mathbf a\in\mathbf A}
\bigl(
\mathbb Q(\xi_{1})\,u_i(\mathbf a\mid\xi_{1}) +
\mathbb Q(\xi_{2})\,u_i(\mathbf a\mid\xi_{2})
\bigr)
\prod_{s=1}^m x_s(a_{j_s}).
$
For the ambiguity set
$
U=\bigl\{\mathbb Q:\mathbb Q(\xi_{1})=p,\;p\in[0.3,0.7]\bigr\},
$
the distributionally robust utility is
$
\rho_i(\mathbf{x}_i,\mathbf{x}_{-i})
=
\inf_{p\in[0.3,0.7]}
\sum_{\mathbf a\in\mathbf A}
\bigl(
p\,u_i(\mathbf a\mid\xi_{1})
+
(1-p)\,u_i(\mathbf a\mid\xi_{2})
\bigr)
\prod_{s=1}^m x_s(a_{j_s}).
$
Using the utilities given above, we get
$
\rho_i(\mathbf{x}_i,\mathbf{x}_{-i})
=
\inf_{p\in[0.3,0.7]}
\Big[
p\bigl(x_1(U)x_2(L)+x_1(D)x_2(R)\bigr)
+
(1-p)\bigl(x_1(U)x_2(R)+x_1(D)x_2(L)\bigr)
\Big].
$
Further, using the fact that mixed strategy probabilities sum to one, we get
\begin{equation}
\rho_i(\mathbf{x}_i,\mathbf{x}_{-i})
=
\inf_{p\in[0.3,0.7]}
\Big[
 \ \frac12 +
\frac{2p-1}{2}\,
\bigl(1-2x_1(U)\bigr)\bigl(1-2x_2(L)\bigr)
\Big]. \label{eq:examplepayoff}
\end{equation}
The payoff $\rho_i$ is $0.5$ whenever either $x_1(U) = 0.5$ or $x_2(L) = 0.5$, for all other cases the payoff is less than $0.5$.
A few immediate conclusions are: (1) $x_1(U) = 0.5$ is a DRE and so is $x_2(L) = 0.5$ and (2) the payoff of a mixed strategy is not a convex combination of payoff of the pure strategies in support, as $\rho_i(x_1, L) = 0.5$ for $x_1(U) = 0.5$ but $\rho_i(U, L) = 0.3$ and $\rho_i(D, L) = 0.3$, which also implies that actions in the support of the DRE do not yield the same payoff as the DRE itself (unlike in a standard game).

For the MSD game with \(\gamma_s \in[0,1]\), one can derive
$
\rho_i(\mathbf{x}_i,\mathbf{x}_{-i})
=
\frac12
-
\frac{\gamma_s}{4}\,
\left|
\bigl(1-2x_1(U)\bigr)\bigl(1-2x_2(L)\bigr)
\right|.
$
We have the same conclusions about the equilibrium and continuous nature of this game as for the DRG. 
However, as we will show later in Lemma~\ref{lem:generalDRGComplementarity}, pure actions in the support of the mixed strategy of a player have equal risk-adjusted payoffs in the DRE equilibrium for data-driven CUMGs.
\end{example}
\noindent\textbf{Equilibrium}: Next, we state a general equilibrium existence result from ~\citet{qu2017distributionally}. This also establishes the existence of equilibrium for any CUMG. Recall the notation $\mu^{\mathbb{Q}}_i(\mathbf{a}) = \mathbb{E}^{\mathbb{Q}}[u_i(\mathbf{a}\mid \xi)]$. We still provide a part of the proof (in appendix) since our proof is much more succinct than~\citet{qu2017distributionally}.

\begin{theorem}[\cite{qu2017distributionally}]
    \label{thm::dreSMExist}
    Any finite m-player distributionally robust game with an underlying non-cooperative, simultaneous move, one-shot matrix game and an ambiguity set $U$ has a distributionally robust  equilibrium in mixed strategies, if for all $\mathbb{Q} \in U$ the first moment of utilities for any pure strategy $\mathbf{a}$ is bounded, i.e., $|\mu^{\mathbb{Q}}_i(\mathbf{a})| \leq M < \infty $ for all  $\mathbf{a} \in  \mathbf{A}$ and $\mathbb{Q} \in U$ for some $M$.
\end{theorem}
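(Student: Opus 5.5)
We prove existence by applying the Debreu--Glicksberg--Fan theorem for continuous games to the game whose players choose mixed strategies $\mathbf{x}_i\in\mathbf{X}_i$ and receive payoffs $\rho_i(\mathbf{x}_i,\mathbf{x}_{-i})=\inf_{\mathbb{Q}\in U}\mu^{\mathbb{Q}}_i(\mathbf{x}_i,\mathbf{x}_{-i})$. Proposition~\ref{prop:DRGcont} justifies treating the DRG this way. That theorem requires three things: each strategy set is nonempty, convex and compact; each payoff is continuous in the full profile; and each payoff is quasi-concave in the player's own strategy. The first is immediate, since every $\mathbf{X}_i$ is a simplex in $\mathbb{R}^{n_i}$.

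For quasi-concavity, fix $\mathbf{x}_{-i}$ and any $\mathbb{Q}\in U$. By Equation~\eqref{eq:puresplit}, the map $\mathbf{x}_i\mapsto\mu^{\mathbb{Q}}_i(\mathbf{x}_i,\mathbf{x}_{-i})$ is linear in $\mathbf{x}_i$, because the product $\prod_s x_s(a_{j_s})$ contains exactly one factor from player $i$. A pointwise infimum of linear functions is concave, so $\rho_i(\cdot,\mathbf{x}_{-i})$ is concave and hence quasi-concave. The boundedness hypothesis $|\mu^{\mathbb{Q}}_i(\mathbf{a})|\le M$ implies $|\mu^{\mathbb{Q}}_i(\mathbf{x})|\le M$ for every profile, so $\rho_i$ is finite, taking values in $[-M,M]$.

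The main work is continuity of $\rho_i$ jointly in $\mathbf{x}=(\mathbf{x}_i,\mathbf{x}_{-i})$. The family $\{\mathbf{x}\mapsto\mu^{\mathbb{Q}}_i(\mathbf{x}):\mathbb{Q}\in U\}$ consists of multilinear polynomials with coefficients $\mu^{\mathbb{Q}}_i(\mathbf{a})$ in $[-M,M]$, and we show this family is uniformly Lipschitz on $\prod_s\mathbf{X}_s$. Take two profiles $\mathbf{x},\mathbf{y}$. Writing $\prod_s x_s(a_{j_s})-\prod_s y_s(a_{j_s})$ as a telescoping sum, changing one player's coordinates at a time, and summing over $\mathbf{a}$ gives
\[
|\mu^{\mathbb{Q}}_i(\mathbf{x})-\mu^{\mathbb{Q}}_i(\mathbf{y})|\le M\sum_{s=1}^m\|\mathbf{x}_s-\mathbf{y}_s\|_1 .
\]
This bound holds for every $\mathbb{Q}\in U$. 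Since $|\inf_{\mathbb{Q}} f_{\mathbb{Q}}-\inf_{\mathbb{Q}} g_{\mathbb{Q}}|\le\sup_{\mathbb{Q}}|f_{\mathbb{Q}}-g_{\mathbb{Q}}|$, the infimum $\rho_i$ inherits the same Lipschitz bound and is therefore continuous.

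The delicate point is that $U$ is an arbitrary, possibly non-compact set, so we cannot argue continuity via Berge's maximum theorem. The uniform coefficient bound $M$ is exactly what replaces compactness, and it is where the hypothesis is used. With all three conditions verified, Debreu--Glicksberg--Fan yields a pure-strategy Nash equilibrium $\mathbf{x}^*$ of the continuous game. Unwinding the definitions, this means $\mathbf{x}_i^*\in\argmax_{\mathbf{x}_i}\rho_i(\mathbf{x}_i,\mathbf{x}^*_{-i})$ for every $i$, which is exactly the DRE condition \eqref{def::drEquilibrium}.

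For CUMGs the hypothesis holds automatically in the data-driven setting. Every $\mathbb{Q}\in U$ is absolutely continuous with respect to the empirical $\mathbb{P}$ by Theorem~\ref{thm:rs06}, so $\mathbb{Q}$ is supported on the $K$ samples, and we may take $M=\max_{\mathbf{a},k}|u_i(\mathbf{a}\mid\xi_k)|$.
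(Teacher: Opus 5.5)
Your argument is correct, and its skeleton matches the paper's: concavity of $\rho_i(\cdot,\mathbf{x}_{-i})$ as an infimum of linear functions, joint continuity, and a fixed-point step. Your appeal to Debreu--Glicksberg--Fan is the packaged form of the paper's direct use of Kakutani on the best-response correspondence. The genuine difference is the continuity step.

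The paper argues qualitatively. It replaces $U$ by the coefficient set $E=\{(\mu^{\mathbb{Q}}_i(\mathbf{a}))_{\mathbf{a}\in\mathbf{A}}:\mathbb{Q}\in U\}\subset\mathbb{R}^{|\mathbf{A}|}$. The moment bound makes $E$ bounded, so its closure $\bar E$ is compact. Since the objective is continuous in the coefficient vector, the infimum over $E$ equals the infimum over $\bar E$. Continuity then follows from the standard fact that $t\mapsto\inf_{v\in V}f(t,v)$ is continuous when $f$ is continuous and $V$ is compact.

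So your remark that non-compactness of $U$ rules out a Berge-type argument is overstated. It only rules out applying that argument over $U$ directly, and the paper sidesteps this by moving to the finite-dimensional compact set $\bar E$.

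Your telescoping estimate buys more. It shows every $\rho_i$ is $M$-Lipschitz with respect to $\sum_s\|\mathbf{x}_s-\mathbf{y}_s\|_1$, uniformly over an arbitrary ambiguity set. This makes the Lipschitz hypothesis of Proposition~\ref{prop:genericcomplexity} automatic under the bounded-moment condition. For the CUMGs of Theorem~\ref{thm:complexity}, it gives a constant independent of $\gamma$ and $\alpha$, which is sharper than the $1/\alpha$-dependent bound the paper derives for CVaR. The paper's route is shorter but yields only continuity.
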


\begin{remark}
    Several cases follow from this result. Recall that the empirical distribution $\mathbb{P}$ has a finite support. Also, finite-valued samples imply $\mu^\mathbb{P}_i(\mathbf{a})$ is bounded. Below we verify the condition in Theorem~\ref{thm::dreSMExist}.
    \begin{enumerate}[itemsep=0.0mm,parsep=0.2mm]
        \item (CUMG) As stated after Theorem~\ref{thm:rs06}, coherent utility measures have a distributionally robust form with the uncertainty set $U$ containing distributions that are absolutely continuous w.r.t. $\mathbb{P}$, meaning support of $\mathbb{Q}$ is a subset of support of $\mathbb{P}$. Thus, $\mathbb{Q}$ is a discrete probability distribution over a finite support, hence the first moment $\mu^\mathbb{Q}_i(\mathbf{a})$ of any such distribution $\mathbb{Q}$ is bounded.
        \item (f-divergence) Any distribution $\mathbb{Q}$ within the f-divergence ball ($d(\mathbb{Q}, \mathbb{P}) \leq \xi$) is absolutely continuous w.r.t. $\mathbb{P}$, and then the result follow from same reasoning as in CUMGs.
        \item (Wasserstein) It is well known that $|\mathbb{E}^\mathbb{Q}(g(\xi)) - \mathbb{E}^\mathbb{P}(g(\xi))| \leq L_g W_1(\mathbb{Q}, \mathbb{P})$ for any random variable $\xi$ and $L_g$-Lipschitz function $g$, which follows from Kantorovich duality for Wasserstein-1 distance, and also $W_1(\mathbb{Q}, \mathbb{P}) \leq W_p(\mathbb{Q}, \mathbb{P})$ for any  $\mathbb{Q}, \mathbb{P}, p> 1$. Taken together, this implies that if $u_i(\mathbf{a}\mid \xi)$ is Lipschitz in $\xi$ and $\mu^{\mathbb{P}}_i(\mathbf{a})$ is bounded then $\mu^{\mathbb{Q}}_i(\mathbf{a})$ is also bounded for any distribution $\mathbb{Q}$ in the Wasserstein ball of finite radius centered at $\mathbb{P}$. 
    \end{enumerate}
\end{remark}

\section{Algorithmic Approaches}
\label{sec:algorithms}
In this section, we investigate computational aspects of Distributionally Robust Equilibria (DRE) for \emph{data-driven DRGs and CUMGs}. We start with a complexity result, followed by algorithmic approaches.
\smallskip

\noindent \textbf{Complexity Results}:
For complexity result, we assume that $|u_i(\mathbf{a\mid\xi})| \leq M$ for all $\mathbf{a} \in \mathbf{A}$ for some finite constant $M$, which also implies the bound $M$ on expected utility as assumed in Theorem~\ref{thm::dreSMExist}. First, we define the \emph{approximate DRE} problem. It is   known that finding an exact Nash equilibrium is usually FIXP-hard~\citep{etessami2010complexity}, which lies above NP.
Thus, we define an $\epsilon$-approximate version of the DRE problem of Equation~\ref{def::drEquilibrium}, one in which the problem is to find $(\mathbf{x}_i^*, \mathbf{x}_{-i}^*)$ such that
: $  
\rho_i(\mathbf{x}_i^*, \mathbf{x}_{-i}^*) \geq \max_{\mathbf{x}_i \in \mathbf{X}_i} \rho_i(\mathbf{x}_i, \mathbf{x}_{-i}^*) - \epsilon, \; \forall i \in \{1,\dots ,m\}  .$
Our main complexity result, proved in Appendix~\ref{subsec:complexity}, verifies that these CUMGs satisfy the computability and Lipschitz conditions needed to apply the PPAD framework for concave games developed in~\citet{papadimitriou2023computational}.
\begin{theorem}\label{thm:complexity}
Under the assumptions in this subsection, each of the approximate DRE problems for mean-semideviation, mean-deviation, and CVaR games with robustness parameters $\gamma >0$ lies in PPAD.
\end{theorem}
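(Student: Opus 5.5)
The plan is to reduce the statement to the PPAD membership theorem for concave games of \citet{papadimitriou2023computational}. That theorem asks for four things: (i) compact convex polytope strategy sets; (ii) utilities $\rho_i(\mathbf{x}_i,\mathbf{x}_{-i})$ concave in the player's own strategy $\mathbf{x}_i$; (iii) Lipschitz continuity with a constant of polynomial bit-length; (iv) utilities (or supergradients) computable by a polynomial-size arithmetic circuit. Given these, finding an $\epsilon$-approximate equilibrium lies in PPAD. Condition (i) is immediate, since each $\mathbf{X}_i$ is a simplex. The remaining work is to check (ii)--(iv) for $\rho_{\mathrm{MSD}}$, $\rho_{\mathrm{MD}}$ and $\rho_{\mathrm{CVaR}}$ applied to the data-driven random utility $u_i(\mathbf{x}_i,\mathbf{x}_{-i}\mid\xi_k)$, $k=1,\dots,K$, under the empirical $\mathbb{P}$.

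\textbf{Concavity.} Fix $\mathbf{x}_{-i}$. Each sample payoff $X_k(\mathbf{x}_i)=u_i(\mathbf{x}_i,\mathbf{x}_{-i}\mid\xi_k)$ is linear in $\mathbf{x}_i$. A coherent $\rho$ is concave (A1), and the composition of a concave functional with an affine map is concave, so $\mathbf{x}_i\mapsto\rho(X(\mathbf{x}_i))$ is concave. Equivalently, by Theorem~\ref{thm:rs06} it is an infimum of linear functions $\mathbb{E}^{\mathbb{Q}}[X(\mathbf{x}_i)]$, and an infimum of linear functions is concave. This is the one place the restriction $\gamma_s\in[0,1]$, $\gamma_d\in[0,1/2]$, $\gamma_c\in[0,1]$ is used: it guarantees coherence, hence concavity. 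The hypothesis $\gamma>0$ only matters in the sense that the $\gamma=0$ case is the ordinary expected-utility game, which is already known to be in PPAD.

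\textbf{Lipschitz bound.} Writing $\mathbf{x}=(\mathbf{x}_i,\mathbf{x}_{-i})$, the map $\mathbf{x}\mapsto X_k(\mathbf{x})$ is multilinear with coefficients bounded by $M$. Hence it is Lipschitz in $\ell_1$ with constant at most $mM$, since changing one player's strategy by $\delta$ changes the payoff by at most $M\delta$. I then bound the risk functional in terms of the samples:
\begin{itemize}
\item the mean is $1$-Lipschitz in $\ell_\infty$ over samples;
\item the semideviation and the deviation $\frac1K\sum_k|\bar X - X_k|$ are $2$-Lipschitz;
\item $\mathrm{CVaR}_\alpha$ is $1$-Lipschitz in $\ell_\infty$, by monotonicity and translation equivariance.
\end{itemize}
So each $\rho_i$ is Lipschitz with constant $O(mM(1+2\gamma))$, which is polynomial.

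\textbf{Computability.} This is the step I expect to need the most care, chiefly for CVaR, which is defined through a $\sup_z$. For the empirical distribution I will replace that definition by the closed form
\[
\mathrm{CVaR}_\alpha[X]=\min_{\mathbf{q}}\Big\{\textstyle\sum_k q_kX_k : 0\le q_k\le \tfrac{1}{\alpha K},\ \sum_k q_k=1\Big\}.
\]
This minimum equals a sorting-based weighted sum of the smallest $\lceil\alpha K\rceil$ samples. Sorting is realizable by a polynomial-size comparator circuit built from $\min/\max$ gates, and equivalently the supremum over $z$ is attained at $z\in\{X_1,\dots,X_K\}$, giving $\max_k$ of $K$ explicit expressions. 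MSD and MD involve only sums, $\max(0,\cdot)$ and $|\cdot|$ of polynomials in $\mathbf{x}$. The multilinear payoffs $X_k$ themselves are computed by a circuit of size $O(K|\mathbf{A}|)$, which is polynomial in the input size, since the game is given explicitly by its $K$ payoff tensors.

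Therefore every $\rho_i$ is a $\{+,\times,\max,\min,\text{constant}\}$ circuit of polynomial size that is concave in own strategy and polynomially Lipschitz. Invoking \citet{papadimitriou2023computational} then places the $\epsilon$-approximate DRE problem in PPAD for each of the three measures. If their framework requires gradient oracles rather than value circuits, I would supply supergradients. These are available explicitly from the minimizing $\mathbb{Q}$ in the dual representation: the subgradient of the semideviation or absolute-deviation term, and the CVaR weights $\mathbf{q}$.
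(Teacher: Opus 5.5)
Your proposal is correct and follows the same overall route as the paper. Both arguments obtain PPAD membership from the concave-games result of \citet{papadimitriou2023computational}, after checking that each $\rho_i$ is concave in own strategy, Lipschitz, and polynomial-time computable. The differences lie in how the last two properties are verified.

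For Lipschitzness, the paper bounds gradients directly. For CVaR it bounds the Danskin supergradient $\frac1\alpha\sum_k p_k\theta_k\nabla_{\mathbf x}u_i(\mathbf x\mid\xi_k)$ and arrives at the constant $(1-\gamma_c+\gamma_c/\alpha)DM$. You instead compose the multilinear sample-payoff map with a risk functional that is Lipschitz in $\ell_\infty$ over the $K$ samples, and for CVaR you get constant $1$ from monotonicity and translation equivariance alone. This is cleaner and avoids the paper's $1/\alpha$ loss. The same observation shows that any coherent $\rho$ is $1$-Lipschitz in $\ell_\infty$, so it would also sharpen your $1+2\gamma$ constants for MSD and MD, though those already suffice.

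For computability, the paper only notes that CVaR is a linear program. It then invokes Theorem E.2 of \citet{fearnley2022complexity} to pass from ``polynomial-time computable and Lipschitz'' to a linear arithmetic circuit. You write an explicit circuit instead, using that the supremum over $z$ is attained at a breakpoint $z\in\{X_1,\dots,X_K\}$ (or, equivalently, a sorting network). This gives a more concrete certificate.

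Two pieces of connective tissue that the paper supplies are missing from your version.
\begin{enumerate}
\item Each $\mathbf X_i$ has empty interior in $\mathbb R^{n_i}$, so it is not well-bounded in the sense the framework uses. The paper handles this by dropping a coordinate to get a full-dimensional region (Proposition~\ref{prop:well-boundedness}), whereas you declare condition (i) immediate.
\item Your circuits contain genuine $\times$ gates for the multilinear payoffs. If the framework expects linear arithmetic circuits, you still need the Fearnley et al.\ conversion, or an argument that these bounded-degree, bounded-value products are admissible inputs.
\end{enumerate}
Neither is a real obstacle, but both should be stated explicitly rather than left implicit.
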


\subsection{Complementarity Program based Equilibrium Computation} \label{sec:complementarity}
In this sub-section, for the sake of easy notation, we assume that $0 \leq u_i(\mathbf{a\mid\xi}) \leq 1$ for all $\mathbf{a} \in \mathbf{A}$.
We investigate the technique of complementarity programs. Before diving into the details, we highlight a few high level aspects. As noted in Proposition~\ref{prop:DRGcont}, DRG (including CUMG) are continuous games, which are typically not solved as a complementarity program due to the infinite pure strategy space. However, CUMGs (and DRGs) can be viewed as a ``lifted game'' where the pure strategy space of the CUMG is the mixed strategy space $\prod_{i=1}^m\mathbf{X}_i$ of an underlying finite game. 
This view enables the complementary program approach. We introduce the $\perp$ notation for a \emph{complementarity constraint}, which enforces that, for two nonnegative scalar quantities $u$ and $v$, at least one of them must be zero. Formally, the meaning of $\perp$ notation is as follows
$
0 \leq u \;\perp\; v \geq 0 \iff u \geq 0,\; v \geq 0,\; u\,v = 0.
$
First, for a fixed opponents' profile $\mathbf{x}_{-i}$, player $i$'s best-response problem in the data-driven CUMG is
$$
\max_{\mathbf{x}_i\in\mathbf{X}_i}\rho_i(\mathbf{x}_i,\mathbf{x}_{-i}), \text{ where } \rho_i(\mathbf{x}_i,\mathbf{x}_{-i})=\rho(u_i(\mathbf{x}_i,\mathbf{x}_{-i}\mid \xi)) \text{ and } \mathbf{X}_i=\{\mathbf{x}_i\in\mathbb{R}^{n_i}:\mathbf{1}^\top\mathbf{x}_i=1,\ \mathbf{x}_i\geq 0\}.
$$
As $u_i(\mathbf{x}_i,\mathbf{x}_{-i}\mid \xi)$ is affine in $\mathbf{x}_i$ and $\rho$ is concave, $\rho_i(\cdot,\mathbf{x}_{-i})$ is concave, the following lemma gives the KKT/complementarity characterization of DRE.

\begin{lemma}[General complementarity forms for data-driven DRGs]
\label{lem:generalDRGComplementarity}
As defined earlier, consider a data-driven DRG induced by $\rho$ with finite samples $\{\xi_1,\dots,\xi_K\}$ and empirical nominal distribution $\mathbb{P}$. Then a profile $\mathbf{x}^*$ is a DRE iff the following condition holds

\textbf{\emph{Primal form.}} There exist $\alpha_i\in\mathbb{R}$ and a supergradient $\mathbf{g}_i\in\partial_{\mathbf{x}_i}\rho_i(\mathbf{x}_i^*,\mathbf{x}_{-i}^*)$ such that
\begin{align}
    &\mathbf{1}^\top\mathbf{x}_i^*=1,\qquad \mathbf{x}_i^*\geq 0, \qquad 0\leq \alpha_i-g_{i,l}\perp x_i^*(a_l)\geq 0,
    \; \forall l\in\{1,\dots,n_i\}. \label{eq:generalPrimalMCP}
\end{align}
Here $\alpha_i$ is the Lagrange multiplier for the equality constraint $\mathbf{1}^\top\mathbf{x}_i=1$ in player $i$'s best-response problem at equilibrium. Moreover, $\alpha_i=\max_{1\leq l\leq n_i}g_{i,l}$; hence every action used with positive probability in the equilibrium has maximal risk-adjusted marginal payoff.

\textbf{\emph {Dual form.}} For the data-driven CUMG that we consider, $\rho$ admits the dual representation in Theorem~\ref{thm:rs06} with risk envelope $U$ as defined earlier. Any $\mathbb{Q}\ll\mathbb{P}$ is identified with a vector $\mathbf{q}=(q_1,\dots,q_K)$, where $q_k=\mathbb{Q}(\xi_k)$. For player $i$, define worst-case dual distributions for player $i$ at $\mathbf{x}^*$:
$
U_i^*(\mathbf{x}^*)=
\argmin_{\mathbf{q}_i\in U}
\sum_{k=1}^K q_{i,k}u_i(\mathbf{x}_i^*,\mathbf{x}_{-i}^*\mid \xi_k).
$ 
For $\mathbf{q}_i\in U_i^*(\mathbf{x}^*)$, define risk-adjusted action values $\mathbf{v}_i(\mathbf{q}_i,\mathbf{x}_{-i}^*)\in\mathbb{R}^{n_i}$ componentwise for each action $a_l$ by
$
v_{i,l}(\mathbf{q}_i,\mathbf{x}_{-i}^*)=
\sum_{k=1}^K q_{i,k}u_i(a_l,\mathbf{x}_{-i}^*\mid \xi_k),
\quad l\in\{1,\dots,n_i\}.
$ 
Then there exist $\alpha_i\in\mathbb{R}$ and
risk-adjusted action values $\mathbf{v}_i\in\operatorname{conv}\{\mathbf{v}_i(\mathbf{q}_i,\mathbf{x}_{-i}^*):\mathbf{q}_i\in U_i^*(\mathbf{x}^*)\}$ such that
\begin{align}
    &\mathbf{1}^\top\mathbf{x}_i^*=1,\qquad \mathbf{x}_i^*\geq 0, \qquad 0\leq \alpha_i-v_{i,l}\perp x_i^*(a_l)\geq 0,
    \; \forall l\in\{1,\dots,n_i\}. \label{eq:generalDualMCP}
\end{align}
The multiplier $\alpha_i$ in this dual form is the same simplex KKT multiplier as in the primal form; the difference is that the supergradient vector $\mathbf{g}_i$ is represented through the vector $\mathbf{v}_i$.
In particular, if $U_i^*(\mathbf{x}^*)=\{\mathbf{q}_i^*\}$ is a singleton, then risk-adjusted payoff $v_{i,l}=\sum_{k=1}^K q_{i,k}^*u_i(a_l,\mathbf{x}_{-i}^*\mid \xi_k)$ is same for every $l$ in support of $\mathbf x^*_i$.
\end{lemma}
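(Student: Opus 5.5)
The plan is to treat each player's best-response problem as a finite-dimensional concave program over the simplex and read off its KKT conditions. With $\mathbf{x}_{-i}^*$ fixed, the map $\mathbf{x}_i\mapsto u_i(\mathbf{x}_i,\mathbf{x}_{-i}^*\mid\xi_k)$ is linear for each $k$. In the data-driven setting, $\rho$ acts on the vector $(u_i(\mathbf{x}_i,\mathbf{x}_{-i}^*\mid\xi_k))_{k=1}^K\in\mathbb{R}^K$. By Theorem~\ref{thm:rs06} and absolute continuity it is a finite-valued concave function on $\mathbb{R}^K$, namely $\min_{\mathbf{q}\in U}\sum_k q_k(\cdot)_k$ with $U$ a closed convex subset of the simplex. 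So $f_i(\mathbf{x}_i):=\rho_i(\mathbf{x}_i,\mathbf{x}_{-i}^*)$ is a finite concave function on $\mathbb{R}^{n_i}$. By definition~\eqref{def::drEquilibrium}, $\mathbf{x}^*$ is a DRE iff, for each $i$, $\mathbf{x}_i^*$ maximizes $f_i$ over $\mathbf{X}_i$.

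\textbf{Primal form.} Since $f_i$ is finite and concave on all of $\mathbb{R}^{n_i}$ and the constraints are polyhedral, no constraint qualification is needed beyond Slater-type polyhedral conditions. The optimality condition is $0\in-\partial f_i(\mathbf{x}_i^*)+N_{\mathbf{X}_i}(\mathbf{x}_i^*)$, and it is necessary and sufficient. Equivalently, some $\mathbf{g}_i\in\partial f_i(\mathbf{x}_i^*)$ satisfies $\mathbf{g}_i^\top(\mathbf{y}-\mathbf{x}_i^*)\le 0$ for all $\mathbf{y}\in\mathbf{X}_i$. I will compute the normal cone of the simplex explicitly. The condition says $\mathbf{g}_i=\alpha_i\mathbf{1}-\boldsymbol{\lambda}_i$ with $\boldsymbol{\lambda}_i\ge0$ and $\lambda_{i,l}x_i^*(a_l)=0$, which is exactly~\eqref{eq:generalPrimalMCP} with $\lambda_{i,l}=\alpha_i-g_{i,l}$. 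Because $\alpha_i\ge g_{i,l}$ for all $l$, with equality on the nonempty support, we get $\alpha_i=\max_l g_{i,l}$. Sufficiency follows from the supergradient inequality: $f_i(\mathbf{y})\le f_i(\mathbf{x}_i^*)+\mathbf{g}_i^\top(\mathbf{y}-\mathbf{x}_i^*)$, and $\mathbf{g}_i^\top(\mathbf{y}-\mathbf{x}_i^*)=-\boldsymbol{\lambda}_i^\top\mathbf{y}\le0$. Here I use $\mathbf{1}^\top(\mathbf{y}-\mathbf{x}_i^*)=0$ and $\boldsymbol{\lambda}_i^\top\mathbf{x}_i^*=0$.

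\textbf{Dual form.} The main obstacle is identifying $\partial f_i(\mathbf{x}_i^*)$. Write $f_i=\phi\circ L_i$, where $L_i\mathbf{x}_i=(u_i(\mathbf{x}_i,\mathbf{x}_{-i}^*\mid\xi_k))_k$ is linear and $\phi(\mathbf{z})=\min_{\mathbf{q}\in U}\mathbf{q}^\top\mathbf{z}$. By Danskin's theorem (equivalently, the subdifferential of a support function), with $U$ compact since it is a closed subset of the simplex in $\mathbb{R}^K$, we get $\partial\phi(\mathbf{z})=\operatorname{conv}\arg\min_{\mathbf{q}\in U}\mathbf{q}^\top\mathbf{z}$; this set is already convex. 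I will check that $U$ is closed, using upper semicontinuity of $\rho$ in Theorem~\ref{thm:rs06}, or else pass to its closure, which leaves $\phi$ unchanged. The chain rule for a finite concave function composed with a linear map holds with equality. It gives $\partial f_i(\mathbf{x}_i^*)=L_i^\top\partial\phi(L_i\mathbf{x}_i^*)$. The $l$-th component of $L_i^\top\mathbf{q}$ is $\sum_k q_k u_i(a_l,\mathbf{x}_{-i}^*\mid\xi_k)=v_{i,l}(\mathbf{q},\mathbf{x}_{-i}^*)$. Hence $\partial f_i(\mathbf{x}_i^*)=\operatorname{conv}\{\mathbf{v}_i(\mathbf{q}_i,\mathbf{x}_{-i}^*):\mathbf{q}_i\in U_i^*(\mathbf{x}^*)\}$. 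Substituting $\mathbf{g}_i=\mathbf{v}_i$ into the primal form yields~\eqref{eq:generalDualMCP} with the same $\alpha_i$.

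In the singleton case the convex hull is a single vector. The complementarity condition then forces $v_{i,l}=\alpha_i$ for every $l$ with $x_i^*(a_l)>0$, which proves the final claim.
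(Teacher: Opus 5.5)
Your proof is correct and takes essentially the same route as the paper. Concavity of the best-response objective makes the simplex KKT (normal-cone) conditions necessary and sufficient, and Danskin's theorem identifies the superdifferential as the convex hull of the vectors $\mathbf{v}_i(\mathbf{q}_i,\mathbf{x}_{-i}^*)$ over the worst-case $\mathbf{q}_i\in U_i^*(\mathbf{x}^*)$. You spell out several steps the paper leaves implicit: the explicit normal cone of the simplex, sufficiency via the supergradient inequality, and the chain rule $\partial f_i=L_i^\top\partial\phi$ for a linear map. These are refinements of the same argument rather than a different one.
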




Next, we derive complementarity programs for CVaR game from first principles for the sake of clarity; this complementarity program turns out to be multi-linear.
We also obtain similar programs for the simpler MSD and MD games (Appendix~\ref{subsec:MSDcomplementarity}).
In Appendix~\ref{sec:dualSpecificCUMG}, we show that these programs can also be derived as special cases of the dual formulation above. We also show (in  Appendix~\ref{sec:dualSpecificCUMG}) that certain CUMGs, such as order-p mean-semideviation, can be expressed only as \emph{non-multi-linear} complementarity programs; thus, \emph{not all CUMGs admit a multi-linear complementarity program formulation for equilibrium computation}.

\noindent \textbf{MLCP for CVaR Game}: 
We consider the problem of a single player $i$, given $\mathbf x_{-i}$
\begin{align}\label{opt::cVaRMax}
    \max_{\mathbf{x}_i \in \mathbf{X}_i} & \quad  (1-\gamma_c)  \mu^{\mathbb{P}}_i(\mathbf{x}_i, \mathbf{x}_{-i})  + \gamma_c \max_{z_i \in\mathbb{R}}\Big[ z_i  + \frac{1}{\alpha} \sum_{k=1}^K \mathbb{P}(\xi_k) \min\left( 0, u_i(\mathbf{x}_i, \mathbf{x}_{-i}\mid \xi_k) - z_i \right) \Big] 
\end{align}

Let the auxiliary variable $\nu_{i,k}$ denote $ \min(0, u_i(\mathbf{x}_i, \mathbf{x}_{-i}\mid \xi_k ) - z_i )$. Then, the summation in the objective is $(1/\alpha)\sum_{k=1}^K \mathbb{P}(\xi_k)\nu_{i,k}$. Next, we write the terms in the optimization explicitly in the optimization variables, that is, $\mu^{\mathbb{P}}_i(\mathbf{x}_i, \mathbf{x}_{-i}) = \sum_{l=1}^{n_i} x_i(a_l) \mu^{\mathbb{P}}_i(a_l, \mathbf{x}_{-i})$ and $u_i(\mathbf{x}_i, \mathbf{x}_{-i}\mid \xi) = \sum_{l=1}^{n_i} x_i(a_l) u_i(a_l, \mathbf{x}_{-i}\mid \xi)$.

\begin{subequations} \label{eq:CVARLP}
\begingroup
\setlength{\abovedisplayskip}{0.3em}
\setlength{\belowdisplayskip}{0.3em}
\setlength{\abovedisplayshortskip}{0.3em}
\setlength{\belowdisplayshortskip}{0.3em}

\begin{align}
\max_{\mathbf{x}_i,z_i,\boldsymbol{\nu}_i} 
& \quad (1-\gamma_c) \sum_{l=1}^{n_i} x_i(a_l)
\mu^{\mathbb{P}}_i(a_l,\mathbf{x}_{-i})
+ \gamma_c \Big[
z_i + \frac{1}{\alpha}\sum_{k=1}^{K}\mathbb{P}(\xi_k)\nu_{i,k}
\Big] \qquad \qquad \qquad \qquad \qquad 
\label{cVarOpt:obj}
\end{align}

\noindent
\begin{minipage}[t]{0.66\linewidth}
\begin{align}
\text{s.t.}\quad
& \nu_{i,k} \leq
\sum_{l=1}^{n_i} x_i(a_l)
u_i(a_l,\mathbf{x}_{-i}\mid \xi_k) - z_i,
\; \forall k \in \{1,\dots,K\}
\label{cVarOpt:cons1}
\\
& \nu_{i,k} \leq 0,
\; \forall k \in \{1,\dots,K\}
\label{cVarOpt:cons2}
\end{align}
\end{minipage}
\hfill
\begin{minipage}[t]{0.34\linewidth}
\begin{align}
& \vphantom{\sum_{l=1}^{n_i} x_i(a_l)
u_i(a_l,\mathbf{x}_{-i}\mid \xi_k)}
\mathbf{1}^{\top}\mathbf{x}_i = 1
\label{cVarOpt:cons3}
\\
& x_i(a_l) \geq 0,
\; \forall l \in \{1,\dots,n_i\}
\label{cVarOpt:cons4}
\end{align}
\end{minipage}

\endgroup
\end{subequations}
\par\medskip
The optimality conditions for the LP $\eqref{eq:CVARLP}$ yield dual multipliers $\lambda_{i,k}, \mu_{i,k}, \alpha_i$ and $\beta_{i,l}$ for constraints $\eqref{cVarOpt:cons1}, \eqref{cVarOpt:cons2}, \eqref{cVarOpt:cons3}$, and $\eqref{cVarOpt:cons4}$, respectively.
We introduce the following shorthand notation for easier presentation:
$    v_{i,l}(\mathbf{x}_{-i})  = (1 - \gamma_c)\mu^{\mathbb{P}}_i(a_l, \mathbf{x}_{-i}) + \sum_{k=1}^K\lambda_{i,k} u_i(a_l, \mathbf{x}_{-i} \mid \xi_k)$.
Then, we obtain the following:
\begin{lemma} \label{lem:CVaRMLCP_main}
Consider a new variable $t_i$, the following definition of $\tau ,\Theta$ and substitutions for $\mathbf{x}_i, \mathbf{\nu}_i, z_i,\lambda_{i,k}$
\begin{align*}
    & \tau \coloneq \Big( \prod_{j = 1}^m \alpha_j \Big)^{\frac{1}{m-1}} \;;\;
    \Theta \coloneq \Big( \prod_{j = 1}^m t_j \Big)^{\frac{1}{m-1}}\;; \; 
     \tilde{\mathbf{x}}_i \coloneq \frac{\alpha_i t_i}{\tau\Theta}\mathbf{x_i} , \; \forall i;  \;\;
    \tilde{\mathbf{\nu}}_i \coloneq \frac{\mathbf{\nu}_i}{\tau\Theta} , \; \forall i  ;  \;\;
    \tilde{z}_i \coloneq \frac{z_i}{\tau\Theta}\;; \;\; \tilde{\lambda}_{i,k} \coloneq t_i \lambda_{i,k} \; \forall i,k.   
\end{align*}
Consider the MLCP formed by stacking the constraints below for all players
\begin{subequations} 
\begin{align}
& 0 \leq ((\gamma_c/\alpha)\mathbb{P}({\xi_k}) t_i - \tilde{\lambda}_{i,k}) \perp (-\tilde{\nu}_{i,k}) \geq 0 \quad \forall k \in \{1, \dots, K \} \\
& 0 \leq \tilde{\lambda}_{i,k} \perp \big(\sum_{l=1}^{n_i}\tilde{x}_i(a_l)u_i(a_l, \tilde{\mathbf{x}}_{-i} \mid \xi_k) - \tilde{z}_i - \tilde{\nu}_{i,k} \big) \geq 0 \quad \forall k \in \{1, \dots, K \} \label{mcp:actionValue}\\
& 0 \leq \big(1 - t_i (1 - \gamma_c)\mu^{\mathbb{P}}_i(a_l, \tilde{\mathbf{x}}_{-i}) - \sum_{k=1}^K\tilde{\lambda}_{i,k} u_i(a_l, \tilde{\mathbf{x}}_{-i} \mid \xi_k) \big) \perp \tilde{x}_i(a_l) \geq 0, \quad  \forall l \in \{1, \dots, n_i\}  \\
& 0 \leq t_i
    \;\perp\;
    (\gamma_c t_i - \sum_{k=1}^K \tilde{\lambda}_{i,k})
    \geq 0.
\end{align}
\end{subequations}
Any non-zero $\tilde{\mathbf{x}}_i$ ($\forall i$) MLCP solutions correspond to DREs of the CVaR game, 
with the mixed strategies and game values retrieved by
\[
    \mathbf{x}_i = \frac{\tilde{\mathbf{x}}_i}{\sum_{l=1}^{n_i} \tilde{x}_i(a_l)} \qquad
    \Theta = \Big( \prod_{j = 1}^m t_j \Big)^{\frac{1}{m-1}} \qquad
    \tau\Theta = \frac{1}{\prod_{i=1}^m\sum_{l=1}^{n_i} \tilde{x}_i(a_l)} \qquad
    \alpha_i = \frac{\tau\Theta}{t_i}\sum_{l=1}^{n_i} \tilde{x}_i(a_l)
\]
\end{lemma}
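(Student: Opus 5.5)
The plan has three stages: write the KKT system of the per-player LP \eqref{eq:CVARLP}, rescale it into the stated MLCP, and then run the rescaling backwards to recover DREs.

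\textbf{Stage 1: KKT system of the LP.} For fixed $\mathbf{x}_{-i}$, the problem \eqref{eq:CVARLP} is a linear program in $(\mathbf{x}_i,z_i,\boldsymbol{\nu}_i)$. It is equivalent to the concave best-response problem \eqref{opt::cVaRMax}. Hence, by Lemma~\ref{lem:generalDRGComplementarity} (or directly by LP duality), $\mathbf{x}_i$ is a best response iff there exist multipliers $\lambda_{i,k},\mu_{i,k}\ge 0$, $\alpha_i\in\mathbb{R}$ and $\beta_{i,l}\ge 0$ satisfying the KKT conditions. Stationarity in each variable gives:
\begin{itemize}
\item in $\nu_{i,k}$: $(\gamma_c/\alpha)\mathbb{P}(\xi_k)=\lambda_{i,k}+\mu_{i,k}$, so $\mu_{i,k}=(\gamma_c/\alpha)\mathbb{P}(\xi_k)-\lambda_{i,k}\ge 0$;
\item in $z_i$: $\gamma_c=\sum_k\lambda_{i,k}$;
\item in $x_i(a_l)$: $\alpha_i-v_{i,l}(\mathbf{x}_{-i})=\beta_{i,l}\ge 0$.
\end{itemize}
Complementary slackness pairs $\mu_{i,k}$ with $-\nu_{i,k}$, $\lambda_{i,k}$ with the slack of \eqref{cVarOpt:cons1}, and $\beta_{i,l}$ with $x_i(a_l)$. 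Stacking these conditions over all players characterizes DREs. The system is not yet homogeneous, because $\mathbf{x}_{-i}$ enters multilinearly (degree $m-1$) while $\alpha_i$ and the normalization $\mathbf{1}^\top\mathbf{x}_i=1$ are fixed.

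\textbf{Stage 2: rescaling into the MLCP.} Next, take an equilibrium, set $t_i:=1/\alpha_i$ (assuming $\alpha_i>0$, which holds when payoffs are shifted to be positive), and apply the stated substitution. The key identity is that
$$u_i(a_l,\tilde{\mathbf{x}}_{-i}\mid\xi)=\prod_{j\ne i}\frac{\alpha_jt_j}{\tau\Theta}\,u_i(a_l,\mathbf{x}_{-i}\mid\xi).$$
By the definitions of $\tau$ and $\Theta$, we have $\prod_{j}(\alpha_jt_j)/(\tau\Theta)^{m-1}=1$, so this factor equals $\tau\Theta/(\alpha_it_i)$. Using this, I will check each block of the MLCP in turn:
\begin{itemize}
\item Multiplying the $\nu$-stationarity and its complementarity by $t_i$ and dividing $\nu$ by $\tau\Theta$ gives the first line.
\item For the constraint-\eqref{cVarOpt:cons1} complementarity, $\sum_l\tilde{x}_i(a_l)u_i(a_l,\tilde{\mathbf{x}}_{-i}\mid\xi_k)$ becomes $\frac{1}{\tau\Theta}\sum_l x_i(a_l)u_i(a_l,\mathbf{x}_{-i}\mid\xi_k)$, matching $\tilde z_i,\tilde\nu_{i,k}$. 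This gives \eqref{mcp:actionValue}.
\item The action inequality $\alpha_i\ge v_{i,l}$, multiplied by $t_i$ and rewritten in tilde variables, becomes $1-t_i(1-\gamma_c)\mu^{\mathbb{P}}_i(a_l,\tilde{\mathbf{x}}_{-i})-\sum_k\tilde\lambda_{i,k}u_i(\cdot)\ge 0$, up to the common positive factor $\tau\Theta/(\alpha_i t_i)$.
\end{itemize}
This step is the delicate one: I must check that the scaling factor produced by the multilinear term combines with $t_i=1/\alpha_i$ exactly to $1$. If it does not, I will adjust by exploiting the freedom in $t_i$, which is why $t_i$ is kept as a variable rather than fixed.

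\textbf{Stage 3: the $t_i$ block and the converse direction.} The $z$-stationarity $\gamma_c=\sum_k\lambda_{i,k}$ scales to $\gamma_ct_i-\sum_k\tilde\lambda_{i,k}=0$. This is what the last complementarity line records, relaxed to $\ge 0$ with $t_i>0$ forcing equality. For the converse, given an MLCP solution with every $\tilde{\mathbf{x}}_i\ne 0$, I define $\mathbf{x}_i$ by normalizing $\tilde{\mathbf{x}}_i$, and recover $\tau\Theta$ and $\alpha_i$ from the stated formulas; these are forced by $\mathbf{1}^\top\mathbf{x}_i=1$ together with the product identity above. I must then argue $t_i>0$. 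If $t_i=0$, the first line forces $\tilde\lambda_{i,k}=0$, and the third line then holds trivially while the model degenerates. I will rule this case out by noting that $t_i=0$ gives $\alpha_i$ undefined, and handle it via the nonzero-$\tilde{\mathbf{x}}$ hypothesis and the consistency of the $\Theta$ formula. Once $t_i>0$, setting $\lambda_{i,k}=\tilde\lambda_{i,k}/t_i$ and undoing the scalings reproduces the KKT system of Stage 1. Sufficiency of KKT for the LP then shows each $\mathbf{x}_i$ is a best response, so $\mathbf{x}$ is a DRE.

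I expect the main obstacle to be the exponent bookkeeping in the $(m-1)$-fold homogeneity, together with ruling out the degenerate case $t_i=0$ in the converse direction.
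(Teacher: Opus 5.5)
Your overall route is the paper's: write the KKT system of the LP~\eqref{eq:CVARLP}, rescale it homogeneously into the MLCP, and recover a DRE by normalizing $\tilde{\mathbf x}_i$. However, the one step in the converse that actually needs an argument is handled incorrectly.

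\textbf{The case $t_i=0$.} You correctly get $\tilde\lambda_{i,k}=0$, from the first line together with $\tilde\lambda_{i,k}\ge 0$ (the paper uses the fourth line instead). But the third line does not ``hold trivially.'' It becomes $0\le 1\perp \tilde x_i(a_l)\ge 0$, which forces $\tilde x_i(a_l)=0$ for every $l$. That contradicts $\tilde{\mathbf x}_i\neq 0$. This observation is exactly how the paper gets $t_i>0$, and hence $\sum_k\lambda_{i,k}=\gamma_c$. Your proposed substitute rules out nothing: that $t_i=0$ leaves $\alpha_i$ undefined, or is inconsistent with the $\Theta$ formula, are properties of the recovery map, not constraints of the MLCP. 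An MLCP solution with $t_i=0$ and $\tilde{\mathbf x}_i\neq 0$ would simply be a counterexample to the lemma unless the complementarity system itself excludes it. As written, the MLCP-to-DRE direction (the only direction the lemma claims) is missing its key step.

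\textbf{The scaling identity.} Your key identity is off by a factor $\tau\Theta$. Since $(\tau\Theta)^{m-1}=\prod_j\alpha_j t_j$, one has $\prod_{j\neq i}\frac{\alpha_j t_j}{\tau\Theta}=\frac{1}{\alpha_i t_i}$, i.e.\ $u_i(a_l,\tilde{\mathbf x}_{-i}\mid\xi)=u_i(a_l,\mathbf x_{-i}\mid\xi)/(\alpha_i t_i)$.
\begin{itemize}
\item Your second bullet's conclusion matches this correct factor, not the one you wrote. With your factor, the payoff term of the line-2 slack would not scale, while $\tilde z_i$ and $\tilde\nu_{i,k}$ scale by $1/(\tau\Theta)$.
\item With the correct factor, $v_{i,l}(\mathbf x_{-i})=\alpha_i\,(t_i(1-\gamma_c)\mu^{\mathbb P}_i(a_l,\tilde{\mathbf x}_{-i})+\sum_k\tilde\lambda_{i,k}u_i(a_l,\tilde{\mathbf x}_{-i}\mid\xi_k))$. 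So $\alpha_i-v_{i,l}(\mathbf x_{-i})$ is exactly $\alpha_i$ times the line-3 expression, for every $t_i>0$.
\end{itemize}
So your ``delicate step'' is not delicate. $t_i$ is not there to absorb a scaling factor; its only role is to homogenize the equality $\sum_k\lambda_{i,k}=\gamma_c$ through the fourth line. In your forward direction, any $t_i>0$ works, including $1/\alpha_i$. In the converse, what lets line-3 complementarity transfer back to the MCP is positivity of the recovered $\alpha_i=\frac{\tau\Theta}{t_i}\sum_l\tilde x_i(a_l)$. That again requires $t_i>0$, so everything hinges on the argument above.
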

Even for two players, the MLCP remains bilinear and is therefore not a linear complementarity program of the Lemke-Howson type; the appendix establishes the same for MSD and MD games. Overall, the MLCPs can solve games with moderate action spaces and $K$. The next two subsections examine scalability.

\subsection{Scalability for CUMGs via Small Support}
\label{subsec:scalable-action-cumg}

Our first result here establishes small supports for players' mixed strategies and constructs sparse empirical distributions over the $K$ samples that are used only to control deviations; these empirical distributions need not belong to the risk envelope $U$. Throughout this subsection, identify any distribution over the sample support $\Xi=\{\xi_1,\ldots,\xi_K\}$ with a vector in the simplex $\Delta_K$. For $\mathbf q_i\in \Delta_K$, define
$
P_i(\mathbf{x}_i,\mathbf{x}_{-i},\mathbf q_i)
:=
\sum_{\mathbf a\in\mathbf A}\sum_{k=1}^K
q_{i,k}u_i(\mathbf a\mid \xi_k)
\prod_{h=1}^m x_h(a_h).
$
Thus, $P_i(\mathbf{x}_i,\mathbf{x}_{-i},\mathbf q_i)$ is player $i$'s expected payoff when the distribution over data samples is $q_i$. The robust payoff is
$
\rho_i(\mathbf{x}_i,\mathbf{x}_{-i}) :=
\min_{\mathbf q_i\in U} P_i(\mathbf{x}_i,\mathbf{x}_{-i},q_i),
$
where $U\subseteq \Delta_K$ is the ambiguity set, as defined in Theorem~\ref{thm:rs06}. We call a strategy $\bar{\mathbf{x}}_i$ $\kappa$-uniform if it is the empirical distribution of $\kappa$ independent pure-action samples from $\mathbf A_i$. Similarly, a vector $\bar{\mathbf q}_i\in \Delta_K$ is called $\tau$-uniform if it is the empirical distribution of $\tau$ independent samples from $\{1,\ldots,K\}$. We present an extended game formulation that provides insights about CUMGs, and also enables the small support proof. 

\noindent \textbf{Extended game}:
Given a CUMG, define the extended $2m$-player game as follows. For each original player $i$, introduce an adversarial player $j_i$. Original player $i$ chooses $\mathbf{x}_i\in \mathbf X_i$ and receives payoff
$
P_i(\mathbf{x}_i,\mathbf{x}_{-i},\mathbf q_{j_i}).
$
Adversarial player $j_i$ chooses $\mathbf q_{j_i}\in U$ and receives payoff
$
-P_i(\mathbf{x}_i,\mathbf{x}_{-i},\mathbf q_{j_i}).
$
Thus, the adversarial player $j_i$ minimizes player $i$'s payoff over the set $U$. This extended game can be viewed as a graphical game in its payoff-dependence structure: adversary $j_i$ affects only player $i$'s payoff, and adversaries do not directly interact with one another. However, it is not a standard finite graphical game, since the strategy spaces are continuous: original players choose mixed strategies and each adversary chooses from the set $U$, which need not be the full probability simplex $\Delta_K$. The following result relates the original and extended game.

\begin{lemma}[Extended-game characterization of DRE]
\label{lem:lifted_game_dre}
Consider a data-driven CUMG in which, for every player $i$, the ambiguity set $U\subseteq \Delta_K$ is nonempty, compact, and convex. Then $\mathbf{x}^*$ is a DRE of the original CUMG if and only if there exists $\mathbf q^*=(\mathbf q_1^*,\ldots,\mathbf q_m^*)$, with $\mathbf q_i^*\in U$ for every $i$, such that $(\mathbf{x}^*,\mathbf q^*)$ is a Nash equilibrium of the extended $2m$-player game.
\end{lemma}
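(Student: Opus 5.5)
The plan is to prove the two directions separately, using two facts. First, $P_i(\mathbf{x}_i,\mathbf{x}_{-i},\mathbf q)$ is affine (hence continuous, concave and convex) in $\mathbf{x}_i$ for fixed $(\mathbf{x}_{-i},\mathbf q)$, and linear in $\mathbf q$ for fixed $\mathbf{x}$. Second, $\mathbf X_i$ and $U$ are nonempty, compact and convex. The key tool is a minimax (saddle-point) theorem for player $i$'s auxiliary zero-sum problem with opponents frozen at $\mathbf{x}^*_{-i}$:
\[
\max_{\mathbf{x}_i\in\mathbf X_i}\min_{\mathbf q\in U} P_i(\mathbf{x}_i,\mathbf{x}^*_{-i},\mathbf q).
\]
Since $P_i$ is bilinear in $(\mathbf{x}_i,\mathbf q)$ for fixed $\mathbf{x}^*_{-i}$ and both sets are compact and convex, von Neumann's/Sion's minimax theorem applies. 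So this problem has a saddle point, the max-min equals the min-max, and the set of saddle points is exactly (max-min optimal $\mathbf{x}_i$) $\times$ (min-max optimal $\mathbf q$).

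($\Leftarrow$) Suppose $(\mathbf{x}^*,\mathbf q^*)$ is a Nash equilibrium of the extended game. Adversary $j_i$'s best-response condition says $\mathbf q_i^*\in\argmin_{\mathbf q\in U}P_i(\mathbf{x}^*,\mathbf q)$, so $P_i(\mathbf{x}^*,\mathbf q_i^*)=\rho_i(\mathbf{x}^*)$. Player $i$'s condition says $\mathbf{x}_i^*$ maximizes $P_i(\cdot,\mathbf{x}^*_{-i},\mathbf q_i^*)$ over $\mathbf X_i$. Hence $(\mathbf{x}_i^*,\mathbf q_i^*)$ is a saddle point of the auxiliary problem, and for every $\mathbf{x}_i$,
\[
\rho_i(\mathbf{x}_i,\mathbf{x}^*_{-i})=\min_{\mathbf q\in U}P_i(\mathbf{x}_i,\mathbf{x}^*_{-i},\mathbf q)\le P_i(\mathbf{x}_i,\mathbf{x}^*_{-i},\mathbf q_i^*)\le P_i(\mathbf{x}^*,\mathbf q_i^*)=\rho_i(\mathbf{x}^*).
\]
So $\mathbf{x}_i^*$ is a best response in the sense of Equation~\eqref{def::drEquilibrium} for every $i$, and $\mathbf{x}^*$ is a DRE. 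This direction does not need the minimax theorem.

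($\Rightarrow$) Suppose $\mathbf{x}^*$ is a DRE, so for each $i$, $\mathbf{x}_i^*$ is max-min optimal in the auxiliary problem. By the minimax theorem there is a min-max optimal $\mathbf q_i^*\in U$, chosen independently for each $i$, and $(\mathbf{x}_i^*,\mathbf q_i^*)$ is a saddle point. Verify it via
\[
P_i(\mathbf{x}^*,\mathbf q_i^*)\ge \min_{\mathbf q} P_i(\mathbf{x}^*,\mathbf q)=\text{val}_i=\max_{\mathbf{x}_i} P_i(\mathbf{x}_i,\mathbf{x}^*_{-i},\mathbf q_i^*)\ge P_i(\mathbf{x}^*,\mathbf q_i^*),
\]
so all inequalities are equalities. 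Equality in the first inequality says $\mathbf q_i^*$ is a best response for adversary $j_i$ to $\mathbf{x}^*$. Equality in the last inequality says $\mathbf{x}_i^*$ is a best response for player $i$ to $(\mathbf{x}^*_{-i},\mathbf q_i^*)$. Player $i$'s payoff depends only on $\mathbf q_{j_i}=\mathbf q_i^*$ and $\mathbf{x}_{-i}^*$, not on the other adversaries, and adversary $j_i$'s payoff depends only on $\mathbf{x}^*$. Hence these decoupled per-player conditions together give a Nash equilibrium $(\mathbf{x}^*,\mathbf q^*)$ of the extended game.

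The main obstacle is the ($\Rightarrow$) step: a worst-case $\mathbf q$ for $\mathbf{x}^*$ need not, a priori, make $\mathbf{x}_i^*$ a best response against it. Choosing $\mathbf q_i^*$ specifically as a min-max optimal strategy, rather than an arbitrary element of $\argmin_{\mathbf q\in U} P_i(\mathbf{x}^*,\mathbf q)$, is what makes the argument work. This requires checking the hypotheses of the minimax theorem (compactness and convexity of $U$ are exactly the assumptions in the lemma), and noting that the minimum over compact $U$ is attained, which justifies writing $\min$ in place of $\inf$ in $\rho_i$.
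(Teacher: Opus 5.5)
Your proposal is correct and follows the paper's proof essentially step for step. For ($\Leftarrow$) you use the same sandwich $\rho_i(\mathbf{x}_i,\mathbf{x}^*_{-i}) \le P_i(\mathbf{x}_i,\mathbf{x}^*_{-i},\mathbf q_i^*) \le P_i(\mathbf{x}^*,\mathbf q_i^*) = \rho_i(\mathbf{x}^*)$, and for ($\Rightarrow$) you use Sion's minimax theorem with $\mathbf q_i^*$ chosen as a min-max optimizer (not an arbitrary worst-case distribution), closing the chain of inequalities into equalities exactly as the paper does.
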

Next, the main small-support theorem below is proved utilizing the above characterization. 
\begin{theorem}[Small-support DRE with sparse adversarial witnesses]
\label{thm:small_support_lifted_drg}
Consider a data-driven CUMG in which, for every player $i$, $U\subseteq\Delta_K$ is nonempty, compact, and convex, and suppose that
$
0\le u_i(\mathbf a\mid \xi_k)\le 1
\;
\text{for every } i,\mathbf a,k.
$
Let $\mathbf{x}^*$ be a DRE. Fix $\epsilon>0$. If $\kappa,\tau\in\mathbb N$ satisfy
\[
\frac{\kappa\tau}{m\tau+\kappa}
>
\frac{9}{2\epsilon^2}
\log\Big(
2\Big(mK+2\sum_{i=1}^m n_i\Big)
\Big),
\]
then there exists a profile $\bar{\mathbf{x}}$ such that every $\bar{\mathbf{x}}_i$ is $\kappa$-uniform and $\bar{\mathbf{x}}$ is an $\epsilon$-DRE of the original CUMG. Moreover, the proof constructs, for each player $i$, a $\tau$-uniform vector $\bar{\mathbf q}_i\in\Delta_K$ that sparsely approximates an extended-game worst-case distribution $\mathbf q_i^*$; this $\bar{\mathbf q}_i$ need not belong to $U$.
\end{theorem}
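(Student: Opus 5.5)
The plan is to adapt the Lipton--Markakis--Mehta sampling argument to the extended $2m$-player game of Lemma~\ref{lem:lifted_game_dre}. Since $\mathbf{x}^*$ is a DRE, that lemma gives $\mathbf q^*=(\mathbf q_1^*,\ldots,\mathbf q_m^*)$ with $\mathbf q_i^*\in U$ such that $(\mathbf{x}^*,\mathbf q^*)$ is a Nash equilibrium of the extended game. Hence for every $i$:
\begin{itemize}
\item $\rho_i(\mathbf{x}^*)=P_i(\mathbf{x}_i^*,\mathbf{x}_{-i}^*,\mathbf q_i^*)\ge P_i(a_l,\mathbf{x}_{-i}^*,\mathbf q_i^*)$ for every pure action $a_l$;
\item $P_i(\mathbf{x}_i^*,\mathbf{x}_{-i}^*,\mathbf q_i^*)\le P_i(\mathbf{x}_i^*,\mathbf{x}_{-i}^*,e_k)$ for every sample $k$.
\end{itemize}
The second inequality holds in particular because $\mathbf q_i^*$ attains the minimum over $U$.

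We then draw $\bar{\mathbf{x}}_i$ as the empirical distribution of $\kappa$ i.i.d.\ samples from $\mathbf{x}_i^*$, and $\bar{\mathbf q}_i$ as the empirical distribution of $\tau$ i.i.d.\ samples from $\mathbf q_i^*$, all independently. Every quantity we need is multilinear in these independent empirical measures with entries in $[0,1]$. Examples are $P_i(a_l,\bar{\mathbf{x}}_{-i},\bar{\mathbf q}_i)$, $P_i(\bar{\mathbf{x}},\bar{\mathbf q}_i)$ and $\mu_i(\bar{\mathbf{x}}\mid\xi_k):=P_i(\bar{\mathbf{x}},e_k)$. Each is therefore an average of a bounded function over a product of i.i.d.\ samples, which is a generalized U-statistic type object.

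The key concentration step is a Hoeffding-type bound for such multilinear forms. I would prove it by exposing the samples one at a time, using either a martingale or McDiarmid bounded differences. Changing one of player $h$'s $\kappa$ samples moves the quantity by at most $1/\kappa$, and changing one of the $\tau$ samples moves it by at most $1/\tau$. At most $m$ players' samples are involved ($m-1$ opponents, possibly player $i$ itself) together with $\tau$ adversary samples, so the sum of squared differences is at most $m\kappa\cdot\kappa^{-2}+\tau\cdot\tau^{-2}\le (m\tau+\kappa)/(\kappa\tau)$. McDiarmid then gives a deviation larger than $\epsilon/3$ with probability at most $2\exp(-2(\epsilon/3)^2\kappa\tau/(m\tau+\kappa))$.

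We then take a union bound over the following events:
\begin{itemize}
\item for each $i$ and $l$, that $P_i(a_l,\bar{\mathbf{x}}_{-i},\bar{\mathbf q}_i)$ is within $\epsilon/3$ of $P_i(a_l,\mathbf{x}^*_{-i},\mathbf q_i^*)$, giving $\sum_i n_i$ events;
\item for each $i$, that $P_i(\bar{\mathbf{x}},\bar{\mathbf q}_i)$ is within $\epsilon/3$ of $P_i(\mathbf{x}^*,\mathbf q_i^*)$, together with per-action and per-sample events needed for it, giving another $\sum_i n_i$ events;
\item for each $i$ and $k$, that $P_i(\bar{\mathbf{x}},e_k)$ is within $\epsilon/3$ of $P_i(\mathbf{x}^*,e_k)$, giving $mK$ events.
\end{itemize}
This gives $mK+2\sum_i n_i$ events. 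The hypothesis $\frac{\kappa\tau}{m\tau+\kappa}>\frac{9}{2\epsilon^2}\log(2(mK+2\sum_i n_i))$ makes the total failure probability below $1$, so a good realization exists.

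On the good event we verify the $\epsilon$-DRE property, and this is the step I expect to be the main obstacle. The issue is that $\bar{\mathbf q}_i\notin U$ in general, so it cannot stand in for the inner minimum. The lower bound on $\rho_i(\bar{\mathbf{x}})=\min_{\mathbf q\in U}P_i(\bar{\mathbf{x}},\mathbf q)$ must instead be obtained by controlling $P_i(\bar{\mathbf{x}},e_k)$ for all $k$ simultaneously, which is why the $mK$ term appears. Since $P_i$ is linear in $\mathbf q$, we get $|P_i(\bar{\mathbf{x}},\mathbf q)-P_i(\mathbf{x}^*,\mathbf q)|\le\epsilon/3$ uniformly over $\mathbf q\in\Delta_K$, and hence $\rho_i(\bar{\mathbf{x}})\ge\rho_i(\mathbf{x}^*)-\epsilon/3$.

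For deviations, take any $\mathbf{x}_i$ (WLOG pure $a_l$ by linearity in $\mathbf q$, using $\min_U$ of a linear function). We use $\rho_i(a_l,\bar{\mathbf{x}}_{-i})\le P_i(a_l,\bar{\mathbf{x}}_{-i},\mathbf q)$ for a suitable $\mathbf q\in U$. Choosing $\mathbf q=\mathbf q_i^*$, we need $P_i(a_l,\bar{\mathbf{x}}_{-i},\mathbf q_i^*)$ to be close to $P_i(a_l,\mathbf{x}^*_{-i},\mathbf q_i^*)\le\rho_i(\mathbf{x}^*)$. The sparse witness $\bar{\mathbf q}_i$ serves as the intermediary linking these through the triangle inequality, costing at most $2\epsilon/3$. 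Chaining the two directions gives $\rho_i(a_l,\bar{\mathbf{x}}_{-i})\le\rho_i(\bar{\mathbf{x}})+\epsilon$.

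The delicate bookkeeping is to make the three $\epsilon/3$ pieces and the event count match exactly $mK+2\sum_i n_i$. If the accounting fails, I would fall back on bounding the mixed-deviation maximum via pure actions under a fixed $\mathbf q$. This works because $\max_{\mathbf{x}_i}\min_{\mathbf q}\le\min_{\mathbf q}\max_{l}$.
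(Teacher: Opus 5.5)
Your architecture matches the paper's: the extended-game equilibrium $(\mathbf x^*,\mathbf q^*)$, independent sparse samples $\bar{\mathbf x}_i,\bar{\mathbf q}_i$, McDiarmid, and a three-way $\epsilon/3$ split with $\bar{\mathbf q}_i$ as intermediary. However, the events you union-bound over do not control the first link of your triangle inequality.

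That link needs, for every $i$ and $l$, $|P_i(a_l,\bar{\mathbf x}_{-i},\mathbf q_i^*)-P_i(a_l,\bar{\mathbf x}_{-i},\bar{\mathbf q}_i)|\le\epsilon/3$; this is the paper's $E_i$. Your second family compares $P_i(\bar{\mathbf x},\bar{\mathbf q}_i)$ with $P_i(\mathbf x^*,\mathbf q_i^*)$, and that quantity never enters the verification. Neither of your other families implies the needed bound:
\begin{itemize}
\item your first family sees $(a_l,\bar{\mathbf x}_{-i})$ only paired with $\bar{\mathbf q}_i$;
\item your per-sample family sees the full profile $\bar{\mathbf x}$, not $(a_l,\bar{\mathbf x}_{-i})$.
\end{itemize}
This event also falls outside your concentration template (a multilinear statistic versus its mean), because the comparison point $P_i(a_l,\bar{\mathbf x}_{-i},\mathbf q_i^*)$ is itself random.

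The paper handles it by conditioning on $\bar{\mathbf x}$. Then $\mathbb E[P_i(a_l,\bar{\mathbf x}_{-i},\bar{\mathbf q}_i)\mid\bar{\mathbf x}]=P_i(a_l,\bar{\mathbf x}_{-i},\mathbf q_i^*)$. McDiarmid over the $\tau$ index draws alone, with a union over $l$, gives $2n_i e^{-2\tau t^2}$. Averaging over $\bar{\mathbf x}$ removes the conditioning, and $\Lambda_{\kappa,\tau}\le\tau$ puts the bound in the common form. With this family replacing your second one, the count is exactly $mK+2\sum_i n_i$, and the chain $D_i+D_i'+E_i\le\epsilon$ closes as you sketch.

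Separately, your reduction ``WLOG the deviation is a pure $a_l$, by linearity in $\mathbf q$'' is false. The function $\rho_i(\cdot,\bar{\mathbf x}_{-i})=\min_{\mathbf q\in U}P_i(\cdot,\bar{\mathbf x}_{-i},\mathbf q)$ is concave, not linear, in the deviation. Example~\ref{ex:coordinationGame} exhibits a mixed deviation strictly beating both pure actions.

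What your fallback gestures at must be the main line, and with the specific witness $\mathbf q_i^*$: for every $\mathbf y_i\in\mathbf X_i$, $\rho_i(\mathbf y_i,\bar{\mathbf x}_{-i})\le P_i(\mathbf y_i,\bar{\mathbf x}_{-i},\mathbf q_i^*)\le\max_l P_i(a_l,\bar{\mathbf x}_{-i},\mathbf q_i^*)$. Only after fixing $\mathbf q=\mathbf q_i^*$ is the payoff linear in $\mathbf y_i$.

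Once you have this display, the detour through $\bar{\mathbf q}_i$ is not needed for the $\epsilon$-DRE conclusion itself. Since $\mathbf q_i^*$ is deterministic, $P_i(a_l,\bar{\mathbf x}_{-i},\mathbf q_i^*)$ concentrates directly around $P_i(a_l,\mathbf x^*_{-i},\mathbf q_i^*)$ with bounded-difference sum $(m-1)/\kappa$. The paper keeps $\bar{\mathbf q}_i$ because the bound $D_i+D_i'\le 2\epsilon/3$ is what certifies the screening step in Lemma~\ref{lem:correctness}.
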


The proof first uses the extended game to represent a DRE as a Nash equilibrium in the extended game; this gives the reference pair $(\mathbf {x} ^*,\mathbf {q} ^*)$ from which sparse action profiles $\bar{\mathbf x}$ and sparse $\bar{\mathbf q}_i$ are sampled. The sparse vector $\bar{\mathbf q}_i$ is not required to lie in $U$ because it is never treated as a feasible adversarial response in the extended game; it is used only to upper-bound the payoff of deviations.

\textbf{Balanced choice of $(\kappa,\tau)$}:
The small-support theorem allows any choice of the two support parameters, $\kappa$ and $\tau$, to satisfy the bound. For support enumeration, one may choose the pair $(\kappa,\tau)$ that minimizes the combined enumeration cost over action supports and data-sample witnesses. In Algorithm~\ref{alg:sparse_witness_support_search} below, in the worst case one may need to iterate over all support sizes up to $\kappa$ and $\tau$. As a rough approximation, the number of action-support choices scales like $\prod_i n_i^\kappa$, while the number of data sample support choices scales like $K^{m\tau}$. Thus the log enumeration cost is approximately $A\kappa+B\tau$, where $A=\sum_i\log n_i$ and $B=m\log K$. Writing $C_\epsilon=\frac{9}{2\epsilon^2}\log(2(mK+2\sum_i n_i))$, the theorem condition is equivalent to $m/\kappa+1/\tau\leq 1/C_\epsilon$. Balancing this constraint against the approximate enumeration cost gives the heuristic choices
$
\kappa\approx C_\epsilon\Big(m+\sqrt{\frac{mB}{A}}\Big),\; \tau\approx C_\epsilon\Big(1+\sqrt{\frac{mA}{B}}\Big).
$
This shows the intended tradeoff: if action supports are more expensive to enumerate, one should use smaller $\kappa$ and larger $\tau$; if data sample supports are more expensive to enumerate, one should use larger $\kappa$ and smaller $\tau$.


The small-support result suggests a support-enumeration procedure over action supports, with an optional screening step using sparse data-sample supports. For a candidate action support profile $S=(S_1,\ldots,S_m)$ and data sample support profile $T=(T_1,\ldots,T_m)$, the screening searches over sparse $x_i\in\Delta(\mathbf A_i), \operatorname{supp}(\mathbf x_i)\subseteq S_i$ and $q_i\in\Delta_K,\operatorname{supp}(\mathbf q_i)\subseteq T_i$. Define the \emph{restricted-profile gap} by
\begin{align*}
\widehat\eta_{S,T}:=\min_{\mathbf x,\mathbf q,\eta};\eta\quad\text{s.t.}\quad &\mathbf x_i\in\Delta(\mathbf A_i), \operatorname{supp}(\mathbf x_i)\subseteq S_i, \mathbf q_i\in\Delta_K,\operatorname{supp}(\mathbf q_i)\subseteq T_i, \forall i \\
& P_i(a_l,\mathbf x_{-i},\mathbf q_i)\leq \rho_i(\mathbf x)+\eta,\ \forall i, \; \forall a_l \in \mathbf A_i.
\end{align*}
The restricted-profile gap asks whether the action supports $S$ and data-sample supports $T$ contain some sparse profile $\mathbf x$ and sparse $\mathbf q$ for which all pure deviations look nearly unprofitable relative to the exact robust value $\rho_i(\mathbf x)$. This is a valid screening step because, under the condition of Theorem~\ref{thm:small_support_lifted_drg}, the proof guarantees that at least one enumerated support pair $(S^\circ, T^\circ)$ has small restricted-profile gap of at most $2\epsilon/3$, and this same support $S^\circ$ also supports a profile $\mathbf x$ that is an $\epsilon$-DRE. 
In the screening problem, fixing $\mathbf q_i$ makes the deviation side linear and therefore cheap to check by pure actions; it still needs $\rho_i(\mathbf x)$, but it avoids the expensive $\eta_S^\star$ calculation stated below. Passing this screening is not an equilibrium certificate, since the sparse $\mathbf q_i$ need not belong to $U$ and best responses under $\rho_i$ may be mixed (as demonstrated in Example~\ref{ex:coordinationGame}). Thus, screening only reduces the number of candidates: every successfully screened support is checked using the full-game regret; first, the full-game regret of a profile $\mathbf x$ is
$
\eta_K(\mathbf x):= \max_i \max_{\mathbf y_i\in\mathbf X_i}\rho_i(\mathbf y_i,\mathbf x_{-i})-\rho_i(\mathbf x).
$ Then,
$\eta_S^\star:=\min{\eta_K(\mathbf x):x_i\in\Delta(\mathbf A_i), \operatorname{supp}(\mathbf x_i)\subseteq S_i,\ \forall i}$ is the actual $\epsilon$-DRE certificate.

\begin{algorithm}[t]
\caption{Small-support search with sparse-witness screening}
\label{alg:sparse_witness_support_search}
\DontPrintSemicolon
\KwIn{Tolerance $\epsilon$, screening tolerance $\epsilon_{\mathrm{scr}} = 2\epsilon/3$, support sizes $(\kappa,\tau)$, and rules for generating $S_i\subseteq\mathbf A_i$ and $T_i\subseteq \{1, \ldots, K\}$ with $|S_i|\leq\kappa$ and $|T_i|\leq\tau$.}
\For{each candidate support pair $(S,T)$}{
Solve the restricted-profile gap problem defining $\widehat\eta_{S,T}$.\;
\If{$\widehat\eta_{S,T}\leq\epsilon_{\mathrm{scr}}$}{
Solve full-game regret $\eta_S^\star:=\min{\eta_K(\mathbf x):x_i\in\Delta(\mathbf A_i), \operatorname{supp}(\mathbf x_i)\subseteq S_i,\ \forall i}$, to get $\mathbf x_S$.\;
\lIf{$\eta_S^\star\leq\epsilon$}{
\Return $\mathbf x_S$.
}
}
}
\end{algorithm}
\begin{lemma}[Correctness and completeness]
\label{lem:correctness}
Assume that Algorithm~\ref{alg:sparse_witness_support_search} enumerates all support profiles with $|S_i|\leq\kappa$ and $|T_i|\leq\tau$, and that all optimization problems are solved globally. If $(\kappa,\tau)$ satisfy the small-support bound of Theorem~\ref{thm:small_support_lifted_drg} and $\epsilon_{\mathrm{scr}}\geq 2\epsilon/3$, then the algorithm returns an $\epsilon$-DRE.
\end{lemma}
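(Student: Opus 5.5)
The argument splits into a soundness part and a completeness part.

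\emph{Soundness.} The algorithm returns $\mathbf x_S$ only when $\eta_S^\star\le\epsilon$. By definition of the full-game regret $\eta_K$, this means that for every $i$,
\[
\max_{\mathbf y_i\in\mathbf X_i}\rho_i(\mathbf y_i,\mathbf x_{S,-i})-\rho_i(\mathbf x_S)\le\epsilon .
\]
This is exactly the $\epsilon$-approximate DRE condition. The screening step plays no role here; it only filters candidates.

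\emph{Completeness.} We must show that the loop does reach some $(S,T)$ that passes both tests.

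Since $U$ is nonempty, compact and convex, Theorem~\ref{thm::dreSMExist} gives a DRE $\mathbf x^*$. Lemma~\ref{lem:lifted_game_dre} then gives worst-case witnesses $\mathbf q^*$ with $\mathbf q_i^*\in U$, such that $(\mathbf x^*,\mathbf q^*)$ is a Nash equilibrium of the extended game.

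We now rerun the sampling construction behind Theorem~\ref{thm:small_support_lifted_drg}. Draw $\kappa$ i.i.d.\ pure actions from each $\mathbf x_i^*$ and $\tau$ i.i.d.\ indices from each $\mathbf q_i^*$; these give $\bar{\mathbf x}$ and $\bar{\mathbf q}$. Hoeffding bounds, together with a union bound over the $mK+2\sum_i n_i$ events, show that with positive probability all of the following hold simultaneously:
\begin{itemize}
\item $\rho_i(\bar{\mathbf x})\ge\rho_i(\mathbf x^*)-\epsilon/3$. This uses the $mK$ per-sample payoffs $u_i(\mathbf x\mid\xi_k)$, and the fact that a minimum over $U$ of uniformly close linear functions is close.
\item $P_i(a_l,\bar{\mathbf x}_{-i},\bar{\mathbf q}_i)\le P_i(a_l,\mathbf x^*_{-i},\mathbf q_i^*)+\epsilon/3$ for every $i$ and every $a_l\in\mathbf A_i$.
\item The equilibrium property of $\mathbf q_i^*$ gives $P_i(a_l,\mathbf x^*_{-i},\mathbf q_i^*)\le\rho_i(\mathbf x^*)$. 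This is the best-response property of $\mathbf x_i^*$ against the fixed adversary, and the extended game is linear in $\mathbf x_i$.
\end{itemize}

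Chaining these three facts gives
\[
P_i(a_l,\bar{\mathbf x}_{-i},\bar{\mathbf q}_i)\le\rho_i(\bar{\mathbf x})+2\epsilon/3 .
\]
Now let $S^\circ_i=\operatorname{supp}(\bar{\mathbf x}_i)$ and $T^\circ_i=\operatorname{supp}(\bar{\mathbf q}_i)$, which satisfy $|S^\circ_i|\le\kappa$ and $|T^\circ_i|\le\tau$. The pair $(\bar{\mathbf x},\bar{\mathbf q})$ is feasible for the restricted-profile gap problem with $\eta=2\epsilon/3$, so
\[
\widehat\eta_{S^\circ,T^\circ}\le 2\epsilon/3\le\epsilon_{\mathrm{scr}} .
\]
Hence this pair passes screening. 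This is exactly the guarantee the paper attributes to the proof of Theorem~\ref{thm:small_support_lifted_drg}, which I would cite directly.

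For the same $S^\circ$, Theorem~\ref{thm:small_support_lifted_drg} states that $\bar{\mathbf x}$ (the same sampled realization) is an $\epsilon$-DRE. So $\eta_{S^\circ}^\star\le\eta_K(\bar{\mathbf x})\le\epsilon$, and the globally solved full-regret problem returns a profile passing the final test.

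Since all pairs are enumerated, the loop either reaches $(S^\circ,T^\circ)$ and returns there, or returns earlier at another pair that also passed the final certificate. In either case soundness applies, and the output is an $\epsilon$-DRE.

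\emph{Main obstacle.} The delicate point is that the screening bound and the $\epsilon$-DRE property must hold for the \emph{same} realization $(\bar{\mathbf x},\bar{\mathbf q})$. That is why the union bound must be taken jointly over all $mK+2\sum_i n_i$ events in a single probability space, matching the logarithm in the theorem's condition on $(\kappa,\tau)$.

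If the theorem's proof only certified the two properties separately, I would redo the concentration argument jointly. For the $\epsilon$-DRE part, the key step is bounding deviations: for any $\mathbf y_i$,
\[
\rho_i(\mathbf y_i,\bar{\mathbf x}_{-i})\le P_i(\mathbf y_i,\bar{\mathbf x}_{-i},\mathbf q_i^*),
\]
because $\mathbf q_i^*\in U$ and $\rho_i$ is a minimum over $U$. The right-hand side is linear in $\mathbf y_i$, so it is maximized at a pure action and is controlled by the concentration events above.
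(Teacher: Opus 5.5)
Your proposal is correct and follows essentially the paper's route: both work on the single good sampling event from the proof of Theorem~\ref{thm:small_support_lifted_drg}, use the $D_i$ and $D_i'$ bounds to show that $(\bar{\mathbf x},\bar{\mathbf q})$ is feasible for screening with gap at most $2\epsilon/3$, use the same realization's $\epsilon$-DRE property to get $\eta_{S^\circ}^\star\le\epsilon$, and finish by exhaustive enumeration plus soundness. One point in your fallback: bounding $P_i(a_l,\bar{\mathbf x}_{-i},\mathbf q_i^*)$ is not covered by your three bullets, and it needs one more family of $\sum_i n_i$ events (the paper's $E_i$, comparing $\mathbf q_i^*$ with $\bar{\mathbf q}_i$ against $\bar{\mathbf x}_{-i}$), which your count $mK+2\sum_i n_i$ already includes.
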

As with the complementarity formulations, the fixed-support restricted-profile gap problems for MSD and CVaR games are nonconvex multilinear programs; see Appendix~\ref{subsec:restrictedgap}. Nonlinearity enters through multilinear payoff terms $u_i(\mathbf x\mid \xi_k)$ and $u_i(a_l,\mathbf x_{-i}\mid \xi_k)$, as well as the sparse deviation payoff $P_i$ coupling $\mathbf q_i$ and $\mathbf x_{-i}$. Thus, unlike standard two-player support enumeration, where fixed supports yield linear feasibility, CUMGs generally remain bilinear even with two players.

\subsection{First-order Methods for CUMGs}
\label{subsec:scalable-samples-cumg}
First-order methods often provide a more scalable practical route for equilibrium computation, if applicable, compared to approaches such as support enumeration.
We use a stochastic first-order error minimization
approach, closest in spirit
to~\citet{gemp2024approximating} for standard normal form games, but the bottlenecks are different in our setting. In a standard normal-form game,
the entropy-regularized first-order condition equalizes the pure-action
deviation payoff vector $g_{i,l}(\mathbf x)=u_i(a_l,\mathbf x_{-i})$
up to an additive constant at an interior equilibrium. Thus,
\citet{gemp2024approximating} minimize the squared norm of the centered
first-order error, which vanishes at such an equilibrium, and estimate
its gradient by sampling action profiles and unilateral deviations rather
than enumerating the payoff tensor. The nonconvex guarantee is convergence
to stationarity, not necessarily to zero error, so their implementation
uses stochastic gradient descent (SGD) heuristics, such as random restarts, to target zero first-order error.

\noindent \textbf{Action sampling results in a biased estimator for gradients}: For CUMGs, we start with a negative result showing that the action sampling argument of~\citet{gemp2024approximating}
fails. The gradient of risk-adjusted payoff is not simply $(\rho_i(a_l,\mathbf x_{-i}))_l$, and kinks create
non-differentiability. To simplify the presentation and not deal with Clarke's generalized Jacobians, we first smooth the kinks, for which we will later introduce approximation bounds. Towards this, we define an augmented representation. For several coherent utility measures, the utility admits an augmented
representation. 
We use an auxiliary variable
$\boldsymbol{\theta}_i\in\Theta_i$, for $\Theta_i$ convex and possibly unbounded, and suppose
$
\rho_i(\mathbf{x})
=
\sup_{\boldsymbol{\theta}_i\in\Theta_i}
\rho_i^{\mathrm{aux}}(\mathbf{x},\boldsymbol{\theta}_i),
\quad
\rho_i^{\mathrm{aux}}(\mathbf{x},\boldsymbol{\theta}_i) = b_i(\mathbf{x},\boldsymbol{\theta}_i) +
\frac1K\sum_{k=1}^K
\psi_{i,k}(\mathbf{x},\boldsymbol{\theta}_i)
$; the regret certificate below uses a compact convex set $\bar\Theta_i\subseteq\Theta_i$ on which the same supremum is attained.
Here $b_i$ is a sample-summary term whose evaluation does not scale with $K$ (after a possible one time pre-computation), while the sample average contains the $K$ sample-indexed non-differentiable terms $\psi_{i,k}$. For MSD, $\boldsymbol{\theta}_i$ is empty; for CVaR, $\boldsymbol{\theta}_i$ contains the VaR variable $z_i$. 
To handle nonsmooth risk measures such as MSD and CVaR, let
$\psi_{i,k}^{\tau}$ be a smooth approximation of $\psi_{i,k}$ dependent on a parameter $\tau > 0$ and define
$
\rho_{i}^{\mathrm{aux},\tau}(\mathbf{x},\boldsymbol{\theta}_i) = b_i(\mathbf{x},\boldsymbol{\theta}_i) + \frac1K\sum_{k=1}^K \psi_{i,k}^{\tau}(\mathbf{x},\boldsymbol{\theta}_i).
$

Assume a smoothing approximation satisfying, for some decreasing $\delta_i(\tau)$ with $\lim_{\tau\to 0}\delta_i(\tau)=0$,
$
\sup_{\mathbf{x},\boldsymbol{\theta}_i}
\left|
\rho_i^{\mathrm{aux}}(\mathbf{x},\boldsymbol{\theta}_i)
-
\rho_i^{\mathrm{aux},\tau}(\mathbf{x},\boldsymbol{\theta}_i)
\right|
\leq \delta_i(\tau).
$
The nonsmooth sample terms in MSD and CVaR are compositions with the positive-part kink $\varphi(a)=\max(0,a)$. Let $\varphi_\tau$ denote a differentiable smoothing of $\varphi$. Since $\varphi_\tau$ is nonlinear, smoothing complicates gradient estimation. For MSD, the smoothed gradient contains the weight $\varphi_\tau'(\mu_i^{\mathbb P}(\mathbf x)-u_i(\mathbf x\mid \xi_k))$, which depends on the full mixed-profile payoff $u_i(\mathbf x\mid \xi_k)$. If this payoff is replaced by an action-sampling estimate $\widehat u_i(\mathbf x\mid \xi_k)$, the nonlinear weight is generally biased:
$
\mathbb E\!\left[
\varphi_\tau'\!\left(\mu_i^{\mathbb P}(\mathbf x)-\widehat u_i(\mathbf x\mid \xi_k)\right)
\right]
\neq
\varphi_\tau'\!\left(\mu_i^{\mathbb P}(\mathbf x)-u_i(\mathbf x\mid \xi_k)\right).
$
Thus, action sampling does not yield a clean unbiased estimator of the MSD payoff-gradient vector. An analogous issue arises for CVaR.
Thus, we \emph{cannot use this idea to scale in the number of actions}.


\noindent \textbf{Stochastic gradient with subsamples of data samples}: Nonetheless, we adapt the loss-based idea to address the large-$K$ problem. 
For $\kappa>0$, define the entropy-regularized augmented utility
\begin{align*}
\rho_{i}^{\kappa, \tau}(\mathbf{x},\boldsymbol{\theta}_i) =
\rho_{i}^{\mathrm{aux},\tau}(\mathbf{x},\boldsymbol{\theta}_i) +
\kappa H_i(\mathbf{x}_i),
\qquad
H_i(\mathbf{x}_i) =
-\sum_{l=1}^{n_i}x_i(a_l)\log x_i(a_l),\qquad 
\text{ and gradients}\\
\mathbf{g}_{i}^{\tau} =
\nabla_{\mathbf{x}_i}
\rho_{i}^{\mathrm{aux},\tau}(\mathbf{x},\boldsymbol{\theta}_i),
\qquad
\mathbf{h}_{i}^{\tau} = \nabla_{\boldsymbol{\theta}_i}
\rho_{i}^{\mathrm{aux},\tau}(\mathbf{x},\boldsymbol{\theta}_i), \qquad \mathbf q_i^\tau =
\mathbf g_i^\tau - \kappa(\log \mathbf x_i+\mathbf 1), \qquad
\end{align*}
where $\mathbf h_i^\tau$ is omitted when $\Theta_i$ is absent.
At an exact entropy-regularized best response $\mathbf{x}_i^\star$, the strategy-part first-order optimality condition is
$\langle \mathbf q_i^\star,\mathbf y_i-\mathbf x_i^\star\rangle\leq 0$ for every $\mathbf y_i\in\mathbf X_i$. 
In a standard normal-form game, the entropy-regularized best response has an explicit logit form because the gradient of payoff w.r.t. $\mathbf x_i$ is fixed once $\mathbf{x}_{-i}$ is fixed. In a CUMG, the analogous condition is generally implicit: although the KKT condition can be written as $x_i(a_l)\propto \exp(g_{i,l}^\tau (\mathbf{x},\boldsymbol{\theta}_i)/\kappa)$, the gradient vector $\mathbf g_i^\tau$ may itself depend on $\mathbf x_i$. Thus, the entropy term forces the equilibrium to lie in the relative interior of the probability simplex, but does not provide a closed-form logit response.
Proposition~\ref{prop:entropy-interiority} in the appendix shows this formally. 

Hence, since $\mathbf{x}_i^\star$ has full support, applying the first-order condition to both $\mathbf{x}_i^\star+\epsilon\mathbf d$ and $\mathbf{x}_i^\star-\epsilon\mathbf d$ for every direction $\mathbf d$ with $\mathbf 1^\top\mathbf d=0$ gives $\langle\mathbf q_i^\star,\mathbf d\rangle=0$. Hence $\mathbf q_i^\star$ is orthogonal to the tangent space of the simplex, so $\mathbf q_i^\star=\alpha_i\mathbf 1$ for some scalar $\alpha_i$. For a candidate $\mathbf x_i$, first-order comparisons only depend on $\mathbf q_i^\tau$ modulo additive constants: for any feasible $\mathbf x_i,\mathbf y_i\in\mathbf X_i$ and any scalar $c$, $\langle \mathbf q_i^\tau+c\mathbf 1,\mathbf y_i-\mathbf x_i\rangle=\langle \mathbf q_i^\tau,\mathbf y_i-\mathbf x_i\rangle$, since $\mathbf 1^\top(\mathbf y_i-\mathbf x_i)=0$. So, we use the centered representative $\mathbf r_i^\tau=\mathbf q_i^\tau-\langle\mathbf q_i^\tau,\mathbf x_i\rangle\mathbf 1$. This centering preserves all feasible first-order comparisons and vanishes whenever the interior KKT condition $\mathbf q_i^\tau=\alpha_i\mathbf 1$ holds.

Let $\mathbf x_i=\operatorname{softmax}(\mathbf w_i)$ and $\widetilde{\mathbf z}=(\mathbf w,\boldsymbol{\theta})$ be the corresponding parametrization of $\mathbf z=(\mathbf x,\boldsymbol{\theta})$. The full smoothed first-order error vector is
$
\mathcal R^{\kappa,\tau}(\mathbf z) =
(\mathbf r_1^\tau,\ldots,\mathbf r_m^\tau,
 \mathbf h_1^\tau,\ldots,\mathbf h_m^\tau).
$
We minimize the composed squared norm
$
M^{\kappa,\tau}(\widetilde{\mathbf z}) = \frac12 \left\| \mathcal R^{\kappa,\tau}(\mathbf z)
\right\|_2^2 .
$
A small value of $M^{\kappa,\tau}(\widetilde{\mathbf z})$ ensures that the first-order errors are small.
Let
$
J^{\kappa,\tau}(\widetilde{\mathbf z}) =
D_{\widetilde{\mathbf z}}\mathcal R^{\kappa,\tau}(\mathbf z)
$
denote the Jacobian, with differentiation taken through $\operatorname{softmax}$.
Then
$
\nabla M^{\kappa,\tau}(\widetilde{\mathbf z}) = J^{\kappa,\tau}(\widetilde{\mathbf z})^\top
\mathcal R^{\kappa,\tau}(\mathbf z)
$.
Note that for computation, the Jacobian is not formed explicitly; only the vector-Jacobian product is
needed.
For a randomly sampled mini-batch $B \subset [K]$, define
\[
\textstyle \widehat{\mathbf g}_{i,B}^{\tau} =
\nabla_{\mathbf{x}_i}b_i(\mathbf{x},\boldsymbol{\theta}_i) + \frac1{|B|}
\sum_{k\in B}
\nabla_{\mathbf{x}_i}
\psi_{i,k}^{\tau}(\mathbf{x},\boldsymbol{\theta}_i),
\qquad 
\widehat{\mathbf h}_{i,B}^{\tau} =
\nabla_{\boldsymbol{\theta}_i}b_i(\mathbf{x},\boldsymbol{\theta}_i) +
\frac1{|B|}
\sum_{k\in B}
\nabla_{\boldsymbol{\theta}_i}
\psi_{i,k}^{\tau}(\mathbf{x},\boldsymbol{\theta}_i).
\]
From these define
$
\widehat{\mathbf q}_{i,B}^{\tau} = \widehat{\mathbf g}_{i,B}^{\tau} -
\kappa(\log\mathbf x_i+\mathbf 1),
\;
\widehat{\mathbf r}_{i,B}^{\tau} =
\widehat{\mathbf q}_{i,B}^{\tau} -
\langle
\widehat{\mathbf q}_{i,B}^{\tau},\mathbf x_i
\rangle\mathbf 1,
$
and stack them to obtain
$\widehat{\mathcal R}^{\kappa,\tau}_{B}(\mathbf z)$.
Indeed, since $\widehat{\mathbf g}_{i,B}^{\tau}$ and $\widehat{\mathbf h}_{i,B}^{\tau}$ are uniform mini-batch averages of the corresponding sample-indexed gradients, we have, conditional on $\widetilde{\mathbf z}$,
$
\mathbb E_B[\widehat{\mathbf g}_{i,B}^{\tau}(\mathbf z)]=\mathbf g_i^\tau(\mathbf z),
\;
\mathbb E_B[\widehat{\mathbf h}_{i,B}^{\tau}(\mathbf z)]=\mathbf h_i^\tau(\mathbf z).
$
The centering operation is linear in $\widehat{\mathbf q}_{i,B}^{\tau}$ for fixed $\mathbf x_i$, and therefore
$
\mathbb E_B[\widehat{\mathbf r}_{i,B}^{\tau}(\mathbf z)]=\mathbf r_i^\tau(\mathbf z).
$
Stacking over players gives
$
\mathbb E_B[\widehat{\mathcal R}^{\kappa,\tau}_{B}(\mathbf z)\mid \widetilde{\mathbf z}] = \mathcal R^{\kappa,\tau}(\mathbf z).
$
Moreover, because $\psi_{i,k}^{\tau}$ is smooth and the sum over $k$ is finite, differentiation commutes with the mini-batch expectation. Hence,
$
\mathbb E_B[D_{\widetilde{\mathbf z}}\widehat{\mathcal R}^{\kappa,\tau}_{B}(\mathbf z)\mid \widetilde{\mathbf z}] =
D_{\widetilde{\mathbf z}}\mathcal R^{\kappa,\tau}(\mathbf z) = J^{\kappa,\tau}(\widetilde{\mathbf z}).
$
Thus, both the residual vector and its Jacobian are unbiased mini-batch estimators of their full-sample counterparts.
Then, let $B^{(1)}$ and $B^{(2)}$ be two independent mini-batches, conditional on $\widetilde{\mathbf z}$. We then use the stochastic gradient estimator
$
\widehat G^{\kappa,\tau}(\widetilde{\mathbf z})
=
D_{\widetilde{\mathbf z}}\widehat{\mathcal R}^{\kappa,\tau}_{B^{(1)}}(\mathbf z)^\top
\widehat{\mathcal R}^{\kappa,\tau}_{B^{(2)}}(\mathbf z).
$
By conditional independence of the two mini-batches,
$
\mathbb E[\widehat G^{\kappa,\tau}(\widetilde{\mathbf z})\mid \widetilde{\mathbf z}] =
\mathbb E[D_{\widetilde{\mathbf z}}\widehat{\mathcal R}^{\kappa,\tau}_{B^{(1)}}(\mathbf z)\mid \widetilde{\mathbf z}]^\top
\mathbb E[\widehat{\mathcal R}^{\kappa,\tau}_{B^{(2)}}(\mathbf z) \mid \widetilde{\mathbf z}] =
J^{\kappa,\tau}(\widetilde{\mathbf z})^\top \mathcal R^{\kappa,\tau}(\mathbf z) =
\nabla M^{\kappa,\tau}(\widetilde{\mathbf z}).
$
Therefore, $\widehat G^{\kappa,\tau}$ is an unbiased stochastic gradient estimator for the smoothed squared error $M^{\kappa,\tau}$. This allows for the stochastic gradient approach in Algorithm~\ref{alg:gempStyleCUMG}, with the following regret bound.

\begin{algorithm}[t]
\caption{Stochastic first-order error minimization for smoothed augmented CUMGs}
\label{alg:gempStyleCUMG}
\DontPrintSemicolon
Choose $\kappa>0$, smoothing level $\tau>0$, feasible set $\widetilde{\mathcal Z}$ for $(\mathbf w,\boldsymbol{\theta})$, and stepsizes $\{\eta_t\}_{t\ge0}$\;
Initialize $\widetilde{\mathbf z}_0=(\mathbf w_0,\boldsymbol{\theta}_0)\in\widetilde{\mathcal Z}$ and set $\mathbf x_{i,0}=\operatorname{softmax}(\mathbf w_{i,0})$ for every $i$ to get $\mathbf z_0 = (\mathbf x_0, \boldsymbol{\theta}_0)$\;
\For{$t=0,1,2,\ldots$}{
    Draw independent mini-batches $B_t^{(1)},B_t^{(2)}$,
set $\mathbf x_{i,t}=\operatorname{softmax}(\mathbf w_{i,t})$
 $\forall i$, and set
$\mathbf z_t=(\mathbf x_t,\boldsymbol{\theta}_t)$\;
    Compute $\widehat{\mathcal R}^{\kappa,\tau}_{B_t^{(2)}}(\mathbf z_t)$ and 
    $
    \widehat G^{\kappa,\tau}_t \!=\!
    D_{\widetilde{\mathbf z}}\widehat{\mathcal R}^{\kappa,\tau}_{B_t^{(1)}}(\mathbf z_t)^\top
    \widehat{\mathcal R}^{\kappa,\tau}_{B_t^{(2)}}(\mathbf z_t)
    $
    by auto differentiation\;
    Update
    $
    \widetilde{\mathbf z}_{t+1}
    =
    \Pi_{\widetilde{\mathcal Z}}
    \left[
    \widetilde{\mathbf z}_t-\eta_t\widehat G^{\kappa,\tau}_t
    \right].
    $
}
\end{algorithm}

\begin{lemma}[Regret bound]
\label{lem:smoothedResidualRegret}
Fix $(\mathbf x,\boldsymbol{\theta})$ and
let $\bar\Theta_i\subseteq\Theta_i$ be a nonempty convex compact set such that
$\rho_i(\mathbf x)=\sup_{\theta_i\in\bar\Theta_i}\rho_i^{\mathrm{aux}}(\mathbf x,\theta_i)$  $\forall \mathbf x\in\mathbf X$.
Let $\Delta_i^{\mathrm{aux},\tau}(\mathbf x,\boldsymbol{\theta}_i)=\sup_{\mathbf y_i\in\mathbf X_i,\boldsymbol{\vartheta}_i\in \bar\Theta_i}\left[\rho_i^{\mathrm{aux},\tau}(\mathbf y_i,\mathbf x_{-i},\boldsymbol{\vartheta}_i)-\rho_i^{\mathrm{aux},\tau}(\mathbf x,\boldsymbol{\theta}_i)\right]$. Then, regret of player $i$ in the original CUMG satisfies
$
\sup_{\mathbf y_i\in\mathbf X_i}\left[\rho_i(\mathbf y_i,\mathbf x_{-i})-\rho_i(\mathbf x)\right]\le \Delta_i^{\mathrm{aux},\tau}(\mathbf x,\boldsymbol{\theta}_i)+2\delta_i(\tau).
$
Moreover, if, for fixed $\mathbf x_{-i}$, $\rho_i^{\mathrm{aux},\tau}(\mathbf x_i,\mathbf x_{-i},\boldsymbol{\theta}_i)$ is jointly concave in $(\mathbf x_i,\boldsymbol{\theta}_i)$, then
\[
\Delta_i^{\mathrm{aux},\tau}(\mathbf x,\boldsymbol{\theta}_i)\le \sqrt{2}\|\mathbf r_i^\tau\|_2+\sup_{\boldsymbol{\vartheta}_i\in \bar\Theta_i}\langle \mathbf h_i^\tau,\boldsymbol{\vartheta}_i-\boldsymbol{\theta}_i\rangle+\kappa\log n_i.
\]
Consequently, if the RHS above plus $2\delta_i(\tau)$ is $\leq \epsilon$ for all players, then $\mathbf x$ is an $\epsilon$-DRE of the original CUMG.
\end{lemma}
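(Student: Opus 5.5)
The plan is to prove the two claims separately. The first is the reduction from the original regret to the smoothed auxiliary gap. The second bounds that gap by first-order quantities, using joint concavity.

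\textbf{Step 1: reduction.} Fix $\mathbf y_i\in\mathbf X_i$. By the assumption on $\bar\Theta_i$ and the uniform smoothing bound,
\[
\rho_i(\mathbf y_i,\mathbf x_{-i})=\sup_{\boldsymbol\vartheta_i\in\bar\Theta_i}\rho_i^{\mathrm{aux}}(\mathbf y_i,\mathbf x_{-i},\boldsymbol\vartheta_i)\le \sup_{\boldsymbol\vartheta_i\in\bar\Theta_i}\rho_i^{\mathrm{aux},\tau}(\mathbf y_i,\mathbf x_{-i},\boldsymbol\vartheta_i)+\delta_i(\tau).
\]
Conversely, $\rho_i(\mathbf x)\ge\rho_i^{\mathrm{aux}}(\mathbf x,\boldsymbol\theta_i)\ge\rho_i^{\mathrm{aux},\tau}(\mathbf x,\boldsymbol\theta_i)-\delta_i(\tau)$. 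This uses $\boldsymbol\theta_i\in\Theta_i$ and the fact that $\rho_i$ is the supremum over all of $\Theta_i$. Subtracting the two bounds and taking the supremum over $\mathbf y_i$ gives the first inequality with the $2\delta_i(\tau)$ term.

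\textbf{Step 2: linearize via concavity.} Assume joint concavity of $\rho_i^{\mathrm{aux},\tau}$ in $(\mathbf x_i,\boldsymbol\theta_i)$. The supergradient inequality at $(\mathbf x_i,\boldsymbol\theta_i)$ gives
\[
\rho_i^{\mathrm{aux},\tau}(\mathbf y_i,\mathbf x_{-i},\boldsymbol\vartheta_i)-\rho_i^{\mathrm{aux},\tau}(\mathbf x,\boldsymbol\theta_i)\le\langle\mathbf g_i^\tau,\mathbf y_i-\mathbf x_i\rangle+\langle\mathbf h_i^\tau,\boldsymbol\vartheta_i-\boldsymbol\theta_i\rangle.
\]
The $\boldsymbol\vartheta_i$ term, after the supremum over $\bar\Theta_i$, is exactly the middle term of the claimed bound. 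For the strategy term, write $\mathbf g_i^\tau=\mathbf q_i^\tau+\kappa(\log\mathbf x_i+\mathbf 1)$. Then
\[
\langle\mathbf g_i^\tau,\mathbf y_i-\mathbf x_i\rangle=\langle\mathbf r_i^\tau,\mathbf y_i-\mathbf x_i\rangle+\kappa\langle\log\mathbf x_i,\mathbf y_i-\mathbf x_i\rangle.
\]
Here I used that the shift $c\mathbf 1$ is invisible on $\mathbf y_i-\mathbf x_i$, as noted before the lemma.

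\textbf{Step 3: bound the two strategy terms.} Cauchy–Schwarz gives $\langle\mathbf r_i^\tau,\mathbf y_i-\mathbf x_i\rangle\le\|\mathbf r_i^\tau\|_2\,\|\mathbf y_i-\mathbf x_i\|_2$, and the diameter of the simplex in $\ell_2$ is $\sqrt2$. For the entropy cross term, $\langle\log\mathbf x_i,\mathbf y_i-\mathbf x_i\rangle=\langle\mathbf y_i,\log\mathbf x_i\rangle+H_i(\mathbf x_i)$. The first summand is $\le0$ because $x_i(a_l)\le1$, and $H_i(\mathbf x_i)\le\log n_i$. So this term contributes at most $\kappa\log n_i$.

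This requires $\mathbf x_i$ to have full support so that $\log\mathbf x_i$ is finite. That holds under the softmax parametrization; if some $x_i(a_l)=0$, I would interpret $\mathbf r_i^\tau$ as infinite and the bound as trivial. Combining Steps 2 and 3 and taking suprema over $\mathbf y_i$ and $\boldsymbol\vartheta_i$ gives the second inequity, and the $\epsilon$-DRE conclusion follows directly from the definition.

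\textbf{Main obstacle.} The delicate point is Step 3's entropy term: the sign of $\langle\mathbf y_i,\log\mathbf x_i\rangle$ and the boundary case. I also need to check that $\mathbf g_i^\tau$ is a valid supergradient of the jointly concave function, which is fine since it is differentiable in the interior.
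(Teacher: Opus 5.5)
Your proof is correct and follows essentially the same route as the paper: a uniform-smoothing reduction costing $2\delta_i(\tau)$, then a supergradient inequality, centering to pass from $\mathbf q_i^\tau$ to $\mathbf r_i^\tau$, the $\sqrt2$ Euclidean diameter of the simplex, and a $\kappa\log n_i$ entropy bound. The differences are minor. The paper linearizes the entropy-regularized $\rho_i^{\kappa,\tau}$, whose strategy-gradient is $\mathbf q_i^\tau$, and then bounds $\kappa(H_i(\mathbf x_i)-H_i(\mathbf y_i))$; you instead linearize $\rho_i^{\mathrm{aux},\tau}$ and bound $\kappa\langle\log\mathbf x_i,\mathbf y_i-\mathbf x_i\rangle$ directly, which is smaller than the paper's term by $\kappa\,\mathrm{KL}(\mathbf y_i\|\mathbf x_i)$. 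Also, in Step 1 you obtain the lower bound $\rho_i(\mathbf x)\ge\rho_i^{\mathrm{aux}}(\mathbf x,\boldsymbol\theta_i)$ from the supremum over all of $\Theta_i$. This avoids the paper's implicit assumption $\boldsymbol\theta_i\in\bar\Theta_i$, which it needs when it uses $\rho_i^\tau(\mathbf x)\ge\rho_i^{\mathrm{aux},\tau}(\mathbf x,\boldsymbol\theta_i)$.
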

\noindent \textbf{Instantiations for MSD and CVaR}:
For both MSD and CVaR the non-smoothness is due to the $(a)_+ = \max(0,a) = - \min(0,-a)$ function. Let
$\varphi_\tau(a)=\tau\log(1+\exp(a/\tau))$. Then
$\varphi_\tau'(a)=1/(1+\exp(-a/\tau))$,
$0\le\varphi_\tau'(a)\le1$, and
$0\le\varphi_\tau(a)-(a)_+\le c_\varphi\tau$
with $c_\varphi=\log2$. With this, and $\Theta_i$ empty for MSD and $\Theta_i = \mathbb{R}$ with $\bar \Theta_i = [0,1]$ for CVaR (under $u_i(\mathbf x\mid\xi_k)\in[0,1]$), we get
\begin{align*}
\sup_{\mathbf y_i\in\mathbf X_i}\left[\rho_{i,\mathrm{MSD}}(\mathbf y_i,\mathbf x_{-i})-\rho_{i,\mathrm{MSD}}(\mathbf x)\right] & \le \sqrt{2}\|\mathbf r_i^\tau\|_2+\kappa\log n_i+2\gamma_s c_\varphi\tau.\\
\sup_{\mathbf y_i\in\mathbf X_i}\left[\rho_{i,\mathrm{CVaR}}(\mathbf y_i,\mathbf x_{-i})-\rho_{i,\mathrm{CVaR}}(\mathbf x)\right] & \le \sqrt2\|\mathbf r_i^\tau\|_2+|h_i^\tau|+\kappa\log n_i+ (2\gamma_c c_\varphi\tau/\alpha).
\end{align*}
Detailed step-by-step derivation is in the appendix. Assuming Algorithm~\ref{alg:gempStyleCUMG} finds $\mathbf r_i^\tau$ and $h_i^\tau$ close to zero, from Lemma~\ref{lem:smoothedResidualRegret} we can get the desired $\epsilon$ approximate DRE by choosing appropriate small $\kappa$ and $\tau$.

\section{Experiments}
\label{sec:experiments}
As a scalability and feasibility check for the proposed computational formulations, we compare the small support Algorithm~\ref{alg:sparse_witness_support_search} for MSD and CVaR games with their MLCP formulations solved using the PATH solver~\citep{Ferris_Munson_1999}.
We used a 4.4GHz CPU laptop with 16GB of RAM. Payoffs were sampled from a uniform distribution on $[0,1]$ and $\epsilon$ was set to $0.01$. For practical tractability, the small support algorithm considered at most 1000 candidate supports per game. 
Further setup details are in Appendix~\ref{sec:experimentdetails}.

The algorithms were compared on a grid comprising $K \in \{5, 10, 30, 100, 250, 500 \}$, and $n \in \{ 5, 10, 20, 50 \}$ for two players, with $\gamma =0.5$. For each combination, the algorithms were tested on 20 randomly generated games with $\kappa$ and $\tau$ chosen as follows:
$\kappa = \min\left(n, \max\left(2, \left\lceil \sqrt{n} \right\rceil\right)\right), 
    \tau = \min(K, \max(5, \lceil \sqrt{K} \rceil)).
$
The median runtime  over successful runs, along with the interquartile range, for the MSD solves is presented in Figure~\ref{fig:smallSupportMCPPerformance}. We can observe that as the game size increases, the MLCP program takes longer than the small support algorithm to find an equilibrium or fails to converge to one entirely. On the other hand, we show in Figure~\ref{fig:smallSupportMCPTimeCVAR} in Appendix~\ref{sec:experimentdetails} that for CVaR games, the small support algorithm has similar failure rates as the MLCP for large games; a hypothesis is that CVaR depends on the threshold $z_i$ and on the low-payoff samples active below this threshold, so 1000 candidate supports may be insufficient to find a certifying sparse witness; this behavior could possibly be improved by exploring more candidate supports than the 1000 used here.
Additional numerical examples in Appendix~\ref{sec:addtionalexperiments} illustrate how risk aversion affects equilibrium robustness and out-of-sample performance.
\begin{figure}[t]
\centering\includegraphics[width=0.9\linewidth]{d6dec861-4f9e-46f5-bff6-7b9758b3828a.png}
    \caption{Median runtime for MSD for $\geq 5$ successful runs (out of 20). Less than 5 are indicated as failures.} 
    \label{fig:smallSupportMCPPerformance}
\end{figure}

\section{Related Work}
Ambiguity sets for DRG are often specified in two ways in the literature: by optimal transport/statistical-distance balls centered on a nominal distribution or restrictions on moments of the distributions within the ambiguity set. In most such settings, the uncertainty over the payoffs is exogenous to the game. \citet{10.1007/978-3-030-64793-3_22} build and solve a Wasserstein distance-based DRG in a data-driven environment; ~\citet{qu2017distributionally,ahipacsaouglu2015beyond,sunxu} use moment constrained ambiguity sets in a DRG set-up. In the limit, these models converge to a robust game~\citep{Aghassi2006-lx} as the probability mass concentrates on the worst outcome in the support. \citet{loizou2015distributionallyrobustgametheory} combine moment-constrained ambiguity sets and CVaR objective within their DRG model. Despite using CVaR, their approach precludes a tractable closed form solution for the inner infimum, due to the introduction of additional robustness (over CVaR). Other simultaneous, one-shot distributionally robust game settings in the literature include ambiguity sets based on f-divergence~\citep{10.1145/3150928.3150950} and chance constraints~\citep{Singh2017-en}, and maximization over parameterized Bayesian posteriors~\citep{liu2025bayesiandistributionallyrobustnash}. Recent work on DRG by \citet{lanzetti2025strategicallyrobustgametheory} makes the uncertainty endogenous by introducing distributional robustness over other players' mixed strategy profile. Distributional robustness has also been explored in the Stackelberg game~\citep{Liu2018-ng, ananthanarayanan2022computingoptimaldistributionallyrobuststrategy}. Finally, a number of works also consider transition uncertainty in a repeated single player setting~\citep{iyengar2005robust,li2025rectangularity}, which has recently been extended to Markov games~\citep{blanchet2023doublepessimismprovablyefficient}. Recent work also reflects growing interest in ambiguity and distributional uncertainty in algorithmic-economic models, including ambiguous contracts~\citep{duetting2023ambiguous}, ambiguous persuasion~\citep{cheng2024persuasion}, and ambiguity-aware bandits~\citep{li2024optimism}. In contrast to these works, this paper studies one-shot, simultaneous-move games in which payoff uncertainty is modeled directly through coherent utility measures.  For finite data-driven CUMGs, we are not aware of prior complementarity, small-support, or stochastic first-order equilibrium methods of the kind developed here.

\section{Conclusion}
This paper introduces Coherent Utility Measure Games which model players' distributional uncertainty over their payoffs using interpretable coherent utility (risk) measures. Our results show that CUMGs are neither classical matrix games nor arbitrary continuous games, but occupy an intermediate regime. 
Their continuous nature introduces new conceptual and computational challenges, including the inapplicability of familiar techniques such as Lemke–Howson and classical linear programming of support enumeration in two-player settings. Further, a straightforward conclusion from our Example~\ref{ex:coordinationGame}, detailed in Appendix~\ref{sec:correq}, reveals that the standard correlated equilibrium definition used in finite games is inapplicable for CUMGs. These challenges point to several promising directions for future research, such as designing scalable algorithms for computing continuous game correlated equilibrium of CUMG. Extending the game framework to dynamic, repeated, or information-asymmetric settings would also allow further enrichment of the modeling of robust strategic behavior in data-driven environments.

\newpage
 
\bibliographystyle{plainnat}
\bibliography{references}

\newpage
\appendix

\startcontents[appendices]

\section*{Appendix Contents}
\addcontentsline{toc}{section}{Appendix Contents}
\printcontents[appendices]{}{1}{\setcounter{tocdepth}{3}}

\section{Missing Details and Proofs}


\subsection{Proofs in Section~\ref*{subsec:existence}~\nameref*{subsec:existence}}
\paragraph{Proof of Theorem~\ref{thm::dreSMExist}}

\begin{proof}
Following many standard proofs of Nash-type equilibrium existence, there are two critical results needed to invoke Kakutani's Fixed Point Theorem for the best response correspondence and prove the existence of an equilibrium. We show these two critical result. 
First, we show \emph{concavity} of $\rho$ in $\mathbf{x}_i$,
\begin{align*}
\rho_{i}(\lambda \mathbf{x}'_i + (1 - \lambda) \mathbf{x}''_i, \mathbf{x}_{-i}) &= \inf_{\mathbb{Q}\in U}\mathbb{E}^{\mathbb{Q}}\left[u_i(\lambda \mathbf{x}'_i + (1 - \lambda) \mathbf{x}''_i, \mathbf{x}_{-i}\mid \xi)\right] \\
& = \inf_{\mathbb{Q}\in U} \left[ \lambda \mathbb{E}^{\mathbb{Q}}\left[u_i(\mathbf{x}'_i , \mathbf{x}_{-i}\mid \xi)\right] + (1 - \lambda) \mathbb{E}^{\mathbb{Q}}\left[u_i( \mathbf{x}''_i, \mathbf{x}_{-i}\mid \xi)\right] \right]\\
& \geq \inf_{\mathbb{Q}\in U}\lambda \mathbb{E}^{\mathbb{Q}}\left[u_i(\mathbf{x}'_i , \mathbf{x}_{-i}\mid \xi)\right] + \inf_{\mathbb{Q}\in U} (1 - \lambda) \mathbb{E}^{\mathbb{Q}}\left[u_i( \mathbf{x}''_i, \mathbf{x}_{-i}\mid \xi)\right] \\
& = \lambda \rho_{i}(\mathbf{x}'_i , \mathbf{x}_{-i}) + (1 - \lambda) \rho_{i}(\mathbf{x}''_i , \mathbf{x}_{-i})
\end{align*}
The inequality above follows from the standard result \textcolor{black}{$\inf \sum f_i(x) \geq \sum \inf f_i(x)$}.
Second, we show the \emph{continuity} of $\rho_i$ in its arguments using a much more succinct proof than shown in \citet{qu2017distributionally}. Recall that $\mu^{\mathbb{Q}}_i(\mathbf{a}) = \mathbb{E}^{\mathbb{Q}}[u_i(\mathbf{a}\mid \xi)]$, and consider the set of real-valued vectors $E = \left\{\left(\mu^{\mathbb{Q}}_i(\mathbf{a})\right)_{\mathbf{a} \in \mathbf{A}} \mid \mathbb{Q} \in U \right\}$. It can be easily seen that 
$$\rho_{i}(\mathbf{x}_i , \mathbf{x}_{-i}) = \inf_{\mathbb{Q} \in U} \sum_{\mathbf{a}\in \mathbf{A}} \mu^{\mathbb{Q}}_i(\mathbf{a}) \prod_{s=1}^m x_s(a_{j_s}) = \inf_{(\mu^{\mathbb{Q}}_i(\mathbf{a}))_{\mathbf{a} \in \mathbf{A}} \in E} \sum_{\mathbf{a}\in \mathbf{A}} \mu^{\mathbb{Q}}_i(\mathbf{a}) \prod_{s=1}^m x_s(a_{j_s}).$$

Because of the bounded first moment condition in the theorem, $E$ is a bounded set in $\mathbb{R}^{|\mathbf{A}|}$. Consider the closure $\Bar{E}$ of set $E$, clearly $\Bar{E}$ is compact; also inf over $E$ is the same as inf over $\Bar{E}$, which implies that
$$
\rho_{i}(\mathbf{x}_i , \mathbf{x}_{-i}) = \inf_{(v_\mathbf{a})_{\mathbf{a} \in \mathbf{A}} \in \Bar{E}} \sum_{\mathbf{a}\in \mathbf{A}}v_{\mathbf{a}}\prod_{s=1}^m x_s(a_{j_s}).
$$
Next, using the fact that $g(t) = \inf_{v \in V} f(t,v)$ is continuous if $f$ is continuous and $V$ is compact, we can equate $t$ to $(\mathbf{x}_i , \mathbf{x}_{-i})$ and $V$ to $\Bar{E}$ to claim continuity of $\rho_i$.
Now, we have all the ingredients to invoke Kakutani's Fixed Point Theorem, which proves the existence of a fixed point of the best response correspondence and hence the existence of equilibrium.
\end{proof}

\subsection{Complexity Result in Section~\ref*{sec:algorithms}~\nameref*{sec:algorithms}}
\label{subsec:complexity}
We recap some terms and definitions from the main paper.
In this sub-section, we assume that $|u_i(\mathbf{a\mid\xi})| \leq M$ for all $\mathbf{a} \in \mathbf{A}$ for some finite constant $M$, which also implies the bound $M$ on expected utility as assumed in Theorem~\ref{thm::dreSMExist}. First, we define the \emph{approximate DRE} computational problem. It is   known that finding an exact Nash equilibrium is usually FIXP-hard~\cite{etessami2010complexity}, which lies above NP.
Thus, we define an $\epsilon$-approximate version of the DRE problem of Equation~\ref{def::drEquilibrium}, one in which the problem is to find $(\mathbf{x}_i^*, \mathbf{x}_{-i}^*)$ such that (recall, $\rho_i(\mathbf{x}_i, \mathbf{x}_{-i}) = \inf_{\mathbb{Q} \in U} \mu^{\mathbb{Q}}_i(\mathbf{x}_i, \mathbf{x}_{-i})$)
\begin{equation}
\label{def::drEquilibriumApprox}  
\rho(\mathbf{x}_i^*, \mathbf{x}_{-i}^*) \geq \max_{\mathbf{x}_i \in \mathbf{X}_i} \rho(\mathbf{x}_i, \mathbf{x}_{-i}^*) - \epsilon, \; \forall i \in \{1,\dots ,m\} \; .
\end{equation}
We start with a general PPAD-completeness result that builds upon prior work~\cite{papadimitriou2023computational}, before presenting proof of Theorem~\ref{thm:complexity} from the main paper.
\begin{proposition} \label{prop:genericcomplexity}
    Under assumptions in this sub-section, consider any encoded class of data-driven DRGs for which each robust payoff function $\rho_i$ is polynomial-time computable and Lipschitz. Then finding an approximate DRE for this class is in PPAD. Moreover, the problem is PPAD-hard, and hence PPAD-complete, for any such class that contains singleton-ambiguity DRGs. 
\end{proposition}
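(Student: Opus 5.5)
Membership will follow from the PPAD framework for concave games of \citet{papadimitriou2023computational}, and hardness from a reduction from ordinary Nash equilibrium computation. For membership, the plan is to check the hypotheses of that framework and then invoke it. That framework covers games with compact convex strategy sets in which each payoff is concave in the player's own strategy and is given by a polynomial-time computable, Lipschitz function. Under those conditions, computing an $\epsilon$-approximate equilibrium is in PPAD. Here the strategy sets are the simplices $\mathbf{X}_i$, which are compact, convex and polytopal, with polynomial-size descriptions. Concavity of $\rho_i(\cdot,\mathbf{x}_{-i})$ was already shown in the proof of Theorem~\ref{thm::dreSMExist}, since an infimum of affine functions of $\mathbf{x}_i$ is concave. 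Polynomial-time computability and Lipschitzness of $\rho_i$ are hypotheses of the proposition.

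The one detail to handle carefully is the form of the problem itself. The framework of \citet{papadimitriou2023computational} deals with approximate fixed points and approximate first-order stationarity, not directly with the regret condition \eqref{def::drEquilibriumApprox}. I would close this gap in two steps.

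\emph{Step 1: reduction to a Brouwer instance.} Take the standard ``projected gradient / best-response smoothing'' map, for example $F(\mathbf{x})_i=\Pi_{\mathbf{X}_i}\bigl[\mathbf{x}_i+\eta\,\mathbf{g}_i\bigr]$ with $\mathbf{g}_i$ a supergradient. Alternatively, to avoid supergradients entirely, use the regularized best response $\argmax_{\mathbf{y}_i}\rho_i(\mathbf{y}_i,\mathbf{x}_{-i})-\tfrac{\lambda}{2}\|\mathbf{y}_i-\mathbf{x}_i\|^2$. The regularized version is strongly concave, so it has a unique maximizer. It can be computed approximately in polynomial time by convex optimization with a value oracle (ellipsoid method), and it is Lipschitz in $\mathbf{x}$ because $\rho_i$ is Lipschitz. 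This places approximate fixed points of $F$ in PPAD via the standard Brouwer-in-PPAD argument (arithmetic circuits, polynomial precision).

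\emph{Step 2: converting approximate fixed points.} Show that a $\delta$-approximate fixed point of this map is an $\epsilon$-DRE, with $\delta=\mathrm{poly}(\epsilon,1/L,1/M,1/n_i)$. Concavity gives the argument. If $\|\mathbf{y}_i^*-\mathbf{x}_i\|\le\delta$ for the regularized maximizer $\mathbf{y}_i^*$, then the optimality condition of $\mathbf{y}_i^*$ plus concavity imply that, for every $\mathbf{y}_i$,
\[
\rho_i(\mathbf{y}_i,\mathbf{x}_{-i})\le\rho_i(\mathbf{y}_i^*,\mathbf{x}_{-i})+\lambda\,\delta\,\mathrm{diam}(\mathbf{X}_i).
\]
Lipschitzness then gives $\rho_i(\mathbf{y}_i^*,\mathbf{x}_{-i})\le\rho_i(\mathbf{x})+L\delta$. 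Since $\mathrm{diam}(\mathbf{X}_i)\le\sqrt2$, choosing $\delta$ polynomially small yields regret at most $\epsilon$. This step is where I expect the main technical care, particularly in handling inexact inner optimization within the circuit model. I would lean on the corresponding lemma of \citet{papadimitriou2023computational} instead of redoing it from scratch.

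For hardness, I would use singleton ambiguity sets: if $U=\{\mathbb{P}\}$, then $\rho_i(\mathbf{x})=\mu^{\mathbb{P}}_i(\mathbf{x})$, which is the multilinear expected payoff of the finite game with payoffs $\mu^{\mathbb{P}}_i(\mathbf{a})$. Any finite normal-form game with bounded payoffs is obtained this way by taking $K=1$ and $u_i(\mathbf{a}\mid\xi_1)$ equal to the given payoffs. The $\epsilon$-DRE condition then coincides exactly with $\epsilon$-Nash equilibrium. Computing an $\epsilon$-Nash equilibrium is PPAD-hard even for two players with inverse-polynomial $\epsilon$, by the Chen--Deng--Teng and Daskalakis--Goldberg--Papadimitriou results. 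The reduction is trivially polynomial, and the resulting payoffs are polynomial-time computable and Lipschitz with constant $O(M\sum_i n_i)$, so the instance lies in the class. Combining this with membership gives PPAD-completeness.
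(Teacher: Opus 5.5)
\textbf{Overall.} Your proof is correct in outline, and it takes a genuinely different route from the paper for membership, though one step needs repair (see the next paragraph). Your hardness half is the paper's argument: a singleton ambiguity set with one sample gives $\rho_i=\mu_i^{\mathbb P}$, so an $\epsilon$-DRE is exactly an $\epsilon$-Nash equilibrium.

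\textbf{How the membership routes differ.} The paper never builds a fixed-point map. It verifies the input hypotheses of \textsc{ConcaveGames with SO} from \citet{papadimitriou2023computational}:
compact convex strategy sets with a strong separation oracle; well-boundedness via the coordinate-dropping projection of Proposition~\ref{prop:well-boundedness}; concavity and continuity of $\rho_i$ from the existence proof; and representability of a polynomial-time computable Lipschitz $\rho_i$ by linear arithmetic circuits, via \citet[Theorem E.2]{fearnley2022complexity}.
It then uses that problem's PPAD membership as a black box, reading it as already stated for the regret notion of Equation~\eqref{def::drEquilibriumApprox}.
You instead rebuild membership from Brouwer. You use a regularized best-response map, evaluated approximately from a value oracle, together with a conversion showing that a $\delta$-approximate fixed point has regret at most $(\sqrt2\lambda+L)\delta$. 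That conversion (your Step 2) is correct as written.
The paper's route is shorter but hinges on matching the cited framework's input model. Yours is self-contained, needs only value access to $\rho_i$, and makes the regret guarantee explicit. The cost is that you must do the continuity and precision bookkeeping yourself.

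\textbf{The false claim in Step 1.} The regularized best response $\mathbf y_i^*(\mathbf x)$ is $1$-Lipschitz in $\mathbf x_i$, since it is a proximal map. It is \emph{not} Lipschitz in $\mathbf x_{-i}$ merely because $\rho_i$ is. Compare the two $\lambda$-strongly concave objectives, which differ uniformly by at most $L\|\mathbf x_{-i}-\mathbf x'_{-i}\|$. This only gives $\|\mathbf y_i^*(\mathbf x_i,\mathbf x_{-i})-\mathbf y_i^*(\mathbf x_i,\mathbf x'_{-i})\|\le\sqrt{2L\|\mathbf x_{-i}-\mathbf x'_{-i}\|/\lambda}$.
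The loss of Lipschitzness is real. Take $\rho_i=\min_k u_i(\cdot\mid\xi_k)$ with two samples, two own actions and three opponent actions. The kink of the minimum of two lines sits at a ratio of two affine functions of $\mathbf x_{-i}$ that vanish at a common point. On a region accumulating at that point, the prox output equals the kink location, whose gradient grows like the inverse distance to that point.
Hölder continuity with polynomial constants is still enough: polynomially continuous, polynomially computable Brouwer functions have weak approximate fixed points in PPAD \citep{etessami2010complexity}. So your route survives once you use this modulus when choosing the grid and the precision.

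\textbf{Two smaller points.} First, discard the projected-supergradient variant: a supergradient selection is generally discontinuous, so that map is not a Brouwer function. Second, the ellipsoid/value-oracle evaluation needs a full-dimensional parametrization of the simplex, which is exactly what Proposition~\ref{prop:well-boundedness} supplies.
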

\begin{proof}
The assumption that $\rho_i$ is polynomial-time computable is understood with respect to the given encoding of the DRG instance, and therefore includes the tractability of the ambiguity-set optimization needed to evaluate the robust payoff.

    The proof relies on prior results that approximate equilibrium in matrix games are PPAD-complete and also in concave games with strong separation oracle (\textsc{ConcaveGames with SO}) as defined in~\citet{papadimitriou2023computational}. We first note that matrix games are a subset of DRGs; it is straightforward to see this when the ambiguity set is a singleton (in CUMG, this would imply the $\gamma$ parameter is zero). For our choice of empirical distribution as the nominal distribution, this means that the ambiguity set is just the empirical distribution and by choosing a single sample this distribution can be chosen to realize any expected payoff. Thus, any matrix game is also a DRG. This poly time reduction yields that DRGs are PPAD-Hard.

    Next, DRGs can be reduced to concave games because of the following reasoning. It is straightforward to check the convexity and compactness of the strategy sets. A strong separation oracle, as defined in~\citet{papadimitriou2023computational},    
    for the strategy set is also available due to the convexity of the strategy set. The  well-boundedness of the strategy set (as defined in~\citet{papadimitriou2023computational}) in presented in Proposition~\ref{prop:well-boundedness} below this proof. The proof of Theorem~\ref{thm::dreSMExist} shows that the utility functions $\rho_i$ are continuous and concave, and bounded (boundedness allows for scaling, if needed). The assumption ensures that utility is polynomial time computable and Lipschitz, and hence approximated by linear arithmetic circuits~\cite[Theorem E.2]{fearnley2022complexity}. Then, this sandwich between two problem classes of PPAD-complexity makes approximate DRE also PPAD-complete. 
\end{proof}

\begin{proposition} \label{prop:well-boundedness}
The standard projection (drop a co-ordinate) $\Pi$ allows for a probability simplex strategy space $\Delta_{n-1}$ to be treated as well-bounded via a bijective mapping, that is, $\Pi$ is a bijective mapping of $\Delta_{n-1}$ to a convex and compact region $U$ that has non-zero and bounded volume. The result straightforwardly extends to strategy space of all players given by the product of probability simplices.
\end{proposition}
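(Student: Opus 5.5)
The plan is to write down the projection explicitly and check its properties. Take $\Pi:\mathbb{R}^n\to\mathbb{R}^{n-1}$, $\Pi(x_1,\dots,x_n)=(x_1,\dots,x_{n-1})$, and restrict it to $\Delta_{n-1}=\{\mathbf x\in\mathbb{R}^n_{\ge 0}:\mathbf 1^\top\mathbf x=1\}$. The candidate image is the corner simplex
\[
U=\Big\{\mathbf y\in\mathbb{R}^{n-1}:\ \mathbf y\ge 0,\ \textstyle\sum_{l=1}^{n-1}y_l\le 1\Big\}.
\]

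\textbf{Bijection and image.} First I will show that $\Pi(\Delta_{n-1})=U$ and that $\Pi$ is injective there. If $\mathbf x\in\Delta_{n-1}$, then the first $n-1$ coordinates are nonnegative and their sum is $1-x_n\le 1$, so $\Pi(\mathbf x)\in U$. For the converse, define the inverse map $\Pi^{-1}(\mathbf y)=(\mathbf y,1-\mathbf 1^\top\mathbf y)$. For $\mathbf y\in U$ this point lies in $\Delta_{n-1}$, and it is the unique preimage of $\mathbf y$ because the last coordinate is forced by $\mathbf 1^\top\mathbf x=1$. Both $\Pi$ and $\Pi^{-1}$ are affine, so the bijection is also a homeomorphism. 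This lets concavity, continuity and Lipschitz constants of the utilities $\rho_i$ transfer between the two parametrizations, with Lipschitz constants changing by at most a factor of about $\sqrt{n}$. That transfer is what the reduction in Proposition~\ref{prop:genericcomplexity} needs.

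\textbf{Convexity, compactness, volume.} Next I will verify the geometric properties of $U$.
\begin{itemize}
\item $U$ is convex as an intersection of half-spaces.
\item $U$ is closed and contained in $[0,1]^{n-1}$, so it is compact.
\item For nonzero volume, $U$ contains the ball of radius $r=\frac{1}{n-1+\sqrt{n-1}}$ centred at $c\mathbf 1$ with $c=\frac{1}{n}$; checking that some explicit ball fits is enough. Alternatively, its volume is exactly $1/(n-1)!>0$.
\item For boundedness, $U\subseteq B(0,1)$, which gives volume at most that of the unit cube.
\end{itemize}
The explicit inner ball and outer ball, both of polynomially bounded radius and with rational description, are what the well-boundedness notion of \citet{papadimitriou2023computational} asks for. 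Specifically: a center, an inner radius $r$, an outer radius $R$, and a polynomial bound on the encoding size. I expect the only delicate step to be computing an inner radius correctly. The distance from $c\mathbf 1$ to the facets $y_l=0$ is $c$, and to the facet $\mathbf 1^\top\mathbf y=1$ it is $(1-(n-1)c)/\sqrt{n-1}$. Choosing $c$ to equalize these distances gives a valid $r$ of order $1/n$, and any smaller rational value also works.

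\textbf{Products.} Finally I will extend to the full strategy space by applying $\Pi$ coordinatewise to $\prod_i\Delta_{n_i-1}$. The product of bijections is a bijection onto $\prod_i U_i$. A product of convex compact sets is convex and compact, and its volume is $\prod_i 1/(n_i-1)!$, which is positive and bounded. Inner and outer balls follow from $\min_i r_i$ and $\sqrt{\sum_i R_i^2}$ respectively.
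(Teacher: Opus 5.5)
Your proposal follows the same route as the paper: drop the last coordinate, invert with the affine map $\mathbf y\mapsto(\mathbf y,1-\mathbf 1^\top\mathbf y)$, identify the image as the corner simplex $U$, exhibit an inscribed ball for nonzero volume, and extend coordinatewise to products. You make the constants explicit (inner and outer radii, volume $1/(n-1)!$, the $\sqrt n$ Lipschitz factor). The paper instead uses a ball of radius $\epsilon$ centred at $(2\epsilon,\ldots,2\epsilon)$ for small enough $\epsilon$.

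One slip to fix: in your third bullet the centre and radius do not match. With centre $\frac1n\mathbf 1$, the distance to the facet $\mathbf 1^\top\mathbf y=1$ is $\frac{1}{n\sqrt{n-1}}$. This is strictly smaller than $\frac{1}{n-1+\sqrt{n-1}}$ for every $n\ge 3$; for $n=3$ it is about $0.236$ versus about $0.293$. So that ball is not contained in $U$. The radius $r=\frac{1}{n-1+\sqrt{n-1}}$ is the inradius, and it is attained at centre $r\mathbf 1$, which is exactly what your own equalization computation in the next paragraph produces. Alternatively, keep centre $\frac1n\mathbf 1$ and use radius $\frac{1}{n\sqrt{n-1}}$.

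A smaller point: the volume bound follows from $U\subseteq[0,1]^{n-1}$, not from $U\subseteq B(0,1)$. The latter only supplies the outer radius $R=1$.
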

\begin{proof} Consider the probability simplex $\Delta_{n-1}$ and 
 the projection $\Pi: \Delta_{n-1} \rightarrow U$ such that $\Pi(x_1, \ldots, x_{n}) = (x_1, \ldots, x_{n-1})$, with inverse as $\Pi^{-1}(x_1, \ldots, x_{n-1}) = (x_1, \ldots, x_{n-1}, 1 - \sum_{i=1}^{n-1} x_{i})$. Clearly, $U \subset \mathbb{R}^{n-1}$ is given by $x_i \geq 0 \; \forall i, \sum_{i=1}^{n-1} x_i \leq 1$ and is convex, compact, and bounded. That $U$ has non-zero volume is also clear as a ball of size $\epsilon$ (measured in any $l_p$ norm) centered at $(2\epsilon, 
 \ldots, 2\epsilon)$ will fit inside $U$ for small enough $\epsilon$. Then, one can consider $U$ as the strategy space and any function $f$ on the probability simplex $\Delta_{n-1}$ can be given as $f(\Pi^{-1}(u))$. Clearly, $\Pi$ and its inverse are linear transformations. It is straightforward to verify that continuity, concavity, and Lipschitzness hold for $f \circ \Pi^{-1}$ if $f$ is Lipschitz. The result for product spaces as the projection over the product space is still a linear transform.
\end{proof}

While the above PPAD-Completeness is general, it is also somewhat contrived as it relies on singleton ambiguity sets and leaves open the question about complexity of the subset of DRGs that have true uncertainty, that is, non-singleton ambiguity sets. However, any such result would require reasoning about properties of specific ambiguity set under consideration and also about the computability of $\rho_i$. As our focus in this work is on CUMG, we show PPAD complexity results for the three CUMG in Example~\ref{example:coherent} and show that these games are in PPAD. 

\paragraph{Proof of Theorem~\ref{thm:complexity}}
\begin{proof}
We follow the proof of Proposition~\ref{prop:genericcomplexity}, namely the \emph{reduction to concave games}, which proves the inclusion in PPAD. We only need to show poly time computability and Lipschitzness of utility function, as the other requirements are already shown in proof of Proposition~\ref{prop:genericcomplexity}.

    First, note that all expectations in the three class of games are for the empirical distribution and hence these can be computed with polynomial many $+,-,\times$. In the mean-semideviation and mean deviation case each term in the expectation  in the second term can be computed by $\max$ (note that $|x| = \max(x,-x)$). For CVaR, note that CVaR can be written as a linear program and hence is polynomial time computable. With these, we can use~\cite[Theorem E.2]{fearnley2022complexity} to claim the utilities are approximated by linear arithmetic circuits, if we show that these utilities are Lipschitz.
    Below we rely on bounding gradients (and supergradients) to infer Lipschitz constants.
    
    We note that the absolute value of the partial derivative of $\mu_i^{\mathbb{P}}(\mathbf{x})$ (see Equation~\ref{eq:puresplit}) for any $x_k(a_{j_k})$ is bounded by $M$ using the bound $M$ on expected utility. Thus, the norm of the gradient of $\mu_i^{\mathbb{P}}$ (this gradient $\nabla_{\mathbf{x}} \mu_i(\mathbf{x})$ is of size $\sum_{i=1}^m n_i $) is bounded by some $D M$ where $D$ depends linearly on the number of players and actions, and choice of norm. Similarly, using $|u_i(\mathbf{a\mid\xi})| \leq M$ we get that the norm of the gradient of $u_i(\mathbf{x\mid\xi})$ for any $\xi$ is bounded by $D M$, that is, $||\nabla_{\mathbf{x}}u_i(\mathbf{x}\mid\xi)|| \leq DM$. The following facts (1) Lipschitz constant of the sum or difference of two Lipschitz functions is the sum of their Lipschitz constants (2) if $f$ is $L$-Lipschitz then the Lipschitz constant of $\max(0,f(x))$ is $L$-Lipschitz and (3) Lipschitz constant of an average of $n$ $L$-Lipschitz function is $L$-Lipschitz, allow us to claim that $\mathbb{E}^{\mathbb{P}}[\max(0, \mu^{\mathbb{P}}_i(\mathbf{x}) - u_i(\mathbf{x}\mid \xi) )]$ is $2DM$ Lipschitz. Then, $\mu^{\mathbb{P}}_i(\mathbf{x}) - 
\gamma_s \mathbb{E}^{\mathbb{P}}[\max(0, \mu^{\mathbb{P}}_i(\mathbf{x}) - u_i(\mathbf{x}\mid \xi) )]$ is $DM + 2 \gamma_s DM$ Lipschitz. 

In a similar manner for mean deviation, noting that if $f$ is $L$-Lipschitz then the Lipschtiz constant of $|f(x))|$ is $L$-Lipschitz allows us to conclude that $\mu^{\mathbb{P}}_i(\mathbf{x}) - 
\gamma_d \mathbb{E}^{\mathbb{P}}[| \mu^{\mathbb{P}}_i(\mathbf{x}) - u_i(\mathbf{x}\mid \xi) |]$ is $DM + 2 \gamma_d DM$ Lipschitz.

For CVaR, the proof is little more involved.
Let \(p_k \coloneq \mathbb{P}(\xi=\xi_k)\). Define $\phi_\alpha(\mathbf{x}, z) \coloneq z+\frac{1}{\alpha}\sum_{k=1}^K p_k\,\min\!\big(0,\,u_i(\mathbf{x}\mid\xi_k)-z\big)$
Then, we can write
$
\mathrm{CVaR}_\alpha(\mathbf{x})
=
\sup_{z\in\mathbb{R}}
\phi_\alpha(\mathbf{x}, z).
$
Let \(z^\star(\mathbf{x})\in\arg\max_z\phi_\alpha(\mathbf{x}, z)\).
By Danskin's theorem, a supergradient \(g(\mathbf{x})\in\partial\,\mathrm{CVaR}_\alpha(\mathbf{x})\) is
\[
g(\mathbf{x})
=
\frac{1}{\alpha}\sum_{k=1}^K p_k\,\theta_k\,\nabla_{\mathbf{x}}u_i(\mathbf{x}\mid\xi_k), \quad 
\mbox{where} \quad
\theta_k\in
\begin{cases}
\{1\}, & u_i(\mathbf{x}\mid\xi_k)< z^\star(\mathbf{x}),\\
\{0\}, & u_i(\mathbf{x}\mid\xi_k)> z^\star(\mathbf{x}),\\
[0,1], & u_i(\mathbf{x}\mid\xi_k)= z^\star(\mathbf{x}).
\end{cases}
\]
Based on the reasoning above, $||\nabla_{\mathbf{x}}u_i(\mathbf{x}\mid\xi_k)|| \leq DM$ for any $k$. Thus, $||g(x)|| \leq DM/ \alpha$. This is also the Lipschitz constant of $\mathrm{CVaR}_\alpha(\mathbf{x})$.
Then, $(1-\gamma_c)\,\mu_i^{\mathbb{P}}(\mathbf{x}) + \gamma_c\,\mathrm{CVaR}_\alpha(\mathbf{x})$ is Lipschitz with constant $(1 - \gamma_c + \frac{\gamma_c}{\alpha}) DM$.
\end{proof}

\subsection{Proofs in Section~\ref*{sec:complementarity}~\nameref*{sec:complementarity}}
\paragraph{Proof of Lemma~\ref{lem:generalDRGComplementarity}}
\begin{proof}
Fix a player $i$ and the opponents' strategy profile $\mathbf{x}_{-i}^*$. By definition of DRE, $\mathbf{x}^*$ is a DRE if and only if $\mathbf{x}_i^*$ solves $\max_{\mathbf{x}_i\in\mathbf{X}_i}\rho_i(\mathbf{x}_i,\mathbf{x}_{-i}^*)$ for every player $i$. Since $u_i(\mathbf{x}_i,\mathbf{x}_{-i}^*\mid \xi)$ is affine in $\mathbf{x}_i$ and $\rho$ is concave, $\rho_i(\cdot,\mathbf{x}_{-i}^*)$ is concave. Therefore, the KKT conditions for this concave maximization problem over the simplex are necessary and sufficient.

Let $\mathbf{g}_i\in\partial_{\mathbf{x}_i}\rho_i(\mathbf{x}_i^*,\mathbf{x}_{-i}^*)$. The KKT conditions imply that there exist $\alpha_i\in\mathbb{R}$ and nonnegative multipliers $\beta_{i,l}\geq 0$ for the constraints $x_i(a_l)\geq 0$ such that $g_{i,l}-\alpha_i+\beta_{i,l}=0$ and $\beta_{i,l}x_i^*(a_l)=0$ for all $l$. Eliminating $\beta_{i,l}$ gives $0\leq\alpha_i-g_{i,l}\perp x_i^*(a_l)\geq 0$, together with $\mathbf{1}^\top\mathbf{x}_i^*=1$ and $\mathbf{x}_i^*\geq 0$. This proves the primal/supergradient form. Since $\alpha_i-g_{i,l}\geq 0$ for all $l$, $\alpha_i$ is an upper bound on all $g_{i,l}$. Also, because $\mathbf{x}_i^*$ lies on the simplex, at least one action has positive probability, and complementarity gives equality for every such action. Hence $\alpha_i=\max_{1\leq l\leq n_i}g_{i,l}$.

We now derive the dual form. Since we are in the finite-sample setting, the risk envelope is represented as a closed subset $U\subseteq\Delta_K$, so the minimum defining $U_i^*(\mathbf x^*)$ is attained. By Theorem~\ref{thm:rs06}, $\rho(X)=\inf_{\mathbb{Q}\in U}\mathbb{E}^{\mathbb{Q}}[X]$. In the finite-sample setting, this is equivalently
\[
\rho_i(\mathbf{x}_i,\mathbf{x}_{-i})
=
\inf_{\mathbf{q}_i\in U_K}
\sum_{k=1}^K q_{i,k}u_i(\mathbf{x}_i,\mathbf{x}_{-i}\mid \xi_k).
\]
For fixed $\mathbf{x}_{-i}^*$, the inner objective is affine in $\mathbf{x}_i$. Therefore, Danskin's theorem gives
\[
\partial_{\mathbf{x}_i}\rho_i(\mathbf{x}_i^*,\mathbf{x}_{-i}^*)
=
\operatorname{conv}\{\mathbf{v}_i(\mathbf{q}_i,\mathbf{x}_{-i}^*):\mathbf{q}_i\in U_i^*(\mathbf{x}^*)\}.
\]
Substituting any vector from this superdifferential into the primal form gives Equation~\ref{eq:generalDualMCP}. If the worst-case dual distribution is unique, the active set $U_i^*(\mathbf{x}^*)$ is a singleton, so the convex hull reduces to the single vector generated by $\mathbf{q}_i^*$. This proves the stated dual form.
\end{proof}

\subsubsection{Details of MSD and MD Complementarity Program}
\label{subsec:MSDcomplementarity}


To formulate the MLCP to solve for the equilibrium of a mean-semideviation game, we consider the problem of a single player $i$, given $\mathbf{x}_{-i}$. Recall the notation $\mu^{\mathbb{P}}_i(\mathbf{x}_i, \mathbf{x}_{-i}) = \mathbb{E}^{\mathbb{P}}[u_i(\mathbf{x}_i, \mathbf{x}_{-i}\mid \xi)] $ and 
$\mu^{\mathbb{Q}}_i(\mathbf{a}) = \mathbb{E}^{\mathbb{Q}}[u_i(\mathbf{a}\mid \xi)] $.
\begin{align}\label{opt::ogMax}
\max_{\mathbf{x}_i \in \mathbf{X}_i} & \quad\mu^{\mathbb{P}}_i(\mathbf{x}_i, \mathbf{x}_{-i}) - 
\gamma_s \mathbb{E}^{\mathbb{P}}[\max(0, \mu^{\mathbb{P}}_i(\mathbf{x}_i, \mathbf{x}_{-i}) - u_i(\mathbf{x}_i, \mathbf{x}_{-i}\mid \xi) )]
\end{align}

Let the auxiliary variable $z_{i,k}$ denote $-\max(0, \mu^{\mathbb{P}}_i(\mathbf{x}_i, \mathbf{x}_{-i}) - u_i(\mathbf{x}_i, \mathbf{x}_{-i}\mid \xi_k) ) = \min(0, u_i(\mathbf{x}_i, \mathbf{x}_{-i}\mid \xi_k ) - \mu^{\mathbb{P}}_i(\mathbf{x}_i, \mathbf{x}_{-i}) )$. Then, the second term in the objective is $\sum_{k=1}^K \mathbb{P}(\xi_k)z_{i,k}$.
Next, we write the terms in the optimization explicitly in the optimization variables, that is, $\mu^{\mathbb{P}}_i(\mathbf{x}_i, \mathbf{x}_{-i}) = \sum_{l=1}^{n_i} x_i(a_l) \mu^{\mathbb{P}}_i(a_l, \mathbf{x}_{-i})$ and $u_i(\mathbf{x}_i, \mathbf{x}_{-i}\mid \xi) = \sum_{l=1}^{n_i} x_i(a_l) u_i(a_l, \mathbf{x}_{-i}\mid \xi)$. 
\begin{subequations} \label{eq:MSDLP}
\begin{align}
\max_{\mathbf{x}_i, \mathbf{z}_i} & \quad \sum_{l=1}^{n_i} x_i(a_l) \mu^{\mathbb{P}}_i(a_l, \mathbf{x}_{-i}) + 
\gamma_s \sum_{k=1}^K \mathbb{P}(\xi_k)z_{i,k}\label{newMax::obj} \\
\text{s.t.}  & \quad z_{i,k} \leq \sum_{l=1}^{n_i} x_i(a_l) u_i(a_l, \mathbf{x}_{-i}\mid \xi_k) - \sum_{l=1}^{n_i} x_i(a_l) \mu^{\mathbb{P}}_i(a_l, \mathbf{x}_{-i}), \quad \forall k \in \{1, \dots, K\} \label{MSDLP::cons1} \\
& \quad z_{i,k} \leq 0, \quad \forall k \in \{1, \dots, K\}  \label{MSDLP::cons2}  \\
& \quad  \mathbf{1}^T\mathbf{x}_i=1 \label{MSDLP::cons3}  \\
& \quad 
x_i(a_l) \geq 0, \quad \forall l \in \{1, \ldots, n_i\}  \label{MSDLP::cons4} 
\end{align}
\end{subequations}
The optimality conditions for the linear program $\eqref{eq:MSDLP}$ yield dual multipliers $\lambda_{i,k}$ for constraints in \eqref{MSDLP::cons1}, $\mu_{i,k}$ for \eqref{MSDLP::cons2}, $\alpha_i$ for \eqref{MSDLP::cons3}, and $\beta_{i,l}$ for \eqref{MSDLP::cons4}.

We introduce two shorthand notation for easier presentation:
\begin{align*}
    v_{i,l}(\mathbf{x}_{-i}) & = \mu^{\mathbb{P}}_i(a_l, \mathbf{x}_{-i}) + \sum_{k=1}^K\lambda_{i,k}\left( u_i(a_l, \mathbf{x}_{-i} \mid \xi_k) - \mu^{\mathbb{P}}_i(a_l, \mathbf{x}_{-i}) \right),\;\\
    d_{i,k}(\mathbf{x}_i, \mathbf{x}_{-i}) & = \sum_{l=1}^{n_i}x_i(a_l) u_i(a_l, \mathbf{x}_{-i} \mid \xi_k) - \sum_{l=1}^{n_i}x_i(a_l) \mu^{\mathbb{P}}_i(a_l, \mathbf{x}_{-i})
\end{align*}
Then, we obtain the following:
\begin{lemma} \label{lem:msdMLCP_main}
Consider the following definition of $\tau$ and substitutions for $\mathbf{x}_i$ and $\mathbf{z}_i$,
\[
    \tau \coloneq \Big( \prod_{j = 1}^m \alpha_j \Big)^{\frac{1}{m-1}} \qquad \tilde{\mathbf{x}}_i \coloneq \frac{\alpha_i}{\tau}\mathbf{x}_i, \quad \forall i  \qquad \tilde{z}_{i,k} \coloneq \frac{z_{i,k}}{\tau}, \quad \forall i,k \;, 
\]
and the MLCP formed by stacking the constraints below for all players
\begin{subequations}
\begin{align}
    & 0 \leq \lambda_{i,k} \perp  \left(d_{i,k}(\tilde{\mathbf{x}}_i, \tilde{\mathbf{x}}_{-i}) - \tilde{z}_{i,k} \right) \geq 0, \quad \forall k \in \{1, \dots, K\}   \\
    & 0 \leq (\gamma_s \mathbb{P}(\xi_k) - \lambda_{i,k})  \perp (-\tilde{z}_{i,k}) \geq 0, \quad \forall k \in \{1, \dots, K\}  \\
    & 0 \leq (1 - v_{i,l}(\tilde{\mathbf{x}}_{-i})) \perp \tilde{x}_i(a_l) \geq 0, \quad \forall l \in \{1, \dots, n_i\}
\end{align}
\end{subequations}
Any non-zero $\tilde{\mathbf{x}}_i$ (forall $i$) MLCP solutions correspond to DREs of the mean-semideviation game, 
with the mixed strategies and game values retrieved by
\[
    \mathbf{x}_i = \frac{\tilde{\mathbf{x}}_i}{\sum_{l=1}^{n_i} \tilde{x}_i(a_l)} \qquad \alpha_i = \frac{\sum_{l=1}^{n_i} \tilde{x}_i(a_l)}{\prod_{i=1}^m\sum_{l=1}^{n_i} \tilde{x}_i(a_l)}
\]
\end{lemma}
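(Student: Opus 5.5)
The plan is to derive the KKT system of the linear program \eqref{eq:MSDLP} for each player with $\mathbf{x}_{-i}$ fixed, and then remove the multipliers $\alpha_i$ by a homogeneous rescaling. This mirrors the CVaR construction of Lemma~\ref{lem:CVaRMLCP_main} and the standard Nash-equilibrium MLCP.

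\textbf{Step 1 (LP equivalence).} For fixed $\mathbf{x}_{-i}$, the objective and constraints of \eqref{eq:MSDLP} are affine in $(\mathbf{x}_i,\mathbf{z}_i)$. Since $\gamma_s\mathbb{P}(\xi_k)\ge 0$, some optimal solution has $z_{i,k}=\min(0,d_{i,k}(\mathbf{x}_i,\mathbf{x}_{-i}))$, so the LP value equals the MSD best-response value in \eqref{opt::ogMax}. Hence $\mathbf{x}^*$ is a DRE iff, for every $i$, $\mathbf{x}_i^*$ together with some $\mathbf{z}_i$ is LP-optimal. For LPs, KKT conditions are necessary and sufficient, which matches the concavity argument of Lemma~\ref{lem:generalDRGComplementarity}.

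\textbf{Step 2 (KKT conditions).} With multipliers $\lambda_{i,k},\mu_{i,k},\alpha_i,\beta_{i,l}$ as named in the text:
\begin{itemize}
\item Stationarity in $z_{i,k}$ gives $\gamma_s\mathbb{P}(\xi_k)-\lambda_{i,k}-\mu_{i,k}=0$. Together with $\mu_{i,k}\ge 0$ and $\mu_{i,k}z_{i,k}=0$, this yields $0\le(\gamma_s\mathbb{P}(\xi_k)-\lambda_{i,k})\perp(-z_{i,k})\ge 0$.
\item Complementary slackness on \eqref{MSDLP::cons1} gives $0\le\lambda_{i,k}\perp(d_{i,k}-z_{i,k})\ge 0$.
\item Stationarity in $x_i(a_l)$ gives $v_{i,l}(\mathbf{x}_{-i})-\alpha_i+\beta_{i,l}=0$. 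Eliminating $\beta_{i,l}$ yields $0\le\alpha_i-v_{i,l}(\mathbf{x}_{-i})\perp x_i(a_l)\ge 0$.
\end{itemize}
The $\lambda$ terms are not rescaled; they enter the later constraints only through $v_{i,l}$.

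\textbf{Step 3 (homogeneous rescaling).} I will use that $v_{i,l}(\mathbf{x}_{-i})$ is multilinear of degree $m-1$ in $(\mathbf{x}_j)_{j\neq i}$, and $d_{i,k}$ is multilinear of degree $m$ in all $\mathbf{x}_j$. Substituting $\mathbf{x}_j=(\tau/\alpha_j)\tilde{\mathbf{x}}_j$ and using $\tau^{m-1}=\prod_j\alpha_j$ gives the following identities:
\begin{align*}
v_{i,l}(\mathbf{x}_{-i}) &= \prod_{j\neq i}\frac{\tau}{\alpha_j}\,v_{i,l}(\tilde{\mathbf{x}}_{-i}) = \alpha_i\, v_{i,l}(\tilde{\mathbf{x}}_{-i}),\\
d_{i,k}(\mathbf{x}) &= \frac{\tau^m}{\prod_j\alpha_j}\,d_{i,k}(\tilde{\mathbf{x}}) = \tau\, d_{i,k}(\tilde{\mathbf{x}}).
\end{align*}
Consequently $\alpha_i-v_{i,l}=\alpha_i(1-v_{i,l}(\tilde{\mathbf{x}}_{-i}))$ and $d_{i,k}-z_{i,k}=\tau(d_{i,k}(\tilde{\mathbf{x}})-\tilde z_{i,k})$. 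Multiplying each side of a complementarity pair by a positive scalar preserves it, so the KKT system becomes exactly the stated MLCP.

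\textbf{Step 4 (converse).} Given an MLCP solution with every $\tilde{\mathbf{x}}_i\ne 0$, set $s_i=\sum_l\tilde{x}_i(a_l)>0$. The relation $\tilde{\mathbf{x}}_i=(\alpha_i/\tau)\mathbf{x}_i$ forces $s_i=\alpha_i/\tau$. Then $\prod_i s_i=\prod_i\alpha_i/\tau^m=1/\tau$, so $\alpha_i=\tau s_i=s_i/\prod_j s_j$, which matches the stated retrieval formulas. Reversing the scaling in Step 3 shows that $(\mathbf{x}_i=\tilde{\mathbf{x}}_i/s_i,\ z_{i,k}=\tau\tilde z_{i,k},\ \lambda_{i,k},\ \alpha_i)$ satisfies the KKT system, so $\mathbf{x}$ is a DRE.

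\textbf{Main obstacle.} The delicate point is positivity of the $\alpha_i$, which is needed both to define $\tau$ and for the scaling to preserve complementarity.
\begin{itemize}
\item \emph{Forward direction.} By Lemma~\ref{lem:generalDRGComplementarity}, $\alpha_i=\max_l v_{i,l}$. Since $\sum_k\lambda_{i,k}\le\gamma_s\le 1$, we have $v_{i,l}=(1-\sum_k\lambda_{i,k})\mu^{\mathbb{P}}_i(a_l,\cdot)+\sum_k\lambda_{i,k}u_i(a_l,\cdot\mid\xi_k)\ge 0$ because payoffs lie in $[0,1]$. Strict positivity is then obtained by shifting all payoffs by a small constant, which is harmless by translation equivariance (A3).
\item \emph{Converse direction.} Positivity is automatic: $s_i>0$ gives $\alpha_i>0$, and $\tau>0$.
\end{itemize}
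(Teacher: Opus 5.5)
Your proposal is correct and takes essentially the same route as the paper. It derives the KKT conditions of the LP \eqref{eq:MSDLP}, eliminates $\mu_{i,k}$ and $\beta_{i,l}$ to obtain the mixed system of Lemma~\ref{lem:msdMCP}, and then uses the degree-$(m-1)$ and degree-$m$ homogeneity rescaling with $\tau^{m-1}=\prod_j\alpha_j$ to drop the equality constraint and recover $\mathbf{x}_i,\alpha_i$ from $s_i=\sum_l\tilde x_i(a_l)$. Your handling of positivity is tidier than the paper's, which only shows $\alpha_i\ge 0$ before assuming $\alpha_i>0$: the converse direction that the statement actually claims needs nothing beyond $s_i>0$, and your payoff shift via (A3) correctly covers the forward-direction case $\alpha_i=0$, where $\tau=0$ and the rescaling would otherwise break down.
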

For the special case of two players, $m=2$, let $(A^\xi,B^\xi)$ denote the payoff matrices contingent on $\xi$, $\big(A = \mathbb{E}^\mathbb{P}\left[A^\xi\right], B = \mathbb{E}^\mathbb{P}\left[B^\xi\right]\big)$. Then, we obtain the following multilinear complementarity program:
\begin{equation}\label{opt::twoPlayerFinal_main}
\begin{gathered}
    0 \leq \lambda_k^1 \perp \tilde{\mathbf{x}}^\top(A^{\xi_k} -  A)\tilde{\mathbf{y}} - \tilde{z}_k^1 \geq 0\\
    0 \leq \gamma_s \mathbb{P}(\xi_k) - \lambda^1_k \perp -\tilde{z}_k^1 \geq 0\\
    0 \leq 1 - \tilde{v}_{l_1}^1(\tilde{\mathbf{y}}) \perp \tilde{x}(a_{l_1}) \geq 0
\end{gathered}
\qquad
\begin{gathered}
    0 \leq \lambda_k^2 \perp \tilde{\mathbf{x}}^\top(B^{\xi_k} -  B)\tilde{\mathbf{y}} - \tilde{z}_k^2 \geq 0,\\
    0 \leq \gamma_s \mathbb{P}(\xi_k) - \lambda^2_k \perp -\tilde{z}_k^2 \geq 0,\\
    0 \leq 1 - \tilde{v}_{l_2}^2(\tilde{\mathbf{x}}) \perp \tilde{y}(a_{l_2}) \geq 0
\end{gathered}
\qquad
\begin{gathered}
    \forall k \in \{1, \dots, K\}\\
    \forall k \in \{1, \dots, K\}\\
    \forall l_i \in \{1, \dots, n_i\}
\end{gathered}
\end{equation}
In particular, unlike standard games, the two-player MSD system remains bilinear in the scaled strategies and is therefore not a linear complementarity program of the Lemke-Howson type.

\textbf{Mean-deviation}: We note that the mean-deviation game has a very similar setup and result as above, starting with an auxiliary variable $z_{i,k} = - |\mu^{\mathbb{P}}_i(\mathbf{x}_i, \mathbf{x}_{-i}) - u_i(\mathbf{x}_i, \mathbf{x}_{-i}\mid \xi_k)|$ and constraints $z_{i,k} \leq  \mu^{\mathbb{P}}_i(\mathbf{x}_i, \mathbf{x}_{-i}) - u_i(\mathbf{x}_i, \mathbf{x}_{-i}\mid \xi_k)$ and $z_{i,k} \leq  u_i(\mathbf{x}_i, \mathbf{x}_{-i}\mid \xi_k) -\mu^{\mathbb{P}}_i(\mathbf{x}_i, \mathbf{x}_{-i}) $. The rest of the steps are tedious but very similar to the mean-semideviation game and therefore omitted.

We prove the result of Lemma~\ref{lem:msdMLCP_main} by splitting it into a lemma and its corollary below. The corollary is same as  Lemma~\ref{lem:msdMLCP_main}.

\begin{lemma} \label{lem:msdMCP}
    Consider the following set of constraints for player $i$
\begin{subequations}
\begin{align}
    & 0 \leq \lambda_{i,k} \perp  \left(d_{i,k}(\mathbf{x}_i, \mathbf{x}_{-i}) - z_{i,k} \right) \geq 0, \quad \forall k \in \{1, \dots, K\} \\
    & 0 \leq (\gamma_s \mathbb{P}(\xi_k) - \lambda_{i,k} ) \perp (-z_{i,k}) \geq 0, \quad \forall k \in \{1, \dots, K\} \\
    & 0 \leq (\alpha_i - v_{i,l}(\mathbf{x}_{-i})) \perp x_i(a_l) \geq 0, \quad \forall l \in \{1, \dots, n_i\} \\
    & \mathbf{1}^\top\mathbf{x}_i = 1, \quad \alpha_i \in \mathbb{R} 
\end{align}
\end{subequations}
Stack the above for all players to get a mixed complementarity program whose solutions correspond to DREs of the mean-semideviation game. 
Also, $\sum_l^{n_i} x_i(a_l)v_{i,l}(\mathbf{x}_{-i}) = \alpha_i$, that is, $\alpha_i$ is the value of the game for player $i$. 
\end{lemma}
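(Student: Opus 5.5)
The statement to prove is Lemma~\ref{lem:msdMCP}. The plan is to apply LP duality to player $i$'s best-response problem. Fix $\mathbf{x}_{-i}$. Then \eqref{eq:MSDLP} is a linear program in $(\mathbf{x}_i,\mathbf{z}_i)$, because every coefficient $\mu^{\mathbb{P}}_i(a_l,\mathbf{x}_{-i})$ and $u_i(a_l,\mathbf{x}_{-i}\mid\xi_k)$ is a constant once $\mathbf{x}_{-i}$ is fixed.

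First I will check that \eqref{eq:MSDLP} is equivalent to \eqref{opt::ogMax}. Since $\gamma_s\mathbb{P}(\xi_k)\ge 0$, at any optimum each $z_{i,k}$ is pushed up to $\min(0,d_{i,k})$, which is exactly the negated semideviation term. So $\mathbf{x}_i$ is a best response in the MSD game if and only if there is some $\mathbf{z}_i$ with $(\mathbf{x}_i,\mathbf{z}_i)$ optimal for \eqref{eq:MSDLP}. The LP is feasible and bounded (take $\mathbf{x}_i$ in the simplex and $\mathbf{z}_i$ very negative). Hence KKT conditions with multipliers $\lambda_{i,k},\mu_{i,k},\alpha_i,\beta_{i,l}$ are necessary and sufficient for optimality, and no constraint qualification is needed because all constraints are linear.

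Next I will write out stationarity. In $z_{i,k}$ it gives $\gamma_s\mathbb{P}(\xi_k)-\lambda_{i,k}-\mu_{i,k}=0$. In $x_i(a_l)$ it gives
\[
\mu^{\mathbb{P}}_i(a_l,\mathbf{x}_{-i})+\sum_k\lambda_{i,k}\bigl(u_i(a_l,\mathbf{x}_{-i}\mid\xi_k)-\mu^{\mathbb{P}}_i(a_l,\mathbf{x}_{-i})\bigr)-\alpha_i+\beta_{i,l}=0,
\]
that is, $v_{i,l}-\alpha_i+\beta_{i,l}=0$. I will then eliminate $\mu_{i,k}=\gamma_s\mathbb{P}(\xi_k)-\lambda_{i,k}\ge0$ and $\beta_{i,l}=\alpha_i-v_{i,l}\ge 0$. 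After this, primal feasibility, dual feasibility and complementary slackness for \eqref{MSDLP::cons1}, \eqref{MSDLP::cons2} and \eqref{MSDLP::cons4} become exactly the three complementarity families in the lemma, together with $\mathbf{1}^\top\mathbf{x}_i=1$ and $\alpha_i$ free.

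Stacking these conditions over all $i$ characterizes profiles in which every $\mathbf{x}_i$ is a best response to $\mathbf{x}_{-i}$, which is the definition of a DRE in \eqref{def::drEquilibrium}. This gives both directions: every solution of the stacked system projects to a DRE, and every DRE extends to a solution via its LP multipliers.

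For the value claim I will multiply the third complementarity by $x_i(a_l)$ and sum over $l$, which gives $\alpha_i=\sum_l x_i(a_l)v_{i,l}$. To identify this with the game value, I will use strong duality, or equivalently expand complementarity. Summing $\lambda_{i,k}(d_{i,k}-z_{i,k})=0$ gives $\sum_k\lambda_{i,k}d_{i,k}=\sum_k\lambda_{i,k}z_{i,k}$. The second condition gives $\lambda_{i,k}z_{i,k}=\gamma_s\mathbb{P}(\xi_k)z_{i,k}$. Therefore $\sum_l x_i(a_l)v_{i,l}=\mu^{\mathbb{P}}_i(\mathbf{x})+\gamma_s\sum_k\mathbb{P}(\xi_k)z_{i,k}$, which is the objective \eqref{newMax::obj}, and this equals $\rho_{\mathrm{MSD}}$ at the optimum.

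The main care point is the first step: I need the LP reformulation to be exact, so that sufficiency holds in the game and not just in the LP. This relies on $\gamma_s\ge0$ making $z_{i,k}$ tight. I also need to keep the bookkeeping straight that $d_{i,k}$ is linear in $\mathbf{x}_i$ with coefficient $u_i(a_l,\cdot\mid\xi_k)-\mu^{\mathbb{P}}_i(a_l,\cdot)$.
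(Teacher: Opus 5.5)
Your proposal is correct and follows essentially the same route as the paper: you write the KKT conditions of player $i$'s best-response LP, use stationarity to eliminate $\mu_{i,k}$ and $\beta_{i,l}$, and sum the third complementarity family to get $\alpha_i=\sum_l x_i(a_l)v_{i,l}(\mathbf{x}_{-i})$. Your additional steps spell out what the paper leaves implicit. These are the exactness of the LP reformulation, KKT sufficiency in both directions of the DRE correspondence, and the complementarity expansion showing that $\sum_l x_i(a_l)v_{i,l}$ equals the MSD objective, so that $\alpha_i$ really is $\rho_i(\mathbf{x})$.
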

Note that the above is a \emph{mixed} complementarity program with an equality constraint, whereas in standard games one generally obtains a multilinear complementarity program (MLCP) in the standard form without any equality constraint. This construction is possible here:
\begin{corollary} \label{cor:msdMLCP}
Consider the following definition of $\tau$ and substitutions for $\mathbf{x}_i$ and $\mathbf{z}_i$,
\[
    \tau \coloneq \Big( \prod_{j = 1}^m \alpha_j \Big)^{\frac{1}{m-1}} \qquad \tilde{\mathbf{x}}_i \coloneq \frac{\alpha_i}{\tau}\mathbf{x}_i, \quad \forall i  \qquad \tilde{z}_{i,k} \coloneq \frac{z_{i,k}}{\tau}, \quad \forall i,k \;, 
\]
and the multilinear complementarity program formed by stacking the constraints below for all players
\begin{subequations}
\begin{align}
    & 0 \leq \lambda_{i,k} \perp  \left(d_{i,k}(\tilde{\mathbf{x}}_i, \tilde{\mathbf{x}}_{-i}) - \tilde{z}_{i,k} \right) \geq 0, \quad \forall k \in \{1, \dots, K\}   \\
    & 0 \leq (\gamma_s \mathbb{P}(\xi_k) - \lambda_{i,k})  \perp (-\tilde{z}_{i,k}) \geq 0, \quad \forall k \in \{1, \dots, K\}  \\
    & 0 \leq (1 - v_{i,l}(\tilde{\mathbf{x}}_{-i})) \perp \tilde{x}_i(a_l) \geq 0, \quad \forall l \in \{1, \dots, n_i\}
\end{align}
\end{subequations}
    Any non-zero $\tilde{\mathbf{x}}_i$ (forall $i$) solution of the above program gives a solution to the program of Lemma~\ref{lem:msdMCP}, 
with the mixed strategies and game values retrieved by
\[
    \mathbf{x}_i = \frac{\tilde{\mathbf{x}}_i}{\sum_{l=1}^{n_i} \tilde{x}_i(a_l)} \qquad \alpha_i = \frac{\sum_{l=1}^{n_i} \tilde{x}_i(a_l)}{\prod_{i=1}^m\sum_{l=1}^{n_i} \tilde{x}_i(a_l)}
\]
\end{corollary}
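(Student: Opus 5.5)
The plan is a direct rescaling argument. We take an MLCP solution, normalize it, and check that every complementarity pair of Lemma~\ref{lem:msdMCP} is a positive multiple of the corresponding pair in the MLCP. The key fact is multilinearity. For fixed $l$ and $k$:
\begin{itemize}
\item $v_{i,l}(\mathbf{x}_{-i})$ is a multilinear form of total degree $m-1$ in $(\mathbf{x}_j)_{j\neq i}$, since $\mu^{\mathbb{P}}_i(a_l,\cdot)$ and $u_i(a_l,\cdot\mid\xi_k)$ are linear in each $\mathbf{x}_j$ and the $\lambda_{i,k}$ enter as fixed coefficients.
\item $d_{i,k}(\mathbf{x}_i,\mathbf{x}_{-i})$ is multilinear of total degree $m$ in $(\mathbf{x}_1,\dots,\mathbf{x}_m)$.
\end{itemize}
Hence, if $\tilde{\mathbf{x}}_j=s_j\mathbf{x}_j$ with $s_j>0$, then $v_{i,l}(\tilde{\mathbf{x}}_{-i})=\big(\prod_{j\neq i}s_j\big)v_{i,l}(\mathbf{x}_{-i})$ and $d_{i,k}(\tilde{\mathbf{x}})=\big(\prod_j s_j\big)d_{i,k}(\mathbf{x})$.

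Concretely, let $(\tilde{\mathbf{x}},\tilde{\mathbf{z}},\boldsymbol{\lambda})$ solve the MLCP with every $\tilde{\mathbf{x}}_i\neq 0$. Since $\tilde{\mathbf{x}}_i\ge 0$, we have $s_i:=\sum_l\tilde{x}_i(a_l)>0$. Set $S:=\prod_j s_j$ and define
\[
\mathbf{x}_i:=\tilde{\mathbf{x}}_i/s_i,\qquad \alpha_i:=s_i/S,\qquad z_{i,k}:=\tilde{z}_{i,k}/S,
\]
keeping $\lambda_{i,k}$ unchanged. Then $\mathbf{1}^\top\mathbf{x}_i=1$ and $\alpha_i\in\mathbb{R}$, which are the retrieval formulas in the statement. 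We also check consistency with the stated definition of $\tau$. We have $\prod_j\alpha_j=S\cdot S^{-m}=S^{1-m}$, so $\tau=S^{-1}$. This gives $\alpha_i/\tau=s_i$, so $\tilde{\mathbf{x}}_i=(\alpha_i/\tau)\mathbf{x}_i$, and $\tilde z_{i,k}=z_{i,k}/\tau$, exactly as prescribed.

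Next we verify the three families of constraints of Lemma~\ref{lem:msdMCP}.
\begin{enumerate}
\item By degree-$m$ homogeneity, $d_{i,k}(\tilde{\mathbf{x}})-\tilde z_{i,k}=S\big(d_{i,k}(\mathbf{x})-z_{i,k}\big)$. Since $S>0$, nonnegativity and $\lambda_{i,k}\cdot(\,\cdot\,)=0$ transfer.
\item We have $-\tilde z_{i,k}=S(-z_{i,k})$. The multiplier $\gamma_s\mathbb{P}(\xi_k)-\lambda_{i,k}$ is unchanged, so this complementarity transfers.
\item By degree-$(m-1)$ homogeneity, $1-v_{i,l}(\tilde{\mathbf{x}}_{-i})=1-(S/s_i)v_{i,l}(\mathbf{x}_{-i})=(S/s_i)\big(\alpha_i-v_{i,l}(\mathbf{x}_{-i})\big)$. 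Also $\tilde x_i(a_l)=s_i x_i(a_l)$. Both factors are positive rescalings, so $0\le \alpha_i-v_{i,l}\perp x_i(a_l)\ge 0$ holds.
\end{enumerate}
Thus the rescaled tuple solves the mixed complementarity program of Lemma~\ref{lem:msdMCP}. By that lemma it is a DRE, with $\alpha_i=\sum_l x_i(a_l)v_{i,l}(\mathbf{x}_{-i})$ the value for player $i$. This identity can also be read off directly: sum the third complementarity condition against $x_i(a_l)$ and use $\mathbf{1}^\top\mathbf{x}_i=1$.

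The only delicate step is the homogeneity bookkeeping. We must confirm that every term inside $d_{i,k}$ and $v_{i,l}$ has exactly the claimed degree. In particular, the centered term $u_i(a_l,\mathbf{x}_{-i}\mid\xi_k)-\mu^{\mathbb{P}}_i(a_l,\mathbf{x}_{-i})$ is still linear in each opponent's strategy, because $\mu^{\mathbb{P}}_i$ is an average of such terms. No constant (degree-zero) term sneaks in. That is what makes the exponent $1/(m-1)$ in $\tau$ produce exactly the factor $S/s_i$ above.

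The nonzero hypothesis on each $\tilde{\mathbf{x}}_i$ is used only to make $s_i>0$. Without it the normalization is undefined. This is the analogue of discarding the trivial artificial solution in Lemke--Howson-type formulations.
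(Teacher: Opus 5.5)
Your proof is correct and is essentially the paper's argument. Both rest on the same fact: with $\lambda_{i,k}$ held fixed, $d_{i,k}$ is multilinear of degree $m$ and $v_{i,l}$ is multilinear of degree $m-1$, so each MLCP complementarity pair is a positive multiple of the corresponding pair in Lemma~\ref{lem:msdMCP}. The only difference is direction. You argue MLCP $\to$ MCP, so $\alpha_i = s_i/S > 0$ comes for free, whereas the paper first establishes $0 \le \alpha_i \le 1$ on the MCP side before introducing the substitution.
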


\paragraph{Proof of Lemma~\ref{lem:msdMCP} and Corollary~\ref{cor:msdMLCP}}
\begin{proof}

\textbf{Lemma Proof}: The optimality conditions for the linear program $\eqref{eq:MSDLP}$ yield dual multipliers $\lambda_{i,k}$ for constraints in \eqref{MSDLP::cons1}, $\mu_{i,k}$ for \eqref{MSDLP::cons2}, $\alpha_i$ for \eqref{MSDLP::cons3}, and $\beta_{i,l}$ for \eqref{MSDLP::cons4}, yielding:
\begin{subequations}
\begin{align}
& \underbrace{\mu^{\mathbb{P}}_i(a_l, \mathbf{x}_{-i}) + \sum_{k=1}^K\lambda_{i,k}\left( u_i(a_l, \mathbf{x}_{-i} \mid \xi_k) - \mu^{\mathbb{P}}_i(a_l, \mathbf{x}_{-i}) \right)}_{v_{i,l}(\mathbf{x}_{-i})} - \alpha_i + \beta_{i,l} = 0, \quad \forall l \in \{1,\dots,n_i \} \label{mcp::x} \\
& \gamma_s \mathbb{P}(\xi_k) - \lambda_{i,k} - \mu_{i,k} = 0, \quad  \forall k \in \{1, \dots, K\} \label{mcp::z} \\
& 0 \leq \lambda_{i,k} \perp  \left(\sum_{l=1}^{n_i}x_i(a_l) u_i(a_l, \mathbf{x}_{-i} \mid \xi_k) - \sum_{l=1}^{n_i}x_i(a_l) \mu^{\mathbb{P}}_i(a_l, \mathbf{x}_{-i}) - z_{i,k} \right) \geq 0, \quad  \forall k \in \{1, \dots, K\} \\
& 0 \leq \mu_{i,k}  \perp (-z_{i,k}) \geq 0, \quad  \forall k \in \{1, \dots, K\} \\
& 0 \leq \beta_{i,l} \perp x_i(a_l) \geq 0, \quad  \forall l \in \{1, \dots, n_i\} \label{mcp::suppConst} \\
& \mathbf{1}^\top\mathbf{x}_i = 1, \quad  \alpha_i \in \mathbb{R}
\end{align}
\end{subequations}
Note the notation $v_{i,l}$ above, which we call as value for action $l$. \eqref{mcp::suppConst} implies that actions in the support of MSD game equilibrium strategy have equal risk-adjusted reward $v_{i,l} (\mathbf{x}_{-i})=\alpha_i$ for all actions $l$. Simplifying by substituting $\beta_{i,l}$ and $\mu_{i,k}$ using \eqref{mcp::x} and \eqref{mcp::z}, we get the program:

\begin{subequations}\label{opt::newMax2}
\begin{align}
    & 0 \leq \lambda_{i,k} \perp  \left(\underbrace{\sum_{l=1}^{n_i}x_i(a_l) u_i(a_l, \mathbf{x}_{-i} \mid \xi_k) - \sum_{l=1}^{n_i}x_i(a_l) \mu^{\mathbb{P}}_i(a_l, \mathbf{x}_{-i})}_{d_{i,k}(\mathbf{x}_i,\mathbf{x}_{-i})} - z_{i,k} \right) \geq 0, \quad \forall k \in \{1, \dots, K\} \label{newMax2::msdCond} \\
    & 0 \leq (\gamma_s \mathbb{P}(\xi_k) - \lambda_{i,k} ) \perp (-z_{i,k}) \geq 0, \quad \forall k \in \{1, \dots, K\} \label{newMax2::slackCond} \\
    & 0 \leq (\alpha_i - v_{i,l}(\mathbf{x}_{-i})) \perp x_i(a_l) \geq 0, \quad \forall l \in \{1, \dots, n_i\} \label{newMax2::value}\\
    & \mathbf{1}^\top\mathbf{x}_i = 1, \quad \alpha_i \in \mathbb{R} \label{newMax2::equality}
\end{align}
\end{subequations}

Eq.~\eqref{newMax2::value} implies:
\begin{align}
    (\alpha_i - v_{i,l}(\mathbf{x}_{-i}))x_i(a_l) = 0 \implies \sum_l^{n_i} x_i(a_l)v_{i,l}(\mathbf{x}_{-i}) = \alpha_i\label{valueConst2}
\end{align}

In other words, $\alpha_i$ is the value of the game for player $i$. Note that at equilibrium $v_{i,l}(\mathbf{x}_{-i})$ also depends on $\mathbf x_i$ via the dual variables $\lambda_{i,k}$.

Additionally, observe the following interesting side-result: complementarity conditions \eqref{newMax2::msdCond} and \eqref{newMax2::slackCond} show that
\begin{align*}
        d_{i,k}(\mathbf{x}_i,\mathbf{x}_{-i}) > 0 &\implies (z_{i,k}, \lambda_{i,k}) = (0, 0)\\
        d_{i,k}(\mathbf{x}_i,\mathbf{x}_{-i}) < 0 &\implies (z_{i,k}, \lambda_{i,k}) = (d_{i,k}(\mathbf{x}_i,\mathbf{x}_{-i}), \gamma_s\mathbb{P}(\xi_k))\\
        d_{i,k}(\mathbf{x}_i,\mathbf{x}_{-i}) = 0 &\implies z_{i,k} = 0, \ \lambda_{i,k} \in [0, \gamma_s\mathbb{P}(\xi_k)]
\end{align*}

\textbf{Corollary Proof}: First, we show that $\alpha_i \geq 0$.
 We know that $0 \leq \gamma_s \leq 1$. From Eq.~\ref{mcp::z} we know that $\lambda_{i,k} \leq \gamma_s \mathbb{P}(\xi_k)$. 
Then $\sum_{k=1}^K \lambda_{i,k}  \leq \gamma_s \leq 1$ and, since $0 \leq u^i(\mathbf{a}\mid\xi) \leq 1,\forall \mathbf{a}$, we have $ \mu^{\mathbb{P}}_i(a_l, \mathbf{x}_{-i}) $ and $ \{ \mu^{\mathbb{P}}_i(a_l, \mathbf{x}_{-i} \mid \xi_k  \}_{k=1}^K$ are all between $0$ and $1$.
$v_{i,l}(\mathbf{x}_{-i})$ is a convex combination of $\{ \mu^{\mathbb{P}}_i(a_l, \mathbf{x}_{-i} \}$ and $ \{ \mu^{\mathbb{P}}_i(a_l, \mathbf{x}_{-i} \mid \xi_k  \}_{k=1}^K$ as the equation below shows. Hence, $0 \leq v_{i,l}(\mathbf{x}_{-i}) \leq 1$ for all $l$, and by Eq.~\ref{valueConst2} $\alpha_i$ is a convex combination of $\{ v_{i,l}(\mathbf{x}_{-i}) \}_{l=1}^{n_i}$ and hence, $0 \leq \alpha_i \leq 1$.
\begin{align*}
    v_{i,l}(\mathbf{x}_{-i}) & = \mu^{\mathbb{P}}_i(a_l, \mathbf{x}_{-i}) + \sum_{k=1}^K\lambda_{i,k}\left( u_i(a_l, \mathbf{x}_{-i} \mid \xi_k) - \mu^{\mathbb{P}}_i(a_l, \mathbf{x}_{-i}) \right) \\
    & = (1 - \sum_{k=1}^K\lambda_{i,k})\mu^{\mathbb{P}}_i(a_l, \mathbf{x}_{-i}) + \sum_{k=1}^K\lambda_{i,k} u_i(a_l, \mathbf{x}_{-i} \mid \xi_k)
\end{align*}
To obtain a multilinear complementarity program, we need to remove the equality constraint \eqref{newMax2::equality}. We start with a non-zero solution of the MLCP stated in corollary statement, which yields $\alpha_i > 0$. Then, we consider the following definitions for $\mathbf{x}_i$ and $z_{i,k}$,
\begin{equation}
    \tau \coloneq \left( \prod_{j = 1}^m \alpha_j \right)^{\frac{1}{m-1}} \qquad \tilde{\mathbf{x}}_i \coloneq \frac{\alpha_i}{\tau}\mathbf{x}_i, \quad \forall i  \qquad \tilde{z}_{i,k} \coloneq \frac{z_{i,k}}{\tau}, \quad \forall i,k \label{eq:transformMSD}
\end{equation}
With this observe that 
\begin{align*}
    \sum_{l=1}^{n_i}x_i(a_l) u_i(a_l, \mathbf{x}_{-i} \mid \xi_k) & = \sum_{l=1}^{n_i}x_i(a_l) \sum_{\mathbf{a}_{-i} \in \mathbf{A}_{-i}} u_i(a_l, \mathbf{a}_{-i} \mid \xi_k) \prod_{j=1, j \neq i}^m x_j (a_{-i, j})\\
    & = \sum_{l=1}^{n_i}x_i(a_l) \sum_{\mathbf{a}_{-i} \in \mathbf{A}_{-i}} u_i(a_l, \mathbf{a}_{-i} \mid \xi_k) \tau^{m-1} \prod_{j=1, j \neq i}^m \tilde{x}_j (a_{-i, j}) / \alpha_j\\
    & = \sum_{l=1}^{n_i}x_i(a_l) \sum_{\mathbf{a}_{-i} \in \mathbf{A}_{-i}} u_i(a_l, \mathbf{a}_{-i} \mid \xi_k)  \prod_{j = 1}^m \alpha_j \prod_{j=1, j \neq i}^m \tilde{x}_j (a_{-i, j}) / \alpha_j\\
    & = \sum_{l=1}^{n_i}x_i(a_l) \sum_{\mathbf{a}_{-i} \in \mathbf{A}_{-i}} u_i(a_l, \mathbf{a}_{-i} \mid \xi_k)  \alpha_i \prod_{j=1, j \neq i}^m \tilde{x}_j (a_{-i, j}) \\
    & = \sum_{l=1}^{n_i} \alpha_i x_i(a_l) \sum_{\mathbf{a}_{-i} \in \mathbf{A}_{-i}} u_i(a_l, \mathbf{a}_{-i} \mid \xi_k)   \prod_{j=1, j \neq i}^m \tilde{x}_j (a_{-i, j}) \\
    & = \tau \sum_{l=1}^{n_i}\tilde{x}_i(a_l) u_i(a_l, \tilde{\mathbf{x}}_{-i} \mid \xi_k)
\end{align*}
Similarly, we get $\sum_{l=1}^{n_i}x_i(a_l) \mu^{\mathbb{P}}_i(a_l, \mathbf{x}_{-i}) = \tau \sum_{l=1}^{n_i} \tilde{x}_i(a_l) \mu^{\mathbb{P}}_i(a_l, \tilde{\mathbf{x}}_{-i})$ and $v_{i,l}(\mathbf{x}_{-i}) = \alpha_i v_{i,l}(\tilde{\mathbf{x}}_{-i})$. Note that complementarity constraints remain the same when multiplied by any constant, and we obtain MCP \eqref{opt::newMax2}, except for the equality constraint, by multiplying these common factors $\tau$ for Eq.~\ref{newMax3::msdCond},~\ref{newMax3::slackCond} and $\alpha_i$ and $\tau$ for Eq.~\ref{newMax3::value} in the following MLCP (which is same as in the corollary statement). 
\begin{subequations}\label{opt::newMax3}
\begin{align}
    & 0 \leq \lambda_{i,k} \perp  \left(\sum_{l=1}^{n_i}\tilde{x}_i(a_l) u_i(a_l, \tilde{\mathbf{x}}_{-i} \mid \xi_k) - \sum_{l=1}^{n_i}\tilde{x}_i(a_l) \mu^{\mathbb{P}}_i(a_l, \tilde{\mathbf{x}}_{-i}) - \tilde{z}_{i,k} \right) \geq 0, \quad \forall k \in \{1, \dots, K\}   \label{newMax3::msdCond} \\
    & 0 \leq (\gamma_s \mathbb{P}(\xi_k) - \lambda_{i,k})  \perp (-\tilde{z}_{i,k}) \geq 0, \quad \forall k \in \{1, \dots, K\} \label{newMax3::slackCond} \\
    & 0 \leq (1 - v_{i,l}(\tilde{\mathbf{x}}_{-i})) \perp \tilde{x}_i(a_l) \geq 0, \quad \forall l \in \{1, \dots, n_i\}\label{newMax3::value} 
\end{align}
\end{subequations}
As the MLCP is just a scaled version of the MCP, it is easy to check that any non-zero solution of the MLCP \eqref{opt::newMax3} scaled as 
\[
    \mathbf{x}_i = \frac{\tilde{\mathbf{x}}_i}{\sum_{l=1}^{n_i} \tilde{x}_i(a_l)} \qquad \alpha_i = \frac{\sum_{l=1}^{n_i} \tilde{x}_i(a_l)}{\prod_{i=1}^m\sum_{l=1}^{n_i} \tilde{x}_i(a_l)}
\]
is a solution of the MCP \eqref{opt::newMax2} as these satisfies all constraints and also satisfies the equality $\mathbf 1^\top \mathbf x_i = 1$.

For completeness, let's instantiate the result for two players.
Setting the number of players $m=2$, let $\left(A^\xi,B^\xi\right)$ denote the payoff matrices contingent on $\xi$, $\big(A = \mathbb{E}^\mathbb{P}\left[A^\xi\right], B = \mathbb{E}^\mathbb{P}\left[B^\xi\right]\big)$ their expectations over the sample distribution, and $(\mathbf{x},\mathbf{y})$ the mixed strategies for players 1 and 2. We then extend program \eqref{opt::newMax2} to both players as:
\begin{equation}\label{opt::twoPlayer}
\begin{gathered}
    0 \leq \lambda_k^1 \perp \mathbf{x}^\top(A^{\xi_k} -  A)\mathbf{y} - z_k^1 \geq 0\\
    0 \leq \gamma_s \mathbb{P}(\xi_k) - \lambda^1_k \perp -z_k^1 \geq 0\\
    0 \leq \alpha^1 - v_{l_1}^1(\mathbf{y}) \perp x(a_{l_1}) \geq 0\\
    \mathbf{1}^\top\mathbf{x} = 1
\end{gathered}
\qquad
\begin{gathered}
    0 \leq \lambda_k^2 \perp \mathbf{x}^\top(B^{\xi_k} -  B)\mathbf{y} - z_k^2 \geq 0,\\
    0 \leq \gamma_s \mathbb{P}(\xi_k) - \lambda^2_k \perp -z_k^2 \geq 0,\\
    0 \leq \alpha^2 - v_{l_2}^2(\mathbf{x}) \perp y(a_{l_2}) \geq 0,\\
    \mathbf{1}^\top\mathbf{y} = 1,
\end{gathered}
\qquad
\begin{gathered}
    \forall k \in \{1, \dots, K\}\\
    \forall k \in \{1, \dots, K\}\\
    \forall l_i \in \{1, \dots, n_i\}\\
    \alpha_i \in \mathbb{R}
\end{gathered}
\end{equation}
where $v^1_{l_1} = A_{l_1}\mathbf{y} + \sum_{k=1}^K \lambda^1_k\left(A^{\xi_k}_{l_1} - A_{l_1}\right)\mathbf{y} \leq \alpha^1$ and likewise for player 2. Note that $A_{l_1}$ denotes the $l_1$ row of $A$. Assuming that $A^{\xi_k}, B^{\xi_k}$ are between 0 and 1, we define $\tilde{\mathbf{x}} = \mathbf{x}/\alpha^2, \tilde{\mathbf{y}} = \mathbf{y}/\alpha^1, \tilde{z}^i_k = z_{i,k}/(\alpha^1\alpha^2),\; i \in \{1, 2\}$, using which we obtain:
\begin{equation}\label{opt::twoPlayerFinal}
\begin{gathered}
    0 \leq \lambda_k^1 \perp \tilde{\mathbf{x}}^\top(A^{\xi_k} -  A)\tilde{\mathbf{y}} - \tilde{z}_k^1 \geq 0\\
    0 \leq \gamma_s \mathbb{P}(\xi_k) - \lambda^1_k \perp -\tilde{z}_k^1 \geq 0\\
    0 \leq 1 - \tilde{v}_{l_1}^1(\tilde{\mathbf{y}}) \perp \tilde{x}(a_{l_1}) \geq 0
\end{gathered}
\qquad
\begin{gathered}
    0 \leq \lambda_k^2 \perp \tilde{\mathbf{x}}^\top(B^{\xi_k} -  B)\tilde{\mathbf{y}} - \tilde{z}_k^2 \geq 0,\\
    0 \leq \gamma_s \mathbb{P}(\xi_k) - \lambda^2_k \perp -\tilde{z}_k^2 \geq 0,\\
    0 \leq 1 - \tilde{v}_{l_2}^2(\tilde{\mathbf{x}}) \perp \tilde{y}(a_{l_2}) \geq 0
\end{gathered}
\qquad
\begin{gathered}
    \forall k \in \{1, \dots, K\}\\
    \forall k \in \{1, \dots, K\}\\
    \forall l_i \in \{1, \dots, n_i\}
\end{gathered}
\end{equation}
where a non-zero solution $(\tilde{\mathbf{x}}, \tilde{\mathbf{y}})$ of $\eqref{opt::twoPlayerFinal}$ gives a solution $(\mathbf{x}, \mathbf{y}, \alpha^1, \alpha^2) = (\tilde{\mathbf{x}}(\mathbf{1}^\top \tilde{\mathbf{x}})^{-1}, \tilde{\mathbf{y}}(\mathbf{1}^\top \tilde{\mathbf{y}})^{-1}, (\mathbf{1}^\top \tilde{\mathbf{y}})^{-1},$ $(\mathbf{1}^\top \tilde{\mathbf{x}})^{-1})$ of \eqref{opt::twoPlayer}.
\end{proof} 

\subsubsection{Details of CVaR Complementarity Program}
We prove the result of Lemma~\ref{lem:CVaRMLCP_main} by splitting it into two lemmas below.

\begin{lemma} \label{lem:CVaRMCP}
    Consider the following set of constraints for player $i$
\begin{subequations}
\begin{align}
& 0 \leq (\frac{\gamma_c}{\alpha}\mathbb{P}({\xi_k}) - \lambda_{i,k}) \perp (-\nu_{i,k}) \geq 0 \quad \forall k \in \{1, \dots, K \} \\
& 0 \leq \lambda_{i,k} \perp (\sum_{l=1}^{n_i}x_i(a_l)u_i(a_l, \mathbf{x}_{-i} \mid \xi_k) - z_i - \nu_{i,k}) \geq 0 \quad \forall k \in \{1, \dots, K \} \\
& 0 \leq (\alpha_i - v_{i,l}(\mathbf{x}_{-i})) \perp x_i(a_l) \geq 0, \quad  \forall l \in \{1, \dots, n_i\} \\
& \sum_{k=1}^K\lambda_{i,k} = \gamma_c, \quad \mathbf{1}^\top\mathbf{x}_i = 1, \quad  \alpha_i \in \mathbb{R}
\end{align}
\end{subequations}
Stack the above for all players to get a mixed complementarity program whose solutions correspond to DREs of the CVaR game. 
Also, $\sum_l^{n_i} x_i(a_l)v_{i,l}(\mathbf{x}_{-i}) = \alpha_i$, that is, $\alpha_i$ is the value of the game for player $i$. 
\end{lemma}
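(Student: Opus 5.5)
The plan mirrors the argument for Lemma~\ref{lem:msdMCP}. Fix player $i$ and the opponents' profile $\mathbf{x}_{-i}$. For fixed $\mathbf{x}_{-i}$, the best-response problem \eqref{opt::cVaRMax} is equivalent to the linear program \eqref{eq:CVARLP} in $(\mathbf{x}_i,z_i,\boldsymbol{\nu}_i)$. The reason is that, for fixed $(\mathbf{x}_i,z_i)$, the objective increases in each $\nu_{i,k}$, so at any optimum $\nu_{i,k}=\min(0,u_i(\mathbf{x}_i,\mathbf{x}_{-i}\mid\xi_k)-z_i)$. Maximizing jointly over $z_i$ then recovers the supremum defining $\mathrm{CVaR}_\alpha$. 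The LP is feasible and bounded: utilities lie in $[0,1]$, so we may restrict $z_i\in[0,1]$, and every $\nu_{i,k}$ is bounded above. Hence the KKT conditions with linear constraints are necessary and sufficient for optimality. Consequently $\mathbf{x}^*$ is a DRE iff, for every $i$, there are multipliers making the KKT system for player $i$'s LP hold at $\mathbf{x}_{-i}=\mathbf{x}^*_{-i}$. Stacking over players gives the claim.

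Next I write out the KKT system. Attach multipliers $\lambda_{i,k}\ge 0$ to \eqref{cVarOpt:cons1}, $\mu_{i,k}\ge0$ to \eqref{cVarOpt:cons2}, a free $\alpha_i$ to \eqref{cVarOpt:cons3}, and $\beta_{i,l}\ge0$ to \eqref{cVarOpt:cons4}. Stationarity then gives three conditions.
\begin{itemize}
\item With respect to $\nu_{i,k}$: $\frac{\gamma_c}{\alpha}\mathbb{P}(\xi_k)-\lambda_{i,k}-\mu_{i,k}=0$.
\item With respect to $z_i$: $\gamma_c-\sum_k\lambda_{i,k}=0$. This is the new equality $\sum_k\lambda_{i,k}=\gamma_c$.
\item With respect to $x_i(a_l)$: $(1-\gamma_c)\mu^{\mathbb{P}}_i(a_l,\mathbf{x}_{-i})+\sum_k\lambda_{i,k}u_i(a_l,\mathbf{x}_{-i}\mid\xi_k)-\alpha_i+\beta_{i,l}=0$, that is, $v_{i,l}(\mathbf{x}_{-i})-\alpha_i+\beta_{i,l}=0$.
\end{itemize}
Complementary slackness pairs $\lambda_{i,k}$ with the slack of \eqref{cVarOpt:cons1}, $\mu_{i,k}$ with $-\nu_{i,k}$, and $\beta_{i,l}$ with $x_i(a_l)$. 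Eliminating $\mu_{i,k}=\frac{\gamma_c}{\alpha}\mathbb{P}(\xi_k)-\lambda_{i,k}$ and $\beta_{i,l}=\alpha_i-v_{i,l}$ turns their nonnegativity and complementarity into exactly the first and third complementarity conditions of the lemma. The condition on $\lambda_{i,k}$ is the second one verbatim, and primal feasibility supplies $\mathbf{1}^\top\mathbf{x}_i=1$. Conversely, any solution of the stated system defines nonnegative $\mu_{i,k},\beta_{i,l}$ satisfying the full KKT system. Feasibility of $\nu_{i,k}\le 0$ and of \eqref{cVarOpt:cons1} is encoded by the ``$\ge0$'' parts, so the system is sufficient for optimality.

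Finally I establish the value identity. Multiply $(\alpha_i-v_{i,l})x_i(a_l)=0$ and sum over $l$; using $\mathbf{1}^\top\mathbf{x}_i=1$ gives $\sum_l x_i(a_l)v_{i,l}=\alpha_i$. To see that this equals the game value, use LP strong duality, or directly substitute complementary slackness into the objective. From $\lambda_{i,k}(u_i(\mathbf{x}\mid\xi_k)-z_i-\nu_{i,k})=0$ and $\sum_k\lambda_{i,k}=\gamma_c$ we get
\[
\sum_k\lambda_{i,k}u_i(\mathbf{x}\mid\xi_k)=\gamma_c z_i+\sum_k\lambda_{i,k}\nu_{i,k}.
\]
From $\mu_{i,k}\nu_{i,k}=0$ we get $\lambda_{i,k}\nu_{i,k}=\frac{\gamma_c}{\alpha}\mathbb{P}(\xi_k)\nu_{i,k}$. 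Adding $(1-\gamma_c)\mu^{\mathbb{P}}_i(\mathbf{x})$ shows that $\sum_l x_i(a_l)v_{i,l}$ equals the objective \eqref{cVarOpt:obj} at the optimum, i.e.\ $\rho_{\mathrm{CVaR}}$.

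The main point needing care is the reduction step. I must argue that the LP attains its optimum, so that KKT multipliers exist. I must also check that \emph{every} KKT point corresponds to an optimal $\mathbf{x}_i$ of the original nonsmooth problem \eqref{opt::cVaRMax}, not merely to some LP optimum with a non-optimal auxiliary $z_i$. Both follow from the observation that, for fixed $\mathbf{x}_i$, the optimal value over $(z_i,\boldsymbol{\nu}_i)$ equals the CVaR term. The algebraic elimination itself is routine.
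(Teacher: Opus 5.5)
Your proposal is correct and takes essentially the same route as the paper. Both write the KKT system of the LP reformulation of player $i$'s CVaR best response, eliminate $\mu_{i,k}$ and $\beta_{i,l}$ through the stationarity conditions, and sum $(\alpha_i - v_{i,l})x_i(a_l)=0$ over $l$ to get the value identity; beyond the paper, you justify why KKT is necessary and sufficient in both directions (LP equivalence, boundedness, attainment) and use complementary slackness to check that $\sum_l x_i(a_l)v_{i,l}$ equals the optimal objective $\rho_{\mathrm{CVaR}}$, which the paper only asserts.
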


The above is a \emph{mixed} complementarity program with \emph{two equality constraints}, which is different from mean-semideviation and standard games that have only the mixed strategy probability equality. However, we can still obtain a multilinear complementarity program in the standard form without any equality constraint. This construction is somewhat involved and presented below:
\begin{lemma} \label{lem:CVaRMLCP}
Consider a new variable $t_i$, the following definition of $\tau ,\Theta$ and substitutions for $\mathbf{x}_i, \mathbf{\nu}_i, z_i,\lambda_{i,k}$
\begin{align*}
    & \tau \coloneq \left( \prod_{j = 1}^m \alpha_j \right)^{\frac{1}{m-1}} \;
    \Theta \coloneq \left( \prod_{j = 1}^m t_j \right)^{\frac{1}{m-1}} \; 
    & \tilde{\mathbf{x}}_i \coloneq \frac{\alpha_i t_i}{\tau\Theta}\mathbf{x_i} , \; \forall i  \quad
    \tilde{\mathbf{\nu}}_i \coloneq \frac{\mathbf{\nu}_i}{\tau\Theta} , \; \forall i  \quad
    \tilde{z}_i \coloneq \frac{z_i}{\tau\Theta}, \quad \tilde{\lambda}_{i,k} \coloneq t_i \lambda_{i,k} \; \forall i,k.   
\end{align*}
Consider the MLCP formed by stacking the constraints below for all players
\begin{subequations} 
\begin{align}
& 0 \leq (\frac{\gamma_c}{\alpha}\mathbb{P}({\xi_k}) t_i - \tilde{\lambda}_{i,k}) \perp (-\tilde{\nu}_{i,k}) \geq 0 \quad \forall k \in \{1, \dots, K \} \\
& 0 \leq \tilde{\lambda}_{i,k} \perp (\sum_{l=1}^{n_i}\tilde{x}_i(a_l)u_i(a_l, \tilde{\mathbf{x}}_{-i} \mid \xi_k) - \tilde{z}_i - \tilde{\nu}_{i,k}) \geq 0 \quad \forall k \in \{1, \dots, K \} \\
& 0 \leq \Big(1 - t_i (1 - \gamma_c)\mu^{\mathbb{P}}_i(a_l, \tilde{\mathbf{x}}_{-i}) - \sum_{k=1}^K\tilde{\lambda}_{i,k} u_i(a_l, \tilde{\mathbf{x}}_{-i} \mid \xi_k) \Big) \perp \tilde{x}_i(a_l) \geq 0, \quad  \forall l \in \{1, \dots, n_i\}  \\
& 0 \leq t_i
    \;\perp\;
    (\gamma_c t_i - \sum_{k=1}^K \tilde{\lambda}_{i,k})
    \geq 0.
\end{align}
\end{subequations}
    Any non-zero $\tilde{\mathbf{x}}_i$ (forall $i$) solution of the above program gives a solution to the program of Lemma~\ref{lem:CVaRMCP}, 
with the mixed strategies and game values retrieved by
\[
    \mathbf{x}_i = \frac{\tilde{\mathbf{x}}_i}{\sum_{l=1}^{n_i} \tilde{x}_i(a_l)} \qquad
    \Theta = \left( \prod_{j = 1}^m t_j \right)^{\frac{1}{m-1}} \qquad
    \tau\Theta = \frac{1}{\prod_{i=1}^m\sum_{l=1}^{n_i} \tilde{x}_i(a_l)} \qquad
    \alpha_i = \frac{\tau\Theta}{t_i}\sum_{l=1}^{n_i} \tilde{x}_i(a_l)
\]
\end{lemma}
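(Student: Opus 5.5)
The plan is to go in the direction the statement asks: start from an MLCP solution with every $\tilde{\mathbf{x}}_i\neq 0$, and build variables $(\mathbf{x}_i,\alpha_i,\lambda_{i,k},\nu_{i,k},z_i)$ that satisfy the mixed complementarity program of Lemma~\ref{lem:CVaRMCP}. Each MLCP constraint is rescaled by a positive factor, which preserves complementarity. This follows the same route as the proof of Corollary~\ref{cor:msdMLCP}; the new ingredient is the auxiliary variable $t_i$, which must encode the second equality $\sum_k\lambda_{i,k}=\gamma_c$.

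\textbf{Step 1: $t_i>0$ and the CVaR equality.} Suppose $t_i=0$. The first MLCP constraint then reads $0\le -\tilde\lambda_{i,k}$, so $\tilde\lambda_{i,k}=0$ for all $k$. The third constraint becomes $0\le 1\perp \tilde x_i(a_l)\ge 0$, which forces $\tilde{\mathbf{x}}_i=0$ and contradicts the assumption. Hence $t_i>0$. The last complementarity then gives $\sum_k\tilde\lambda_{i,k}=\gamma_c t_i$. Setting $\lambda_{i,k}:=\tilde\lambda_{i,k}/t_i$ yields $\sum_k\lambda_{i,k}=\gamma_c$.

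\textbf{Step 2: consistent scalings.} Let $s_i:=\mathbf{1}^\top\tilde{\mathbf{x}}_i>0$, and set $\mathbf{x}_i:=\tilde{\mathbf{x}}_i/s_i$ and $c:=1/\prod_j s_j$. Define $\alpha_i:=c\,s_i/t_i>0$. We must check that these choices reproduce the substitutions in the statement, with $c=\tau\Theta$:
\[
\Big(\prod_j\alpha_j t_j\Big)^{\frac{1}{m-1}}=\Big(c^m\prod_j s_j\Big)^{\frac{1}{m-1}}=c .
\]
Hence $\tilde{\mathbf{x}}_i=\frac{\alpha_i t_i}{\tau\Theta}\mathbf{x}_i$, as required. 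We also set $\nu_{i,k}:=c\,\tilde\nu_{i,k}$ and $z_i:=c\,\tilde z_i$. This matches the retrieval formulas in the statement.

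\textbf{Step 3: transfer each constraint.} We use multilinearity as in the MSD computation. Expanding the product over the other players gives
\[
u_i(a_l,\tilde{\mathbf{x}}_{-i}\mid\xi_k)=\Big(\prod_{j\ne i}s_j\Big)u_i(a_l,\mathbf{x}_{-i}\mid\xi_k),
\qquad \prod_{j\ne i}s_j=\frac{1}{c\,s_i}=\frac{1}{\alpha_i t_i}.
\]
The same identity holds for $\mu^{\mathbb P}_i$. Hence the bracket in the third MLCP constraint equals $1-v_{i,l}(\mathbf{x}_{-i})/\alpha_i$, with $v_{i,l}$ built from the $\lambda_{i,k}$ of Step 1. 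Multiplying it by $\alpha_i>0$, and multiplying $\tilde x_i(a_l)$ by $1/s_i$, gives $0\le\alpha_i-v_{i,l}\perp x_i(a_l)\ge0$.

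Next, $\sum_l\tilde x_i(a_l)u_i(a_l,\tilde{\mathbf{x}}_{-i}\mid\xi_k)=\big(\prod_j s_j\big)u_i(\mathbf{x}\mid\xi_k)=u_i(\mathbf{x}\mid\xi_k)/c$. So multiplying the second MLCP constraint by $c$, and its multiplier by $1/t_i$, yields the second MCP constraint. Dividing the first MLCP constraint's left factor by $t_i$ and multiplying $-\tilde\nu_{i,k}$ by $c$ yields the first MCP constraint. The remaining equalities $\mathbf{1}^\top\mathbf{x}_i=1$ and $\sum_k\lambda_{i,k}=\gamma_c$ hold by construction. The value identity $\sum_l x_i(a_l)v_{i,l}=\alpha_i$ follows by summing the complementarity relations, exactly as in Eq.~\eqref{valueConst2}. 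Finally, Lemma~\ref{lem:CVaRMCP} shows that the result is a DRE.

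The main obstacle is Step 2. The three scalings (by $\alpha_i$, by $t_i$, and by the global factor $\tau\Theta$) must be mutually consistent and strictly positive, since otherwise the rescaled complementarities are not equivalent. The argument therefore hinges on first proving $t_i>0$ from the nonzero-$\tilde{\mathbf{x}}_i$ assumption, and then checking the fixed-point identity $(\prod_j\alpha_jt_j)^{1/(m-1)}=c$.
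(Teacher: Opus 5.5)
Your proof is correct and follows essentially the same route as the paper. You rule out $t_i=0$ (using the first constraint rather than the last to force $\tilde\lambda_{i,k}=0$; either works), recover $\sum_k\lambda_{i,k}=\gamma_c$, and rescale each complementarity pair through the multilinear identity with the positive factors $\alpha_i$, $t_i$ and $\tau\Theta$. Building $\alpha_i=c\,s_i/t_i$ directly from the MLCP solution and checking $(\prod_j\alpha_jt_j)^{1/(m-1)}=c$ gives positivity of $\alpha_i$ for free, which is slightly cleaner than the paper's separate preliminary argument that $\alpha_i\ge 0$.
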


\paragraph{Proof of Lemma~\ref{lem:CVaRMCP}}
\begin{proof}
The optimality conditions for the linear program $\eqref{eq:CVARLP}$ yield dual multipliers $\lambda_{i,k}, \mu_{i,k}, \alpha_i$ and $\beta_{i,l}$ for constraints $\eqref{cVarOpt:cons1}, \eqref{cVarOpt:cons2}, \eqref{cVarOpt:cons3}$, and $\eqref{cVarOpt:cons4}$, respectively. The optimality conditions are as follows:
\begin{subequations}
\begin{align}
& \underbrace{(1 - \gamma_c)\mu^{\mathbb{P}}_i(a_l, \mathbf{x}_{-i}) + \sum_{k=1}^K\lambda_{i,k} u_i(a_l, \mathbf{x}_{-i} \mid \xi_k) }_{v_{i,l}(\mathbf{x}_{-i})} - \alpha_i + \beta_{i,l} = 0, \quad \forall l \in \{1,\dots,n_i \} \label{cVarMC:value} \\
& \gamma_c - \sum_{k=1}^K\lambda_{i,k} = 0 \label{cVarMC:cVar}\\
& \frac{\gamma_c}{\alpha}\mathbb{P}({\xi_k}) - \lambda_{i,k} - \mu_{i,k} = 0 \quad \forall k \in \{1, \dots, K \} \label{cVarMC:Multiplier}\\ 
& 0 \leq \mu_{i,k} \perp (-\nu_{i,k}) \geq 0 \quad \forall k \in \{1, \dots, K \} \label{cVarMC:Nonlinear1}\\
& 0 \leq \lambda_{i,k} \perp (\sum_{l=1}^{n_i}x_i(a_l)u_i(a_l, \mathbf{x}_{-i} \mid \xi_k) - z_i - \nu_{i,k}) \geq 0 \quad \forall k \in \{1, \dots, K \} \label{cVarMC:Nonelinear2}\\
& 0 \leq \beta_{i,l} \perp x_i(a_l) \geq 0, \quad  \forall l \in \{1, \dots, n_i\}  \\
& \mathbf{1}^\top\mathbf{x}_i = 1, \quad  \alpha_i \in \mathbb{R}
\end{align}
\end{subequations}

Simplifying equations \eqref{cVarMC:value}, \eqref{cVarMC:cVar}, and \eqref{cVarMC:Multiplier}, we obtain the mixed complementarity program for the CVaR game for player $i$. We can stack these for all players to solve the game.

\begin{subequations}\label{eq:CVaRMCP}
\begin{align}
&   \sum_{k=1}^K\lambda_{i,k} = \gamma_c \\
& 0 \leq (\frac{\gamma_c}{\alpha}\mathbb{P}({\xi_k}) - \lambda_{i,k}) \perp (-\nu_{i,k}) \geq 0 \quad \forall k \in \{1, \dots, K \} \label{eq:CVaROrig2}\\
& 0 \leq \lambda_{i,k} \perp (\sum_{l=1}^{n_i}x_i(a_l)u_i(a_l, \mathbf{x}_{-i} \mid \xi_k) - z_i - \nu_{i,k}) \geq 0 \quad \forall k \in \{1, \dots, K \} \label{eq:CVaROrig3}\\
& 0 \leq (\alpha_i - v_{i,l}(\mathbf{x}_{-i})) \perp x_i(a_l) \geq 0, \quad  \forall l \in \{1, \dots, n_i\} \label{eq:CVaROrig4} \\
& \mathbf{1}^\top\mathbf{x}_i = 1, \quad  \alpha_i \in \mathbb{R}
\end{align}
\end{subequations}

Eq.~\eqref{eq:CVaROrig4} implies:
\begin{align}
    (\alpha_i - v_{i,l}(\mathbf{x}_{-i}))x_i(a_l) = 0 \implies \sum_l^{n_i} x_i(a_l)v_{i,l}(\mathbf{x}_{-i}) = \alpha_i\label{valueConst2CVaR}
\end{align}

In other words, $\alpha_i$ is the value of the game for player $i$. Note that at equilibrium $v_{i,l}(\mathbf{x}_{-i})$ also depends on $\mathbf x_i$ via the dual variables $\lambda_{i,k}$.
\end{proof}

\paragraph{Proof of Lemma~\ref{lem:CVaRMLCP}}
\begin{proof}
First, we show that $\alpha_i \geq 0$.
 We know that $\lambda_{i,k} \geq 0$.  Since $0 \leq u^i(\mathbf{a}\mid\xi) \leq 1,\forall \mathbf{a}$, we have $ \mu^{\mathbb{P}}_i(a_l, \mathbf{x}_{-i} )$ is between $0$ and $1$. Then, from Eq~\ref{cVarMC:value} we can see that $0 \leq v_{i,l}(\mathbf{x}_{-i}) $ for all $l$. By Eq.~\ref{valueConst2CVaR} $\alpha_i$ is a convex combination of $\{ v_{i,l}(\mathbf{x}_{-i}) \}_{l=1}^{n_i}$ and hence, $0 \leq \alpha_i $.

The mixed complementarity program \eqref{eq:CVaRMCP} in Lemma~\ref{lem:CVaRMCP} has two equality constraints, which we need to eliminate. We start with a non-zero solution of the MLCP in the lemma statement. Observe that if $t_i = 0$ then $\sum_k \tilde{\lambda}_{i,k} \leq 0$, which implies $\tilde{\lambda}_{i,k} = 0$ for all $k$. Then, from Eq.~\ref{eq:xzero} $\tilde{x}_i(a_l) = 0 $ for all $l$, which is a zero solution. From our assumption, we are in the non-zero solution space (and one such exists from equilibrium existence) and thus, we focus on $t_i > 0$, which implies $\sum_k \tilde{\lambda}_{i,k} = t_i \gamma_c$, which implies $\sum_k \lambda_{i,k} = \gamma_c$ recovering one of the equality in the original mixed complementarity program.

Then,
along with the variables $t_i$ for all $i$, 
we define these variables 
\begin{align}
    & \tau \coloneq \left( \prod_{j = 1}^m \alpha_j \right)^{\frac{1}{m-1}} \quad
    \Theta \coloneq \left( \prod_{j = 1}^m t_j \right)^{\frac{1}{m-1}} \; \nonumber \\
    & \tilde{\mathbf{x}}_i \coloneq \frac{\alpha_i t_i}{\tau\Theta}\mathbf{x_i} , \; \forall i  \qquad
    \tilde{\mathbf{\nu}}_i \coloneq \frac{\mathbf{\nu}_i}{\tau\Theta} , \; \forall i  \qquad
    \tilde{z}_i \coloneq \frac{z_i}{\tau\Theta}, \qquad \tilde{\lambda}_{i,k} \coloneq t_i \lambda_{i,k} \; \forall i,k.   \label{eq:scaleCVaR}
\end{align}

With this, we observe the following fact: 
\begin{align*}
    \sum_{l=1}^{n_i}x_i(a_l) u_i(a_l, \mathbf{x}_{-i} \mid \xi_k) & = \sum_{l=1}^{n_i}x_i(a_l) \sum_{\mathbf{a}_{-i} \in \mathbf{A}_{-i}} u_i(a_l, \mathbf{a}_{-i} \mid \xi_k) \prod_{j=1, j \neq i}^m x_j (a_{-i, j})\\
    & = \sum_{l=1}^{n_i}x_i(a_l) \sum_{\mathbf{a}_{-i} \in \mathbf{A}_{-i}} u_i(a_l, \mathbf{a}_{-i} \mid \xi_k) (\tau \Theta)^{m-1} \prod_{j=1, j \neq i}^m \tilde{x}_j (a_{-i, j}) / (\alpha_j t_j)\\
    & = \sum_{l=1}^{n_i}x_i(a_l) \sum_{\mathbf{a}_{-i} \in \mathbf{A}_{-i}} u_i(a_l, \mathbf{a}_{-i} \mid \xi_k)  \prod_{j = 1}^m \alpha_j t_j \prod_{j=1, j \neq i}^m \tilde{x}_j (a_{-i, j}) / (\alpha_j t_j)\\
    & = \sum_{l=1}^{n_i}x_i(a_l) \sum_{\mathbf{a}_{-i} \in \mathbf{A}_{-i}} u_i(a_l, \mathbf{a}_{-i} \mid \xi_k)  \alpha_i t_i \prod_{j=1, j \neq i}^m \tilde{x}_j (a_{-i, j}) \\
    & = \sum_{l=1}^{n_i} \alpha_i t_i x_i(a_l) \sum_{\mathbf{a}_{-i} \in \mathbf{A}_{-i}} u_i(a_l, \mathbf{a}_{-i} \mid \xi_k)   \prod_{j=1, j \neq i}^m \tilde{x}_j (a_{-i, j}) \\
    & = \tau \Theta \sum_{l=1}^{n_i}\tilde{x}_i(a_l) u_i(a_l, \tilde{\mathbf{x}}_{-i} \mid \xi_k)
\end{align*}

Note that complementarity constraints remain the same when multiplied by any constant. We use this and multiply that constraint in Eq.~\ref{eq:CVaRMLCP2} (right one) by $\tau \Theta$ and multiply the left side of Eq.~\ref{eq:CVaRMLCP2} by $1/t_i$ to obtain the MCP constraint \eqref{eq:CVaROrig3}. Similarly,  we can multiply left side of Eq.~\ref{eq:CVaRMLCP1} by $1/t_i$ and right side by $\tau \Theta$ to obtain MCP constraint~\eqref{eq:CVaROrig2}. The final constraint in \eqref{eq:CVaRMLCP} is an extra constraint that enforces that non-zero $t_i$ implies $\sum_k \tilde{\lambda}_{i,k} = t_i \gamma_c$, which in turn implies $\sum_k \lambda_{i,k} = \gamma_c$. We show that the third constraint Eq.~\ref{eq:xzero} is also a scaled version of Eq.~\ref{eq:CVaROrig4}.



For Eq.~\ref{eq:CVaROrig4}, we show that
\begin{align*}
v_{i,l}(\mathbf{x}_{-i}) & = (1 - \gamma_c) \sum_{k=1}^K \mathbb{P}(\xi_k) u_i(a_l, \mathbf{x}_{-i}\mid \xi_k) + \sum_{k=1}^K\lambda_{i,k} u_i(a_l, \mathbf{x}_{-i} \mid \xi_k)  \\
& =\sum_{k=1}^K \big((1 - \gamma_c)  \mathbb{P}(\xi_k) + \lambda_{i,k} \big ) u_i(a_l, \mathbf{x}_{-i} \mid \xi_k)\\
    & = \sum_{k=1}^K \big((1 - \gamma_c)  \mathbb{P}(\xi_k) + \lambda_{i,k} \big ) \sum_{\mathbf{a}_{-i} \in \mathbf{A}_{-i}} u_i(a_l, \mathbf{a}_{-i} \mid \xi_k) \prod_{j=1, j \neq i}^m x_j (a_{-i, j})\\
    & = \sum_{k=1}^K \big((1 - \gamma_c)  \mathbb{P}(\xi_k) + \lambda_{i,k} \big ) \sum_{\mathbf{a}_{-i} \in \mathbf{A}_{-i}} u_i(a_l, \mathbf{a}_{-i} \mid \xi_k) (\tau \Theta)^{m-1} \prod_{j=1, j \neq i}^m \tilde{x}_j (a_{-i, j}) / (\alpha_j t_j)\\
    & = \sum_{k=1}^K \big((1 - \gamma_c)  \mathbb{P}(\xi_k) + \lambda_{i,k} \big ) \sum_{\mathbf{a}_{-i} \in \mathbf{A}_{-i}} u_i(a_l, \mathbf{a}_{-i} \mid \xi_k)  \prod_{j = 1}^m \alpha_j t_j \prod_{j=1, j \neq i}^m \tilde{x}_j (a_{-i, j}) / (\alpha_j t_j)\\
    & = \sum_{k=1}^K \big((1 - \gamma_c)  \mathbb{P}(\xi_k) + \lambda_{i,k} \big ) \sum_{\mathbf{a}_{-i} \in \mathbf{A}_{-i}} u_i(a_l, \mathbf{a}_{-i} \mid \xi_k)  \alpha_i t_i \prod_{j=1, j \neq i}^m \tilde{x}_j (a_{-i, j}) \\
    & = \sum_{k=1}^K \alpha_i t_i \big((1 - \gamma_c)  \mathbb{P}(\xi_k) + \lambda_{i,k} \big ) \sum_{\mathbf{a}_{-i} \in \mathbf{A}_{-i}} u_i(a_l, \mathbf{a}_{-i} \mid \xi_k)   \prod_{j=1, j \neq i}^m \tilde{x}_j (a_{-i, j}) \\
    & = \sum_{k=1}^K \alpha_i t_i \big((1 - \gamma_c)  \mathbb{P}(\xi_k) + \lambda_{i,k} \big ) u_i(a_l, \tilde{\mathbf{x}}_{-i} \mid \xi_k) \\
    & = \alpha_i \Big( \sum_{k=1}^K  t_i (1 - \gamma_c)  \mathbb{P}(\xi_k) u_i(a_l, \tilde{\mathbf{x}}_{-i} \mid \xi_k) + \sum_{k=1}^K t_i \lambda_{i,k}  u_i(a_l, \tilde{\mathbf{x}}_{-i} \mid \xi_k) \Big) \\
    & = \alpha_i \Big( \sum_{k=1}^K  t_i (1 - \gamma_c)  \mathbb{P}(\xi_k) u_i(a_l, \tilde{\mathbf{x}}_{-i} \mid \xi_k) + \sum_{k=1}^K \tilde{\lambda}_{i,k}  u_i(a_l, \tilde{\mathbf{x}}_{-i} \mid \xi_k) \Big )\\
    & = \alpha_i \Big( t_i (1 - \gamma_c)  \mu_i^{\mathbb{P}}(a_l, \tilde{\mathbf{x}}_{-i}) + \sum_{k=1}^K \tilde{\lambda}_{i,k}  u_i(a_l, \tilde{\mathbf{x}}_{-i} \mid \xi_k) \Big ) 
\end{align*}
Multiplying the left side of Eq.~\ref{eq:xzero} by $\alpha_i$ gives Eq.~\ref{eq:CVaROrig4}.

\begin{subequations} \label{eq:CVaRMLCP}
\begin{align}
& 0 \leq (\frac{\gamma_c}{\alpha}\mathbb{P}({\xi_k}) t_i - \tilde{\lambda}_{i,k}) \perp (-\tilde{\nu}_{i,k}) \geq 0 \quad \forall k \in \{1, \dots, K \} \label{eq:CVaRMLCP1} \\
& 0 \leq \tilde{\lambda}_{i,k} \perp (\sum_{l=1}^{n_i}\tilde{x}_i(a_l)u_i(a_l, \tilde{\mathbf{x}}_{-i} \mid \xi_k) - \tilde{z}_i - \tilde{\nu}_{i,k}) \geq 0 \quad \forall k \in \{1, \dots, K \} \label{eq:CVaRMLCP2} \\
& 0 \leq \Big(1 - t_i (1 - \gamma_c)\mu^{\mathbb{P}}_i(a_l, \tilde{\mathbf{x}}_{-i}) - \sum_{k=1}^K\tilde{\lambda}_{i,k} u_i(a_l, \tilde{\mathbf{x}}_{-i} \mid \xi_k) \Big) \perp \tilde{x}_i(a_l) \geq 0, \quad  \forall l \in \{1, \dots, n_i\}  \label{eq:xzero}\\
& 0 \leq t_i
    \;\perp\;
    (\gamma_c t_i - \sum_{k=1}^K \tilde{\lambda}_{i,k})
    \geq 0.
\end{align}
\end{subequations}
Then, the scaling below allows us to claim that non-zero solutions also satisfy the original mixed complementarity program \eqref{eq:CVaRMCP}:
\[
    \mathbf{x}_i = \frac{\tilde{\mathbf{x}}_i}{\sum_{l=1}^{n_i} \tilde{x}_i(a_l)} \qquad
    \Theta = \left( \prod_{j = 1}^m t_j \right)^{\frac{1}{m-1}} \qquad
    \tau\Theta = \frac{1}{\prod_{i=1}^m\sum_{l=1}^{n_i} \tilde{x}_i(a_l)} \qquad
    \alpha_i = \frac{\tau\Theta}{t_i}\sum_{l=1}^{n_i} \tilde{x}_i(a_l)
\]

Note that from above, we obtain $\sum_{l=1}^{n_i}x_i(a_l) = 1$, further $ (\prod_j \alpha_j )^{(1/m-1)} = \tau \Theta / (\prod_j t_j )^{(1/m-1)} = \tau \Theta / \Theta = \tau$. It can be easily checked that the original equations of  \eqref{eq:CVaRMCP} are all satisfied by scaling the other tilde variables $\tilde{\nu}, \tilde{z}, \tilde{\lambda}$ back to the original variables according to Eq.~\ref{eq:scaleCVaR}.
\end{proof}

\subsection{Details and Proofs in Section~\ref*{subsec:scalable-action-cumg}~\nameref*{subsec:scalable-action-cumg}}

\paragraph{Proof of Lemma~\ref{lem:lifted_game_dre}}

\begin{proof}
First suppose $(\mathbf{x}^*,\mathbf q^*)$ is a Nash equilibrium of the extended game. Then, for every player $i$,
$
P_i(\mathbf{x}_i^*,\mathbf{x}_{-i}^*,\mathbf q_i^*)
\ge
P_i(\mathbf{x}_i,\mathbf{x}_{-i}^*,\mathbf q_i^*)
\;
\text{for all } \mathbf{x}_i\in\mathbf X_i,
$
and, because adversary $j_i$ maximizes $-P_i$, we have
$
P_i(\mathbf{x}_i^*,\mathbf{x}_{-i}^*,\mathbf q_i^*)
\le
P_i(\mathbf{x}_i^*,\mathbf{x}_{-i}^*,\mathbf q_i)
\;
\text{for all } \mathbf q_i\in U.
$
Thus, $\mathbf q_i^*$ is a worst-case distribution for $\mathbf{x}^*$, and
$
P_i(\mathbf{x}_i^*,\mathbf{x}_{-i}^*,q_i^*) = \min_{q_i\in U} P_i(\mathbf{x}_i^*,\mathbf{x}_{-i}^*,\mathbf q_i)$ $=
\rho_i(\mathbf{x}_i^*,\mathbf{x}_{-i}^*).
$
For any $\mathbf{x}_i\in\mathbf X_i$,
\[
\rho_i(\mathbf{x}_i,\mathbf{x}_{-i}^*) =
\min_{q_i\in U} P_i(\mathbf{x}_i,\mathbf{x}_{-i}^*,\mathbf q_i)
\le P_i(\mathbf{x}_i,\mathbf{x}_{-i}^*,\mathbf q_i^*)
\le P_i(\mathbf{x}_i^*,\mathbf{x}_{-i}^*,\mathbf q_i^*) =
\rho_i(\mathbf{x}_i^*,\mathbf{x}_{-i}^*).
\]
Hence $\mathbf{x}^*$ is a DRE.
\emph{Conversely, suppose $\mathbf{x}^*$ is a DRE}. Fix a player $i$. Since $P_i(\mathbf{x}_i,\mathbf{x}_{-i}^*,\mathbf q_i)$ is linear in $\mathbf{x}_i$ and linear in $\mathbf q_i$, and since $\mathbf X_i$ and $U$ are nonempty, compact, and convex, Sion's minimax theorem gives
\[
\max_{\mathbf{x}_i\in\mathbf X_i}
\min_{q_i\in U}
P_i(\mathbf{x}_i,\mathbf{x}_{-i}^*,\mathbf q_i) =
\min_{q_i\in U}
\max_{\mathbf{x}_i\in\mathbf X_i}
P_i(\mathbf{x}_i,\mathbf{x}_{-i}^*,\mathbf q_i).
\]
Because $\mathbf{x}^*$ is a DRE, $\mathbf{x}_i^*$ attains the left-hand-side maximum. Let
\[
v_i :=
\rho_i(\mathbf{x}_i^*,\mathbf{x}_{-i}^*) =
\max_{\mathbf{x}_i\in\mathbf X_i}
\min_{q_i\in U}
P_i(\mathbf{x}_i,\mathbf{x}_{-i}^*,\mathbf q_i).
\]
By compactness of $U$ and continuity of $P_i$, there exists
$
\mathbf q_i^*\in
\argmin_{\mathbf q_i\in U}
\max_{\mathbf{x}_i\in\mathbf X_i}
P_i(\mathbf{x}_i,\mathbf{x}_{-i}^*,\mathbf q_i).
$
Then,
$
\max_{\mathbf{x}_i\in\mathbf X_i}
P_i(\mathbf{x}_i,\mathbf{x}_{-i}^*,\mathbf q_i^*) = v_i.
$
Also,
\[
P_i(\mathbf{x}_i^*,\mathbf{x}_{-i}^*,\mathbf q_i^*)
\ge
\min_{q_i\in U}
P_i(\mathbf{x}_i^*,\mathbf{x}_{-i}^*,\mathbf q_i)
=
v_i,
\;\text{ while }\;
P_i(\mathbf{x}_i^*,\mathbf{x}_{-i}^*,\mathbf q_i^*)
\le
\max_{\mathbf{x}_i\in\mathbf X_i}
P_i(\mathbf{x}_i,\mathbf{x}_{-i}^*,\mathbf q_i^*)
=
v_i.
\]
Therefore,
$
P_i(\mathbf{x}_i^*,\mathbf{x}_{-i}^*,\mathbf q_i^*)=v_i.
$
It follows that
$
P_i(\mathbf{x}_i,\mathbf{x}_{-i}^*,\mathbf q_i^*)
\le
P_i(\mathbf{x}_i^*,\mathbf{x}_{-i}^*,\mathbf q_i^*)
\le
P_i(\mathbf{x}_i^*,\mathbf{x}_{-i}^*,\mathbf q_i)
$
for all $\mathbf{x}_i\in\mathbf X_i$ and all $\mathbf q_i\in U$. Thus, original player $i$ and adversary $j_i$ are mutual best responses. Repeating this argument for every player $i$ gives $\mathbf q^*=(\mathbf q_1^*,\ldots,\mathbf q_m^*)$ such that $(\mathbf{x}^*,\mathbf q^*)$ is a Nash equilibrium of the extended game.
\end{proof}

\paragraph{Proof of Theorem~\ref{thm:small_support_lifted_drg}}

\begin{proof}
By Lemma~\ref{lem:lifted_game_dre}, since $\mathbf{x}^*$ is a DRE, there exists $\mathbf q^*=(\mathbf q_1^*,\ldots,\mathbf q_m^*)$ such that $(\mathbf{x}^*,\mathbf q^*)$ is a Nash equilibrium of the extended game. Hence, for every player $i$, for all $\mathbf{x}_i\in\mathbf X_i$ and all $\mathbf q_i\in U$,
\[
P_i(\mathbf{x}_i,\mathbf{x}_{-i}^*,\mathbf q_i^*)
\le
P_i(\mathbf{x}_i^*,\mathbf{x}_{-i}^*,\mathbf q_i^*)
\le
P_i(\mathbf{x}_i^*,\mathbf{x}_{-i}^*,\mathbf q_i).
\]

For each player $i$, draw $\kappa$ independent pure-action samples from $\mathbf{x}_i^*$ and let $\bar{\mathbf{x}}_i$ be the empirical distribution of these samples. Independently, for each player $i$, draw $\tau$ independent samples from $q_i^*$ and let $\bar q_i$ be the empirical distribution of these sample indices. Then each $\bar{\mathbf{x}}_i$ is $\kappa$-uniform and each $\bar q_i$ is $\tau$-uniform.

For a fixed player $i$, define
\[
D_i :=
\max_{\mathbf y_i\in\mathbf X_i}
\left| P_i(\mathbf y_i,\bar{\mathbf{x}}_{-i},\bar{\mathbf q}_i) -
P_i(\mathbf y_i,\mathbf{x}_{-i}^*,\mathbf q_i^*) \right|,
\]
\[
D'_i :=
\max_{\mathbf q_i\in U}
\left| P_i(\mathbf{x}_i^*,\mathbf{x}_{-i}^*,\mathbf q_i) -
P_i(\bar{\mathbf{x}}_i,\bar{\mathbf{x}}_{-i},\mathbf q_i) \right|,
\]
and
\[
E_i :=
\max_{\mathbf y_i\in\mathbf X_i}
\left| P_i(\mathbf y_i,\bar{\mathbf{x}}_{-i},\bar{\mathbf q}_i) -
P_i(\mathbf y_i,\bar{\mathbf{x}}_{-i},\mathbf q_i^*) \right|.
\]

We first bound $D_i$. Since the expression inside $D_i$ is affine in $\mathbf y_i$, the maximum over $\mathbf X_i$ is attained at a pure action of player $i$. Fix such an action $a_l\in\mathbf A_i$. The random variable
$
P_i(a_l,\bar{\mathbf{x}}_{-i},\bar q_i)
$
has expectation $P_i(a_l,\mathbf{x}_{-i}^*,q_i^*)$. Changing one of the $(m-1)\kappa$ action samples used to form $\bar{\mathbf{x}}_{-i}$ changes this quantity by at most $1/\kappa$, and changing one of the $\tau$ sample-index draws used to form $\bar{\mathbf q}_i$ changes this quantity by at most $1/\tau$. 
To see the bounded-difference constants, we can write the random payoff as
\[
P_i(a_l,\bar{\mathbf x}_{-i},\bar{\mathbf  q}_i)=\sum_{k=1}^K\bar{q}_{i,k}\sum_{\mathbf a_{-i}\in\mathbf A_{-i}}u_i(a_l,\mathbf a_{-i}\mid \xi_k)\prod_{h\ne i}\bar x_h(a_h).
\]
Now change one of the $\kappa$ sampled actions used to form $\bar{\mathbf x}_h$ for some opponent $h\ne i$, say from $b$ to $b'$. This changes $\bar{\mathbf x}_h$ by $(e_{b'}-e_b)/\kappa$. Since $P_i(a_l,\bar{\mathbf x}_{-i},\bar q_i)$ is affine in $\bar{\mathbf x}_h$, the payoff changes by $(1/\kappa)|G_h(b')-G_h(b)|$, where $G_h(b)$ is the expected payoff obtained by fixing player $h$'s action to $b$ and averaging over the other empirical strategies and $\bar{\mathbf q}_i$. Because all pure payoffs lie in $[0,1]$, we have $G_h(b),G_h(b')\in[0,1]$, so the change is at most $1/\kappa$. There are $(m-1)\kappa$ such opponent-action samples. Similarly, changing one of the $\tau$ sample-index draws used to form $\bar{\mathbf q}_i$, say from $k$ to $k'$, changes $\bar{\mathbf q}_i$ by $(e_{k'}-e_k)/\tau$. Since $P_i(a_l,\bar{\mathbf x}_{-i},\bar{\mathbf  q}_i)$ is affine in $\bar{\mathbf  q}_i$, the payoff changes by $(1/\tau)|H_i(k')-H_i(k)|$, where $H_i(k)=\sum_{\mathbf a_{-i}\in\mathbf A_{-i}}u_i(a_l,\mathbf a_{-i}\mid \xi_k)\prod_{h\ne i}\bar x_h(a_h)$. Again $H_i(k),H_i(k')\in[0,1]$, and hence the change is at most $1/\tau$.

Therefore McDiarmid's inequality gives
\[
\mathbb P
\left(
\left|
P_i(a_l,\bar{\mathbf{x}}_{-i},\bar{\mathbf q}_i) -
P_i(a_l,\mathbf{x}_{-i}^*,\mathbf q_i^*)
\right| > t \right)
\le
2\exp
\left(
-\frac{2t^2}{
\frac{m-1}{\kappa}+\frac{1}{\tau}
}
\right).
\]
Equivalently,
\[
\mathbb P
\left(
\left|
P_i(a_l,\bar{\mathbf{x}}_{-i},\bar{ \mathbf q}_i) -
P_i(a_l,\mathbf{x}_{-i}^*,\mathbf q_i^*)
\right| > t \right)
\le
2\exp
\left(
-\frac{2\kappa\tau t^2}{(m-1)\tau+\kappa}
\right).
\]
Taking a union bound over the $n_i$ pure actions of player $i$ yields
\[
\mathbb P(D_i>t)
\le
2n_i
\exp
\left(
-\frac{2\kappa\tau t^2}{(m-1)\tau+\kappa}
\right)
\le
2n_i
\exp
\left(
-\frac{2\kappa\tau t^2}{m\tau+\kappa}
\right).
\]

Next, we bound $D'_i$. Since $U\subseteq\Delta_K$,
\[
D'_i
\le
\max_{k\in\{1,\ldots,K\}}
\left|
u_i(\mathbf{x}^*\mid \xi_k)
-
u_i(\bar{\mathbf{x}}\mid \xi_k)
\right|.
\]
For fixed $k$, the random variable $u_i(\bar{\mathbf{x}}\mid \xi_k)$ has expectation $u_i(\mathbf{x}^*\mid \xi_k)$. It depends on $m\kappa$ action samples, and changing one such sample changes the value by at most $1/\kappa$. Hence McDiarmid's inequality gives
\[
\mathbb P
\left(
\left|
u_i(\mathbf{x}^*\mid \xi_k)
-
u_i(\bar{\mathbf{x}}\mid \xi_k)
\right|
>t
\right)
\le
2\exp
\left(
-\frac{2\kappa t^2}{m}
\right).
\]
Taking a union bound over $k\in\{1,\ldots,K\}$ gives
\[
\mathbb P(D'_i>t)
\le
2K
\exp
\left(
-\frac{2\kappa t^2}{m}
\right).
\]

Finally, we bound $E_i$. Conditional on $\bar{\mathbf{x}}$, the random variable $P_i(a_l,\bar{\mathbf{x}}_{-i},\bar q_i)$ has expectation $P_i(a_l,\bar{\mathbf{x}}_{-i},q_i^*)$. Changing one of the $\tau$ sample-index draws used to form $\bar q_i$, say from $k$ to $k'$, changes $\bar q_i$ by $(e_{k'}-e_k)/\tau$ and hence changes $P_i(a_l,\bar{\mathbf{x}}_{-i},\bar q_i)$ by at most $1/\tau$, since the payoff averaged over the fixed profile $\bar{\mathbf{x}}_{-i}$ lies in $[0,1]$. Since the maximum over $\mathbf y_i\in\mathbf X_i$ is again attained at a pure action, for all $\mathbf x_{-i}\in\prod_{h\ne i}\mathbf X_h$, for every fixed realization $\mathbf x$ of $\bar{\mathbf x}$, McDiarmid's inequality and a union bound over the $n_i$ pure actions give
$
\mathbb P(E_i >t \mid \bar{\mathbf x}= \mathbf x)\le 2n_i\exp(-2\tau t^2).
$
Averaging this bound over the random draw of $\bar{\mathbf x}$ gives
$$
\mathbb P(E_i > t)\le 2n_i\exp(-2\tau t^2).
$$

Set $t=\epsilon/3$ and let
\[
\Lambda_{\kappa,\tau}
:=
\frac{\kappa\tau}{m\tau+\kappa}.
\]
Since $\Lambda_{\kappa,\tau}\le \kappa/m$ and $\Lambda_{\kappa,\tau}\le \tau$, the bounds above imply
\[
\mathbb P(D_i>\epsilon/3)
\le
2n_i
\exp
\left(
-\frac{2\epsilon^2\Lambda_{\kappa,\tau}}{9}
\right),
\mathbb P(D'_i>\epsilon/3)
\le
2K
\exp
\left(
-\frac{2\epsilon^2\Lambda_{\kappa,\tau}}{9}
\right),
\mathbb P(E_i>\epsilon/3)
\le
2n_i
\exp
\left(
-\frac{2\epsilon^2\Lambda_{\kappa,\tau}}{9}
\right).
\]
Taking a union bound over all players and all three bad events gives
\[
\mathbb P
\left(
\exists i:
D_i>\epsilon/3
\text{ or }
D'_i>\epsilon/3
\text{ or }
E_i>\epsilon/3
\right)
\le
2\left(mK+2\sum_{i=1}^m n_i\right)
\exp
\left(
-\frac{2\epsilon^2\Lambda_{\kappa,\tau}}{9}
\right).
\]
By the assumed lower bound on $\Lambda_{\kappa,\tau}$, this probability is strictly smaller than one. Therefore there exists a realization of the samples such that, for every player $i$,
\[
D_i\le \epsilon/3,\qquad D'_i\le \epsilon/3,\qquad E_i\le \epsilon/3.
\]
Fix such a realization.
We now show that the corresponding $\bar{\mathbf{x}}$ is an $\epsilon$-DRE. First, for each player $i$,
\[
\max_{\mathbf y_i\in\mathbf X_i}
P_i(\mathbf y_i,\bar{\mathbf{x}}_{-i},\bar{\mathbf q}_i)
\le
\max_{\mathbf y_i\in\mathbf X_i}
P_i(\mathbf y_i,\mathbf{x}_{-i}^*,\mathbf q_i^*) + D_i = P_i(\mathbf{x}_i^*,\mathbf{x}_{-i}^*,\mathbf q_i^*) + D_i.
\]
where the equality is by the extended-game equilibrium condition.
Also, for every $\mathbf q_i\in U$,
$
P_i(\mathbf{x}_i^*,\mathbf{x}_{-i}^*,\mathbf q_i)
\ge
P_i(\mathbf{x}_i^*,\mathbf{x}_{-i}^*,\mathbf q_i^*),
$
and by the definition of $D'_i$,
$
P_i(\bar{\mathbf{x}}_i,\bar{\mathbf{x}}_{-i},\mathbf q_i)
\ge
P_i(\mathbf{x}_i^*,\mathbf{x}_{-i}^*,\mathbf q_i)-D'_i.
$
Therefore,
\[
\min_{\mathbf q_i\in U}
P_i(\bar{\mathbf{x}}_i,\bar{\mathbf{x}}_{-i},\mathbf q_i)
\ge
P_i(\mathbf{x}_i^*,\mathbf{x}_{-i}^*,\mathbf q_i^*)-D'_i.
\]
Combining the last two displayed equations gives
\begin{equation} \label{eq:dd'}
\max_{\mathbf y_i\in\mathbf X_i}
P_i(\mathbf y_i,\bar{\mathbf{x}}_{-i},\bar{\mathbf q}_i)
-
\min_{\mathbf q_i\in U}
P_i(\bar{\mathbf{x}}_i,\bar{\mathbf{x}}_{-i},\mathbf q_i)
\le
D_i+D'_i.
\end{equation}
Now fix any $\mathbf y_i\in\mathbf X_i$. We have
$
\rho_i(\mathbf y_i,\bar{\mathbf{x}}_{-i}) = \min_{\mathbf q_i\in U}
P_i(\mathbf y_i,\bar{\mathbf{x}}_{-i},\mathbf q_i) \le P_i(\mathbf y_i,\bar{\mathbf{x}}_{-i},\mathbf q_i^*).
$
By the definition of $E_i$,
$
P_i(\mathbf y_i,\bar{\mathbf{x}}_{-i},\mathbf \mathbf \mathbf{q}_i^*)
\le
P_i(\mathbf y_i,\bar{\mathbf{x}}_{-i},\bar{\mathbf q}_i)+E_i.
$
Thus,
\[
\rho_i(\mathbf y_i,\bar{\mathbf{x}}_{-i})
\le
\max_{\mathbf z_i\in\mathbf X_i}
P_i(\mathbf z_i,\bar{\mathbf{x}}_{-i},\bar{\mathbf q}_i)
+
E_i.
\]
Using the above and the bound~\eqref{eq:dd'} derived,
\[
\rho_i(\mathbf y_i,\bar{\mathbf{x}}_{-i})
\le
\min_{\mathbf q_i\in U}
P_i(\bar{\mathbf{x}}_i,\bar{\mathbf{x}}_{-i},\mathbf q_i)
+
D_i+D'_i+E_i.
\]
Since $
\min_{\mathbf q_i\in U}
P_i(\bar{\mathbf{x}}_i,\bar{\mathbf{x}}_{-i},\mathbf q_i)
=
\rho_i(\bar{\mathbf{x}}_i,\bar{\mathbf{x}}_{-i}),
$
we obtain
\[
\rho_i(\mathbf y_i,\bar{\mathbf{x}}_{-i})
\le
\rho_i(\bar{\mathbf{x}}_i,\bar{\mathbf{x}}_{-i})
+
D_i+D'_i+E_i
\le
\rho_i(\bar{\mathbf{x}}_i,\bar{\mathbf{x}}_{-i})
+
\epsilon.
\]
This holds for every $\mathbf y_i\in\mathbf X_i$ and every player $i$. Therefore $\bar{\mathbf{x}}$ is an $\epsilon$-DRE of the original CUMG.
\end{proof}

\paragraph{Proof of Lemma~\ref{lem:correctness}}
\begin{proof} 
First, we use the stronger conclusion established in the proof of Theorem~\ref{thm:small_support_lifted_drg}. On the good sampling event in that proof, there exist a $\kappa$-supported profile $\bar{\mathbf x}$ and $\tau$-supported sparse witnesses $\bar{\mathbf q}$ with supports $S^\circ_i=\mathrm{supp}(\bar{\mathbf x}_i)$ and $T^\circ_i=\mathrm{supp}(\bar{\mathbf q}_i)$ such that, for every player $i$, the proof errors satisfy $D_i\leq\epsilon/3$, $D_i'\leq\epsilon/3$, and $E_i\leq\epsilon/3$. The first two errors control the restricted-profile gap: for every $i$, $\max_{a_l\in\mathbf A_i}P_i(a_l,\bar{\mathbf x}_{-i},\bar{\mathbf q}_i)-\rho_i^K(\bar{\mathbf x})\leq D_i+D_i'\leq 2\epsilon/3$, and hence $\widehat\eta(\bar{\mathbf x},\bar{\mathbf q})\leq 2\epsilon/3$, where $\widehat\eta(\mathbf x,\mathbf q):=\max_i{\max_{a_l\in\mathbf A_i}P_i(a_l,\mathbf x_{-i},\mathbf q_i)-\rho_i^K(\mathbf x)}$. 
The role of $E_i$ is to compare deviations evaluated against the sparse witness $\bar{\mathbf q}_i$ with deviations evaluated against the true adversarial distribution $\mathbf q_i^*$ from the proof of Theorem~\ref{thm:small_support_lifted_drg}. For any mixed strategy deviation $\mathbf y_i\in\mathbf X_i$, 
$$\rho_i(\mathbf y_i,\bar{\mathbf x}_{-i})=\min_{\mathbf q_i\in U}P_i(\mathbf y_i,\bar{\mathbf x}_{-i},\mathbf q_i)\leq P_i(\mathbf y_i,\bar{\mathbf x}_{-i},\mathbf q_i^*)\leq P_i(\mathbf y_i,\bar{\mathbf x}_{-i},\bar{\mathbf q}_i)+E_i.$$ 
Since $P_i(\mathbf y_i,\bar{\mathbf x}_{-i},\bar{\mathbf q}_i)$ is linear in $\mathbf y_i$, it is at most $\max_{a_l\in\mathbf A_i}P_i(a_l,\bar{\mathbf x}_{-i},\bar{\mathbf q}_i)$. Combining this with the restricted-profile gap bound gives 
$$\rho_i(\mathbf y_i,\bar{\mathbf x}_{-i})\leq \rho_i(\bar{\mathbf x})+2\epsilon/3+\epsilon/3=\rho_i(\bar{\mathbf x})+\epsilon.$$
Thus, $\eta_K(\bar{\mathbf x})\leq\epsilon$. Thus, the theorem proof supplies not only an $\epsilon$-DRE candidate, but also a support pair $(S^\circ,T^\circ)$ that passes the restricted-profile screen.

Since the enumeration is exhaustive, the algorithm considers $(S^\circ,T^\circ)$. Since $\bar{\mathbf x}_i\in\Delta(S^\circ_i)$ and $\bar{\mathbf q}_i\in\Delta(T^\circ_i)$, the pair $(\bar{\mathbf x},\bar{\mathbf q})$ is feasible for the screening problem, and therefore $\widehat\eta_{S^\circ,T^\circ}\leq \widehat\eta(\bar{\mathbf x},\bar{\mathbf q})\leq 2\epsilon/3\leq\epsilon_{\mathrm{scr}}$. Thus, this support pair passes the screening. The same $\bar{\mathbf x}$ is feasible for the final $S^\circ$-supported full-game regret minimization, so $\eta_{S^\circ}^\star\leq\eta_K(\bar{\mathbf x})\leq\epsilon$. Hence, when this support pair is reached, the final check succeeds; if the algorithm returns earlier, soundness already guarantees that the returned profile is an $\epsilon$-DRE.
\end{proof}

\subsubsection{Restricted-profile Gap for MSD and CVaR Games}
\label{subsec:restrictedgap}

\noindent \textbf{Restricted-profile gap for MSD}: Recall that, for MSD,
$
\mu_i^{\mathbb P}(\mathbf x)=\sum_{k=1}^K \mathbb P(\xi_k)u_i(\mathbf x\mid \xi_k)
$
and
$
d_{i,k}(\mathbf x_i,\mathbf x_{-i})
=
u_i(\mathbf x\mid \xi_k)-\mu_i^{\mathbb P}(\mathbf x),
$
so that $z_{i,k}$ represents the negative part $\min\{0,d_{i,k}(\mathbf x_i,\mathbf x_{-i})\}$.

For a candidate action support profile $S=(S_1,\ldots,S_m)$ and data-sample support profile $T=(T_1,\ldots,T_m)$, define
\[
\begin{aligned}
\widehat\eta^{\mathrm{MSD}}_{S,T}
:=
\min_{\mathbf x,\mathbf q,\mathbf z,\eta}\quad
& \eta\\
\text{s.t.}\quad
& \mathbf x_i\in\Delta(\mathbf A_i),\qquad
x_i(a_l)=0 \quad \forall a_l\notin S_i,
&& \forall i,\\
& \mathbf q_i\in\Delta_K,\qquad
q_{i,k}=0 \quad \forall k\notin T_i,
&& \forall i,\\
& \sum_{k=1}^K q_{i,k}u_i(a_l,\mathbf x_{-i}\mid \xi_k) \le \mu_i^{\mathbb P}(\mathbf x) +
\gamma_s\sum_{k=1}^K \mathbb P(\xi_k)z_{i,k}
+\eta,
&& \forall i,\ \forall a_l\in\mathbf A_i ,\\
& z_{i,k}\le d_{i,k}(\mathbf x_i,\mathbf x_{-i}),\qquad
z_{i,k}\le 0,
&& \forall i,\ \forall k.
\end{aligned}
\]

\smallskip

\noindent \textbf{Restricted-profile gap for CVaR}: Recall that, for CVaR, the robust utility is
\[
\rho_i^{\mathrm{CVaR}}(\mathbf x)
=
(1-\gamma_c)\mu_i^{\mathbb P}(\mathbf x) + \gamma_c
\max_{z_i\in\mathbb R}
\left[
z_i+\frac{1}{\alpha}\sum_{k=1}^K \mathbb P(\xi_k)\nu_{i,k}
\right],
\]
where $\nu_{i,k}$ represents
$
\min\{0,u_i(\mathbf x\mid \xi_k)-z_i\}
$
through the constraints
$
\nu_{i,k}\le u_i(\mathbf x\mid \xi_k)-z_i
$
and
$
\nu_{i,k}\le 0.
$
For a candidate action support profile $S=(S_1,\ldots,S_m)$ and data-sample support profile $T=(T_1,\ldots,T_m)$, define
\[
\begin{aligned}
\widehat\eta^{\mathrm{CVaR}}_{S,T}
:=
\min_{\mathbf x,\mathbf q,\mathbf z,\boldsymbol\nu,\eta}\quad
& \eta\\
\text{s.t.}\quad
& \mathbf x_i\in\Delta(\mathbf A_i),\qquad
x_i(a_l)=0 \quad \forall a_l\notin S_i,
&& \forall i,\\
& \mathbf q_i\in\Delta_K,\qquad
q_{i,k}=0 \quad \forall k\notin T_i,
&& \forall i,\\
& \nu_{i,k}\le u_i(\mathbf x\mid \xi_k)-z_i,\qquad
\nu_{i,k}\le 0,
&& \forall i,\ \forall k,\\
& \sum_{k=1}^K q_{i,k}u_i(a_l,\mathbf x_{-i}\mid \xi_k)
\le
(1-\gamma_c)\mu_i^{\mathbb P}(\mathbf x)
+
\gamma_c
\left[
z_i+\frac{1}{\alpha}\sum_{k=1}^K\mathbb P(\xi_k)\nu_{i,k}
\right]
+\eta,
&& \forall i,\ \forall a_l\in\mathbf A_i .
\end{aligned}
\]

\subsection{Details and Proofs in Section~\ref*{subsec:scalable-samples-cumg}~\nameref*{subsec:scalable-samples-cumg}}
\begin{proposition}[Interior strategy maximizers under entropy regularization]
\label{prop:entropy-interiority}
Fix a player $i$, $\kappa>0$, $\tau>0$, 
$\mathbf{x}_{-i}\in \prod_{j\neq i}\mathbf X_j$, and 
$\boldsymbol{\theta}_i\in\Theta_i$. Suppose that the mapping
$
\mathbf{x}_i \mapsto 
\rho_{i}^{\mathrm{aux},\tau}
(\mathbf{x}_i,\mathbf{x}_{-i},\boldsymbol{\theta}_i)
$
has bounded gradients on $\mathbf X_i$. Then every maximizer of
\[
\max_{\mathbf{x}_i\in\mathbf X_i}
\left\{
\rho_{i}^{\mathrm{aux},\tau}
(\mathbf{x}_i,\mathbf{x}_{-i},\boldsymbol{\theta}_i)
+
\kappa H_i(\mathbf{x}_i)
\right\}
\]
lies in the relative interior of $\mathbf X_i$. Equivalently, if
$\mathbf{x}_i^\star$ is a maximizer, then
$
x_i^\star(a_l)>0,\; \forall l\in\{1,\dots,n_i\}.
$
\end{proposition}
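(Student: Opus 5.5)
We argue by contradiction, using the standard infinite-slope property of the entropy at the boundary of the simplex. Write $F(\mathbf x_i):=\rho_i^{\mathrm{aux},\tau}(\mathbf x_i,\mathbf x_{-i},\boldsymbol\theta_i)+\kappa H_i(\mathbf x_i)$. The set $\mathbf X_i$ is compact, and $F$ is continuous: $\rho_i^{\mathrm{aux},\tau}$ is differentiable, and $H_i$ is continuous with the convention $0\log 0=0$. Hence a maximizer exists.

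Suppose some maximizer $\mathbf x_i^\star$ has $x_i^\star(a_{l_0})=0$. Since $\mathbf 1^\top\mathbf x_i^\star=1$, there is an index $l_1$ with $x_i^\star(a_{l_1})>0$. Let $\mathbf d=e_{l_0}-e_{l_1}$ and consider $\mathbf x_i(s)=\mathbf x_i^\star+s\mathbf d$ for $s\in[0,x_i^\star(a_{l_1})]$. This path stays in $\mathbf X_i$.

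For the smooth part, we use the assumption that the gradient of $\rho_i^{\mathrm{aux},\tau}$ in $\mathbf x_i$ is bounded on $\mathbf X_i$, say by $G$ in $\ell_\infty$. By the mean value theorem along the segment,
\[
\rho_i^{\mathrm{aux},\tau}(\mathbf x_i(s),\mathbf x_{-i},\boldsymbol\theta_i)-\rho_i^{\mathrm{aux},\tau}(\mathbf x_i^\star,\mathbf x_{-i},\boldsymbol\theta_i)\ge -2Gs.
\]

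For the entropy part, only coordinates $l_0$ and $l_1$ change. Write $c=x_i^\star(a_{l_1})>0$. Then
\[
H_i(\mathbf x_i(s))-H_i(\mathbf x_i^\star)=-s\log s-(c-s)\log(c-s)+c\log c .
\]
The last two terms equal $s(\log c+1)+o(s)$ as $s\downarrow 0$. The first term is $s\log(1/s)$. So the increment is $s\bigl(\log(1/s)+\log c+1+o(1)\bigr)$.

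Combining the two parts gives
\[
F(\mathbf x_i(s))-F(\mathbf x_i^\star)\ge s\bigl(\kappa\log(1/s)+\kappa(\log c+1)-2G+o(1)\bigr).
\]
Because $\kappa>0$, the bracket tends to $+\infty$ as $s\downarrow0$, so it is strictly positive for all small $s>0$. This contradicts the maximality of $\mathbf x_i^\star$. Therefore every coordinate of every maximizer is positive, which is the relative-interior claim. Since the argument applies to an arbitrary maximizer, no concavity or uniqueness is needed.

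The main obstacle is the one-sided derivative estimate at the boundary. The entropy is not differentiable there, so KKT conditions cannot be invoked directly. The direct difference computation above avoids this: it uses the bounded gradient, rather than any Lipschitz constant for $H_i$, to show that the $s\log(1/s)$ gain dominates the linear loss. If the bounded-gradient hypothesis held only on the relative interior, I would apply the mean value theorem on $(0,s]$, where the segment is interior in the $l_0$ coordinate, and use continuity of $F$ up to $s=0$.
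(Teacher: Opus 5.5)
Your proposal is correct and follows essentially the same argument as the paper: shift $s$ mass from a positive coordinate to the zero coordinate, bound the loss in the smooth part by $2Gs$ using the bounded gradient (the paper phrases this as an $\ell_1$-Lipschitz constant $L_i$), and let the $\kappa s\log(1/s)$ entropy gain dominate. Your remarks on existence of a maximizer and on a gradient bound that holds only on the relative interior are harmless refinements that the statement does not need.
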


\begin{proof}
Let
$
f_i(\mathbf{x}_i)
\coloneqq
\rho_{i}^{\mathrm{aux},\tau}
(\mathbf{x}_i,\mathbf{x}_{-i},\boldsymbol{\theta}_i).
$
Since $f_i$ has bounded gradients on $\mathbf X_i$, it is Lipschitz on
$\mathbf X_i$. Hence, there exists $L_i<\infty$ such that, for all
$\mathbf{x}_i,\mathbf{y}_i\in\mathbf X_i$,
\[
|f_i(\mathbf{y}_i)-f_i(\mathbf{x}_i)|
\leq L_i\|\mathbf{y}_i-\mathbf{x}_i\|_1.
\]
Suppose, for contradiction, that a maximizer
$\mathbf{x}_i^\star$ lies on the boundary of $\mathbf X_i$. Then there exists
an action $a_r$ such that $x_i^\star(a_r)=0$. Since
$\mathbf{x}_i^\star\in\mathbf X_i$, there must also exist an action $a_s$
such that $x_i^\star(a_s)>0$. For sufficiently small
$\epsilon\in(0,x_i^\star(a_s))$, define the feasible perturbation
$
\mathbf{x}_i^\epsilon = \mathbf{x}_i^\star + \epsilon \mathbf e_r - \epsilon \mathbf e_s,
$
where $\mathbf e_r$ and $\mathbf e_s$ are the unit vectors corresponding to
actions $a_r$ and $a_s$.
By Lipschitzness,
\[
f_i(\mathbf{x}_i^\epsilon)-f_i(\mathbf{x}_i^\star)
\geq
- L_i\|\mathbf{x}_i^\epsilon-\mathbf{x}_i^\star\|_1
=
-2L_i\epsilon.
\]
On the other hand, using the convention $0\log 0=0$, the entropy gain from
moving $\epsilon$ probability mass to the previously unused action satisfies
\[
H_i(\mathbf{x}_i^\epsilon)-H_i(\mathbf{x}_i^\star)
=
-\epsilon\log\epsilon
-
\big(x_i^\star(a_s)-\epsilon\big)
\log\big(x_i^\star(a_s)-\epsilon\big)
+
x_i^\star(a_s)\log x_i^\star(a_s).
\]
As $\epsilon$ goes to $0$,
\[
H_i(\mathbf{x}_i^\epsilon)-H_i(\mathbf{x}_i^\star)
=
\epsilon\log(1/\epsilon)+O(\epsilon).
\]
Therefore,
\[
\begin{aligned}
&
\left[ f_i(\mathbf{x}_i^\epsilon)
+ \kappa H_i(\mathbf{x}_i^\epsilon) \right] - \left[ f_i(\mathbf{x}_i^\star) + \kappa H_i(\mathbf{x}_i^\star)
\right] \geq
-2L_i\epsilon +
\kappa\epsilon\log(1/\epsilon) + O(\epsilon).
\end{aligned}
\]
For sufficiently small $\epsilon>0$, the positive term
$\kappa\epsilon\log(1/\epsilon)$ dominates the linear terms in $\epsilon$.
Hence the entropy-regularized objective strictly increases at
$\mathbf{x}_i^\epsilon$, contradicting the optimality of
$\mathbf{x}_i^\star$.
Thus, no maximizer can assign zero probability to any action, and every
maximizer lies in the relative interior of $\mathbf X_i$.
\end{proof}

\paragraph{Proof of Lemma~\ref{lem:smoothedResidualRegret}}

\begin{proof}
Let $\rho_i^\tau(\mathbf x)=\sup_{\boldsymbol{\theta}_i\in\bar\Theta_i}\rho_i^{\mathrm{aux},\tau}(\mathbf x,\boldsymbol{\theta}_i)$. The uniform smoothing bound implies $|\rho_i^\tau(\mathbf x)-\rho_i(\mathbf x)|\le\delta_i(\tau)$. Therefore, for any $\mathbf y_i\in\mathbf X_i$,
\[
\rho_i(\mathbf y_i,\mathbf x_{-i})-\rho_i(\mathbf x)\le \rho_i^\tau(\mathbf y_i,\mathbf x_{-i})-\rho_i^\tau(\mathbf x)+2\delta_i(\tau)\le \sup_{\boldsymbol{\vartheta}_i\in \bar\Theta_i}\rho_i^{\mathrm{aux},\tau}(\mathbf y_i,\mathbf x_{-i},\boldsymbol{\vartheta}_i)-\rho_i^{\mathrm{aux},\tau}(\mathbf x,\boldsymbol{\theta}_i)+2\delta_i(\tau).
\]
Taking the supremum over $\mathbf y_i\in\mathbf X_i$ gives the first claim. For the second claim, define $\rho_i^{\kappa,\tau}(\mathbf x,\boldsymbol{\theta}_i)=\rho_i^{\mathrm{aux},\tau}(\mathbf x,\boldsymbol{\theta}_i)+\kappa H_i(\mathbf x_i)$. By joint concavity, for any $\mathbf y_i$ and $\boldsymbol{\vartheta}_i$,
\[
\rho_i^{\mathrm{aux},\tau}(\mathbf y_i,\mathbf x_{-i},\boldsymbol{\vartheta}_i)-\rho_i^{\mathrm{aux},\tau}(\mathbf x,\boldsymbol{\theta}_i)\le \langle \mathbf q_i^\tau,\mathbf y_i-\mathbf x_i\rangle+\langle \mathbf h_i^\tau,\boldsymbol{\vartheta}_i-\boldsymbol{\theta}_i\rangle+\kappa\left(H_i(\mathbf x_i)-H_i(\mathbf y_i)\right).
\]
Since $\mathbf y_i-\mathbf x_i$ has zero sum, $\langle \mathbf q_i^\tau,\mathbf y_i-\mathbf x_i\rangle=\langle \mathbf r_i^\tau,\mathbf y_i-\mathbf x_i\rangle\le\sqrt{2}\|\mathbf r_i^\tau\|_2$, using the Euclidean diameter of the simplex. Also, $H_i(\mathbf x_i)-H_i(\mathbf y_i)\le\log n_i$. Taking the supremum over $\mathbf y_i$ and $\boldsymbol{\vartheta}_i \in \bar\Theta_i$ gives the second claim. Combining the two bounds proves the final statement.
\end{proof}

\subsubsection{Convergence to Stationarity}
\label{subsec:convtostationarity}

Let $\widetilde{\mathcal Z}$ be a compact set for $\widetilde{\mathbf z}=(\mathbf w,\boldsymbol{\theta})$, and let $\mathbf z=(\mathbf x,\boldsymbol{\theta})$ with $\mathbf x_i=\operatorname{softmax}(\mathbf w_i)$ such that the iterates generated by Algorithm~\ref{alg:gempStyleCUMG} remain in this compact set. Since $\widetilde{\mathcal Z}$ is compact, the logits are bounded on $\widetilde{\mathcal Z}$, and hence there exists $\underline x>0$ such that $x_i(a_l)\ge \underline x$ for all $i,l$. First we claim that there exist finite constants $L_{\kappa,\tau}$ and $\sigma^2$ such that $M^{\kappa,\tau}$ is $L_{\kappa,\tau}$-smooth on $\widetilde{\mathcal Z}$ and
$
\mathbb E[\|\widehat G_t^{\kappa,\tau}-\nabla M^{\kappa,\tau}(\widetilde{\mathbf z}_t)\|_2^2\mid\mathcal F_t]\le \sigma^2.
$
Indeed, for fixed $\kappa>0$ and $\tau>0$, the smoothed sample losses are continuously differentiable. The softmax parametrization keeps the mixed strategies in the relative interior of the simplex, and compactness of $\widetilde{\mathcal Z}$ keeps the entropy-gradient terms bounded. Since the numbers of players, actions, and samples are finite, the residual map, its Jacobian with respect to $\widetilde{\mathbf z}$, and the corresponding mini-batch estimators are uniformly bounded on $\widetilde{\mathcal Z}$. Therefore $\nabla M^{\kappa,\tau}$ is Lipschitz on $\widetilde{\mathcal Z}$, giving a finite constant $L_{\kappa,\tau}$. The same boundedness gives a finite constant $\sigma^2$ for the conditional second moment of the stochastic gradient noise.

\begin{theorem}[Convergence to stationarity]
\label{thm:gempStyleStationarityCUMG}
Fix $\kappa>0$ and $\tau>0$. Suppose the iterates generated by Algorithm~\ref{alg:gempStyleCUMG} remain in a compact set $\widetilde{\mathcal Z}$ for $\widetilde{\mathbf z}=(\mathbf w,\boldsymbol{\theta})$. Let
$
\widetilde{\mathcal G}_\eta(\widetilde{\mathbf z})
=
\frac1\eta
\left(
\widetilde{\mathbf z}
-
\Pi_{\widetilde{\mathcal Z}}
\left[
\widetilde{\mathbf z}
-
\eta\nabla M^{\kappa,\tau}(\widetilde{\mathbf z})
\right]
\right)
$
be the projected-gradient mapping. If Algorithm~\ref{alg:gempStyleCUMG} is run with constant stepsize $\eta\le 1/L_{\kappa,\tau}$ and $\bar t$ is drawn uniformly from $\{0,\ldots,T-1\}$, then
\[
\mathbb E[\|\widetilde{\mathcal G}_\eta(\widetilde{\mathbf z}_{\bar t})\|_2^2]\le \frac{2(M^{\kappa,\tau}(\widetilde{\mathbf z}_0)-M^{\kappa,\tau}_{\inf})}{\eta T}+L_{\kappa,\tau}\eta\sigma^2,
\]
where $M^{\kappa,\tau}_{\inf}=\inf_{\widetilde{\mathbf z}\in\widetilde{\mathcal Z}}M^{\kappa,\tau}(\widetilde{\mathbf z})\ge0$. In particular, choosing $\eta=\Theta(T^{-1/2})$ subject to $\eta\le 1/L_{\kappa,\tau}$ gives
$
\mathbb E[\|\widetilde{\mathcal G}_\eta(\widetilde{\mathbf z}_{\bar t})\|_2^2]=O(T^{-1/2}).
$
Thus the method converges, in the standard stochastic nonconvex sense, to stationary points of the smoothed squared error $M^{\kappa,\tau}$ under the logit parametrization. If an output iterate additionally satisfies the certificate in Lemma~\ref{lem:smoothedResidualRegret}, then the corresponding strategy profile $\mathbf x_i=\operatorname{softmax}(\mathbf w_i)$ is an approximate DRE of the original CUMG.
\end{theorem}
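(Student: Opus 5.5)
This is the standard nonconvex projected-SGD analysis, applied to the smoothed squared error $M^{\kappa,\tau}$ on the compact set $\widetilde{\mathcal Z}$. We use the two facts established just before the statement. First, $M^{\kappa,\tau}$ is $L_{\kappa,\tau}$-smooth on $\widetilde{\mathcal Z}$. Second, the conditional noise variance of $\widehat G_t^{\kappa,\tau}$ is bounded by $\sigma^2$. We also use the unbiasedness $\mathbb E[\widehat G_t^{\kappa,\tau}\mid\mathcal F_t]=\nabla M^{\kappa,\tau}(\widetilde{\mathbf z}_t)$, shown in Section~\ref{subsec:scalable-samples-cumg} via the two independent mini-batches. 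Write $g_t=\nabla M^{\kappa,\tau}(\widetilde{\mathbf z}_t)$, $\hat g_t=\widehat G_t^{\kappa,\tau}$, and $e_t=\hat g_t-g_t$. Let $\hat{\mathcal G}_t=(\widetilde{\mathbf z}_t-\widetilde{\mathbf z}_{t+1})/\eta$ be the stochastic gradient mapping.

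The first step is a one-step descent inequality. By the descent lemma,
$M(\widetilde{\mathbf z}_{t+1})\le M(\widetilde{\mathbf z}_t)-\eta\langle g_t,\hat{\mathcal G}_t\rangle+\tfrac{L\eta^2}{2}\|\hat{\mathcal G}_t\|^2$.
The projection characterization gives $\langle \hat g_t,\hat{\mathcal G}_t\rangle\ge\|\hat{\mathcal G}_t\|^2$, so $\langle g_t,\hat{\mathcal G}_t\rangle\ge\|\hat{\mathcal G}_t\|^2-\langle e_t,\hat{\mathcal G}_t\rangle$. The cross term $\langle e_t,\hat{\mathcal G}_t\rangle$ is not mean-zero, because $\hat{\mathcal G}_t$ depends on the noise. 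We handle it by splitting $\hat{\mathcal G}_t=\widetilde{\mathcal G}_\eta(\widetilde{\mathbf z}_t)+(\hat{\mathcal G}_t-\widetilde{\mathcal G}_\eta(\widetilde{\mathbf z}_t))$. Nonexpansiveness of $\Pi_{\widetilde{\mathcal Z}}$ (which requires $\widetilde{\mathcal Z}$ convex; we take it to be a box or ball) gives $\|\hat{\mathcal G}_t-\widetilde{\mathcal G}_\eta(\widetilde{\mathbf z}_t)\|\le\|e_t\|$. The first piece then has zero conditional mean against $e_t$, and the second contributes at most $\|e_t\|^2$.

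The second step relates $\|\hat{\mathcal G}_t\|^2$ to the deterministic mapping. We use $\|\widetilde{\mathcal G}_\eta\|^2\le 2\|\hat{\mathcal G}_t\|^2+2\|e_t\|^2$, or alternatively the sharper Ghadimi--Lan--Zhang route that works directly with $\widetilde{\mathcal G}_\eta$. Combining with $\eta\le 1/L$ yields a per-step inequality of the form
$\tfrac{\eta}{2}\mathbb E\|\widetilde{\mathcal G}_\eta(\widetilde{\mathbf z}_t)\|^2\le \mathbb E[M(\widetilde{\mathbf z}_t)-M(\widetilde{\mathbf z}_{t+1})]+c\,\eta\,\mathbb E\|e_t\|^2$.
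The main obstacle is bookkeeping the constants so that the noise term comes out exactly as $L\eta\sigma^2$ rather than a generic $O(\eta\sigma^2)$. If the direct splitting gives a worse constant, we follow the Ghadimi--Lan--Zhang projected-SGD lemma, which gives a bound of exactly this form. Otherwise we accept an absolute-constant change, since the stated rate is unaffected.

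The final step sums over $t=0,\dots,T-1$, telescopes, and uses $M(\widetilde{\mathbf z}_T)\ge M_{\inf}\ge 0$. Dividing by $\eta T/2$ and noting that a uniformly random $\bar t$ averages the terms gives the displayed bound. Choosing $\eta=\min(1/L,cT^{-1/2})$ balances the two terms to $O(T^{-1/2})$. The closing sentence of the theorem is then immediate: if an output iterate satisfies the certificate of Lemma~\ref{lem:smoothedResidualRegret}, that lemma directly gives that $\mathbf x_i=\operatorname{softmax}(\mathbf w_i)$ is an approximate DRE.
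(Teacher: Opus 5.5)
Your skeleton (one-step descent inequality, telescoping, uniform $\bar t$, then the certificate of Lemma~\ref{lem:smoothedResidualRegret}) matches the paper's proof. The paper simply invokes the Ghadimi--Lan--Zhang inequality $\mathbb E[M^{\kappa,\tau}(\widetilde{\mathbf z}_{t+1})\mid\mathcal F_t]\le M^{\kappa,\tau}(\widetilde{\mathbf z}_t)-\tfrac{\eta}{2}\|\widetilde{\mathcal G}_\eta(\widetilde{\mathbf z}_t)\|_2^2+\tfrac{L_{\kappa,\tau}\eta^2}{2}\sigma^2$ and sums.

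The gap is your claim that your own derivation reproduces this up to an absolute constant. Your splitting leaves a per-step noise term $c\,\eta\,\mathbb E\|e_t\|^2$, whereas the target has $\tfrac{L\eta^2}{2}\sigma^2$. These differ by a factor of $\eta$, not by a constant. Summing over $T$ steps and dividing by $\eta T/2$ gives $\mathbb E\|\widetilde{\mathcal G}_\eta(\widetilde{\mathbf z}_{\bar t})\|_2^2\le \tfrac{2(M^{\kappa,\tau}(\widetilde{\mathbf z}_0)-M^{\kappa,\tau}_{\inf})}{\eta T}+2c\sigma^2$. That is a noise floor that does not shrink with $\eta$, so neither the displayed bound nor the $O(T^{-1/2})$ rate follows. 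This is not a bookkeeping artifact. When the projection binds, $\langle e_t,\hat{\mathcal G}_t-\widetilde{\mathcal G}_\eta(\widetilde{\mathbf z}_t)\rangle$ really can be of order $\|e_t\|^2$. The $\eta^2$ in the target comes only from the unconstrained identity $\mathbb E[\|\hat g_t\|^2\mid\mathcal F_t]=\|g_t\|^2+\mathbb E[\|e_t\|^2\mid\mathcal F_t]$, where no cross term survives.

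Your fallback does not repair this, and it is exactly the step the paper relies on. The Ghadimi--Lan--Zhang per-step inequality for projected SGD also has a noise term linear in the stepsize. Their gradient-mapping guarantee with a fixed mini-batch size $m$ keeps a $\sigma^2/m$ term that does not decrease with the stepsize; vanishing error needs batches growing with the iteration budget.

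In fact, from $L$-smoothness, unbiasedness and bounded variance alone, the displayed bound fails once the projection binds. Take $\widetilde{\mathcal Z}=[0,1]$, $M(z)=z$ (so $L=1$ is admissible), $\widehat G=1\pm 2$ with equal probability (so $\sigma^2=4$), $z_0=0$, and any $\eta\le 1$. Then $M(z_0)-M_{\inf}=0$ and $\widetilde{\mathcal G}_\eta(z)=\min(z,\eta)/\eta$. At every step, with probability $\tfrac{1}{2}$, the iterate moves to $\min(1,z_t+\eta)$, where $\widetilde{\mathcal G}_\eta=1$. Hence the left side is at least $\tfrac{T-1}{2T}\ge\tfrac{1}{4}$ for all $T\ge 2$, while the right side is $4\eta$. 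What your argument actually proves is the weaker bound with an $O(\sigma^2)$ floor.

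To get the statement as written you need an extra hypothesis. One option is that the projection never binds: $\widetilde{\mathbf z}_t-\eta\widehat G_t^{\kappa,\tau}\in\widetilde{\mathcal Z}$ for all $t$, with $\widetilde{\mathcal Z}$ convex (which, as you rightly note, is needed anyway). Then the unconstrained descent computation gives $\mathbb E[M(\widetilde{\mathbf z}_{t+1})\mid\mathcal F_t]\le M(\widetilde{\mathbf z}_t)-\tfrac{\eta}{2}\|\nabla M(\widetilde{\mathbf z}_t)\|_2^2+\tfrac{L\eta^2}{2}\sigma^2$. Nonexpansiveness gives $\|\widetilde{\mathcal G}_\eta(\widetilde{\mathbf z})\|_2\le\|\nabla M(\widetilde{\mathbf z})\|_2$ on $\widetilde{\mathcal Z}$, which yields exactly the stated constants. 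Otherwise you must change the algorithm (growing mini-batches) or the stationarity measure (e.g., a Moreau-envelope gradient, as in Davis and Drusvyatskiy). Your treatment of the final certificate sentence is fine.
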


\begin{proof}
By the unbiasedness argument above,
$
\mathbb E[\widehat G_t^{\kappa,\tau}\mid\mathcal F_t]
=
\nabla M^{\kappa,\tau}(\widetilde{\mathbf z}_t).
$
By the boundedness argument above, $M^{\kappa,\tau}$ is $L_{\kappa,\tau}$-smooth on $\widetilde{\mathcal Z}$ and the stochastic gradient noise has bounded conditional second moment $\sigma^2$. The standard projected stochastic-gradient descent inequality for smooth nonconvex objectives~\citep{ghadimi2016mini} gives
\[
\mathbb E[M^{\kappa,\tau}(\widetilde{\mathbf z}_{t+1})\mid\mathcal F_t]\le M^{\kappa,\tau}(\widetilde{\mathbf z}_t)-\frac{\eta}{2}\|\widetilde{\mathcal G}_\eta(\widetilde{\mathbf z}_t)\|_2^2+\frac{L_{\kappa,\tau}\eta^2}{2}\sigma^2.
\]
Taking expectations, summing over $t=0,\ldots,T-1$, and using $M^{\kappa,\tau}(\widetilde{\mathbf z}_T)\ge M^{\kappa,\tau}_{\inf}$ yields
\[
\frac1T\sum_{t=0}^{T-1}\mathbb E[\|\widetilde{\mathcal G}_\eta(\widetilde{\mathbf z}_t)\|_2^2]\le \frac{2(M^{\kappa,\tau}(\widetilde{\mathbf z}_0)-M^{\kappa,\tau}_{\inf})}{\eta T}+L_{\kappa,\tau}\eta\sigma^2.
\]
Since $\bar t$ is uniform on $\{0,\ldots,T-1\}$, the left-hand side equals $\mathbb E[\|\widetilde{\mathcal G}_\eta(\widetilde{\mathbf z}_{\bar t})\|_2^2]$. The rate follows by choosing $\eta=\Theta(T^{-1/2})$.
\end{proof}

\subsubsection{Instantiation of Lemma~\ref{lem:smoothedResidualRegret} for MSD and CVaR}
\paragraph{Mean-semideviation.}
The MSD utility is
$
\rho_{i,\mathrm{MSD}}(\mathbf x)=\mu_i^{\mathbb P}(\mathbf x)-\frac{\gamma_s}{K}\sum_{k=1}^K\bigl(\mu_i^{\mathbb P}(\mathbf x)-u_i(\mathbf{x}_i, \mathbf{x}_{-i}\mid \xi_k) \bigr)_+ .
$
Its smoothed version is
\[
\rho_{i,\mathrm{MSD}}^{\tau}(\mathbf x)=\mu_i^{\mathbb P}(\mathbf x)-\frac{\gamma_s}{K}\sum_{k=1}^K\varphi_\tau\bigl(\mu_i^{\mathbb P}(\mathbf x)-u_i(\mathbf{x}_i, \mathbf{x}_{-i}\mid \xi_k) \bigr),
\]
so
\[
\sup_{\mathbf x}\left|\rho_{i,\mathrm{MSD}}^{\tau}(\mathbf x)-\rho_{i,\mathrm{MSD}}(\mathbf x)\right|\le \gamma_s c_\varphi\tau.
\]
Here $\boldsymbol{\theta}_i$ is absent, so $\Delta_i^{\mathrm{aux},\tau}(\mathbf x)=\sup_{\mathbf y_i\in\mathbf X_i}\left[\rho_{i,\mathrm{MSD}}^\tau(\mathbf y_i,\mathbf x_{-i})-\rho_{i,\mathrm{MSD}}^\tau(\mathbf x)\right]$; by the second part of Lemma~\ref{lem:smoothedResidualRegret}, $\Delta_i^{\mathrm{aux},\tau}(\mathbf x)\le\sqrt{2}\|\mathbf r_i^\tau\|_2+\kappa\log n_i$. Define
$
\beta_{i,k}^{\tau}(\mathbf x)=\varphi_\tau'\bigl(\mu_i^{\mathbb P}(\mathbf x)-u_i(\mathbf{x}_i, \mathbf{x}_{-i}\mid \xi_k) \bigr).
$
Then, the gradient of $\rho_{i,\mathrm{MSD}}^{\tau}$ has components
\[
g_{i,l}^{\tau}(\mathbf x)=\mu_i^{\mathbb P}(a_l,\mathbf x_{-i})+\frac{\gamma_s}{K}\sum_{k=1}^K\beta_{i,k}^{\tau}(\mathbf x)\left[u_i(a_l,\mathbf x_{-i}\mid \xi_k)-\mu_i^{\mathbb P}(a_l,\mathbf x_{-i})\right].
\]
$\mu_i^{\mathbb P}(a_l,\mathbf x_{-i})$ can be computed exactly from the average payoff matrix: $\mu_i^{\mathbb P}(\mathbf a)=\frac{1}{K}\sum_{k=1}^K u_i(\mathbf a\mid \xi_k)$. For a uniformly sampled mini-batch \(B\),
\[
\widehat g_{i,l,B}^{\tau}(\mathbf x)=\mu_i^{\mathbb P}(a_l,\mathbf x_{-i})+\frac{\gamma_s}{|B|}\sum_{k\in B}\beta_{i,k}^{\tau}(\mathbf x)\left[u_i(a_l,\mathbf x_{-i}\mid \xi_k)-\mu_i^{\mathbb P}(a_l,\mathbf x_{-i})\right].
\]
Thus \(\mathbb E_B[\widehat{\mathbf g}_{i,B}^{\tau}(\mathbf x)]=\mathbf g_i^\tau(\mathbf x)\). Since the smoothed residual is differentiable and sample-additive,
\[
\mathbb E_B[D_{\mathbf z}\widehat{\mathcal R}_{\kappa,\tau,B}(\mathbf z)]=D_{\mathbf z}\mathcal R_{\kappa,\tau}(\mathbf z),
\]
and the double-sampling estimator \(\widehat J^{(1)\top}\widehat{\mathcal R}^{(2)}\) is unbiased for \(\nabla M_{\kappa,\tau}\). By Lemma~\ref{lem:smoothedResidualRegret},
\[
\sup_{\mathbf y_i\in\mathbf X_i}\left[\rho_{i,\mathrm{MSD}}(\mathbf y_i,\mathbf x_{-i})-\rho_{i,\mathrm{MSD}}(\mathbf x)\right]\le \sqrt{2}\|\mathbf r_i^\tau\|_2+\kappa\log n_i+2\gamma_s c_\varphi\tau.
\]

\paragraph{CVaR.}
The CVaR-based coherent utility is
\[
\rho_{i,\mathrm{CVaR}}(\mathbf x)=(1-\gamma_c)\mu_i^{\mathbb P}(\mathbf x)+\gamma_c\,\sup_{z_i\in\mathbb R}\left[z_i-\frac1{\alpha K}\sum_{k=1}^K(z_i-u_i(\mathbf{x}_i, \mathbf{x}_{-i}\mid \xi_k) )_+\right].
\]
Equivalently, define the nonsmoothed auxiliary objective
\[
\rho_{i,\mathrm{CVaR}}^{\mathrm{aux}}(\mathbf x,z_i)=(1-\gamma_c)\mu_i^{\mathbb P}(\mathbf x)+\gamma_c\left[z_i-\frac1{\alpha K}\sum_{k=1}^K(z_i-u_i(\mathbf{x}_i, \mathbf{x}_{-i}\mid \xi_k) )_+\right],
\]
so that $\rho_{i,\mathrm{CVaR}}(\mathbf x)=\sup_{z_i\in\mathbb R}\rho_{i,\mathrm{CVaR}}^{\mathrm{aux}}(\mathbf x,z_i)$. The smoothed auxiliary objective is
\[
\rho_{i,\mathrm{CVaR}}^{\mathrm{aux}, \tau}(\mathbf x,z_i)=(1-\gamma_c)\mu_i^{\mathbb P}(\mathbf x)+\gamma_c\left[z_i-\frac1{\alpha K}\sum_{k=1}^K\varphi_\tau\bigl(z_i-u_i(\mathbf{x}_i, \mathbf{x}_{-i}\mid \xi_k) \bigr)\right].
\]
Since $|\varphi_\tau(s)-s_+|\le c_\varphi\tau$ uniformly in $s$, the smoothing error satisfies
\[
\sup_{\mathbf x\in\mathbf X,\,z_i\in\mathbb R}\left|\rho_{i,\mathrm{CVaR}}^{\mathrm{aux}, \tau}(\mathbf x,z_i)-\rho_{i,\mathrm{CVaR}}^{\mathrm{aux}}(\mathbf x,z_i)\right|\le \frac{\gamma_c c_\varphi\tau}{\alpha}.
\]
Define
$
\beta_{i,k}^{\tau}(\mathbf x,z_i)=\varphi_\tau'\bigl(z_i-u_i(\mathbf{x}_i, \mathbf{x}_{-i}\mid \xi_k) \bigr).
$
Then the gradient of $\rho_{i,\mathrm{CVaR}}^{\mathrm{aux}, \tau}$ has components 
\[
g_{i,l}^{\tau}(\mathbf x,z_i)=(1-\gamma_c)\mu_i^{\mathbb P}(a_l,\mathbf x_{-i})+\frac{\gamma_c}{\alpha K}\sum_{k=1}^K\beta_{i,k}^{\tau}(\mathbf x,z_i)u_i(a_l,\mathbf x_{-i}\mid \xi_k) \quad \text{ for } \mathbf{x},
\]
and
\[
h_i^\tau(\mathbf x,z_i)=\gamma_c\left[1-\frac1{\alpha K}\sum_{k=1}^K\beta_{i,k}^{\tau}(\mathbf x,z_i)\right] \quad \text{ for } {z}_i.
\]
For a uniformly sampled mini-batch $B$,
\[
\widehat g_{i,l,B}^{\tau}(\mathbf x,z_i)=(1-\gamma_c)\mu_i^{\mathbb P}(a_l,\mathbf x_{-i})+\frac{\gamma_c}{\alpha |B|}\sum_{k\in B}\beta_{i,k}^{\tau}(\mathbf x,z_i)u_i(a_l,\mathbf x_{-i}\mid \xi_k),
\]
and
\[
\widehat h_{i,B}^{\tau}(\mathbf x,z_i)=\gamma_c\left[1-\frac1{\alpha |B|}\sum_{k\in B}\beta_{i,k}^{\tau}(\mathbf x,z_i)\right].
\]
Since $z_i$ is maintained as an explicit auxiliary variable and is not recomputed as a mini-batch quantile,
\[
\mathbb E_B[\widehat{\mathbf g}_{i,B}^{\tau}(\mathbf x,z_i)]=\mathbf g_i^\tau(\mathbf x,z_i),\qquad \mathbb E_B[\widehat h_{i,B}^{\tau}(\mathbf x,z_i)]=h_i^\tau(\mathbf x,z_i).
\]
Because the smoothed auxiliary objective is differentiable and the sample sum is finite,
\[
\mathbb E_B[D_{\mathbf z}\widehat{\mathcal R}_{\kappa,\tau,B}(\mathbf z)]=D_{\mathbf z}\mathcal R_{\kappa,\tau}(\mathbf z).
\]
Therefore, using independent mini-batches for the Jacobian and residual factors, $\widehat J^{(1)\top}\widehat{\mathcal R}^{(2)}$ is unbiased for $\nabla M_{\kappa,\tau}(\mathbf z)$.
For the residual-to-regret certificate, we now use the fact that utilities are bounded in $[0,1]$, so the CVaR auxiliary variable may be restricted to $z_i\in[0,1]$ ($\bar \Theta_i = [0,1]$) without changing the nonsmoothed reduced CVaR value. Then, the smoothed augmented gap
\[
\Delta_{i}^{\mathrm{aux},\tau}(\mathbf x,z_i)=\sup_{\mathbf y_i\in\mathbf X_i,\,\zeta_i\in[0,1]}\left[\rho_{i,\mathrm{CVaR}}^{\mathrm{aux}, \tau}(\mathbf y_i,\mathbf x_{-i},\zeta_i)-\rho_{i,\mathrm{CVaR}}^{\mathrm{aux}, \tau}(\mathbf x,z_i)\right].
\]
Since $\rho_{i,\mathrm{CVaR}}^{\mathrm{aux},\tau}$ is concave in $(\mathbf x_i,z_i)$ for fixed $\mathbf x_{-i}$, and since $\rho_i^{\kappa,\tau}=\rho_{i,\mathrm{CVaR}}^{\mathrm{aux},\tau}+\kappa H_i$ has strategy-gradient $\mathbf q_i^\tau$, for any $\mathbf y_i\in\mathbf X_i$ and $\zeta_i\in[0,1]$,
\[
\rho_{i,\mathrm{CVaR}}^{\mathrm{aux},\tau}(\mathbf y_i,\mathbf x_{-i},\zeta_i)-\rho_{i,\mathrm{CVaR}}^{\mathrm{aux},\tau}(\mathbf x,z_i)\le \langle \mathbf q_i^\tau,\mathbf y_i-\mathbf x_i\rangle+h_i^\tau(\zeta_i-z_i)+\kappa\left(H_i(\mathbf x_i)-H_i(\mathbf y_i)\right).
\]
Using centering, the simplex diameter bound, $|\zeta_i-z_i|\le 1$, and $H_i(\mathbf x_i)-H_i(\mathbf y_i)\le\log n_i$ gives
\[
\Delta_i^{\mathrm{aux},\tau}(\mathbf x,z_i)\le \sqrt{2}\|\mathbf r_i^\tau\|_2+|h_i^\tau|+\kappa\log n_i.
\]
Therefore, by Lemma~\ref{lem:smoothedResidualRegret}, the original CVaR regret satisfies
\[
\sup_{\mathbf y_i\in\mathbf X_i}\left[\rho_{i,\mathrm{CVaR}}(\mathbf y_i,\mathbf x_{-i})-\rho_{i,\mathrm{CVaR}}(\mathbf x)\right]\le \sqrt2\|\mathbf r_i^\tau\|_2+|h_i^\tau|+\kappa\log n_i+\frac{2\gamma_c c_\varphi\tau}{\alpha}.
\]

\subsection{Details for Section~\ref*{sec:experiments}~\nameref*{sec:experiments}}
\label{sec:experimentdetails}

$\texttt{PATH}$ options were configured as follows
\begin{center}
   \begin{tabular}{llllll}
  major\_iteration\_limit: & 50,000,000 & restart\_limit: & 100 & nms\_memory\_size: & 50 \\
  minor\_iteration\_limit: & 50,000,000 & time\_limit: & 300 & convergence\_tolerance: & 1e-8 \\
  cumulative\_iteration\_limit: & 100,000,000
\end{tabular}
\end{center}

\begin{figure}[t]
   \centering
   \includegraphics[width=1\linewidth]{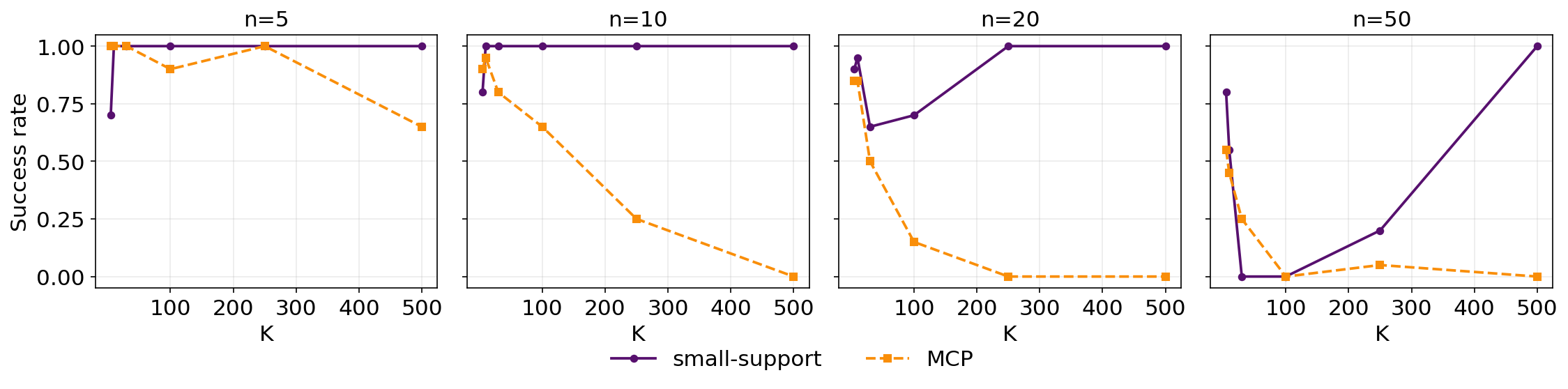}
   \caption{Success rate (fraction) for MSD game solvers}
   \label{fig:smallSupportMCPSuccess}
\end{figure}

\begin{figure}
   \centering
   \includegraphics[width=1\linewidth]{8ad535a1-5d0c-48a2-88cd-082ecb800aa9.png}
   \caption{Median runtime for CVaR game solvers}
   \label{fig:smallSupportMCPTimeCVAR}
\end{figure}

\begin{figure}[t]
   \centering
   \includegraphics[width=1\linewidth]{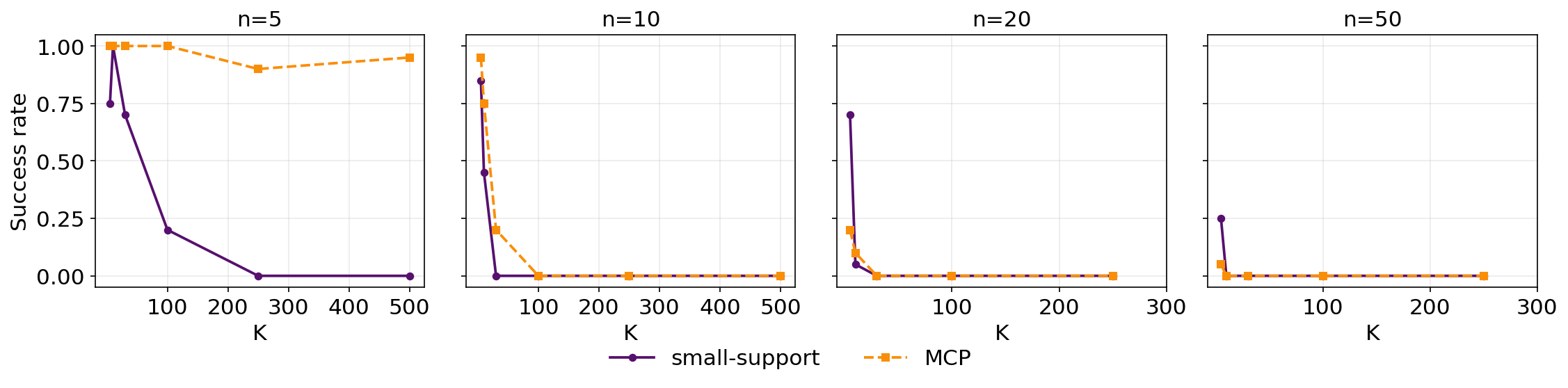}
   \caption{Success rate (fraction) for CVaR game solvers}
   \label{fig:smallSupportMCPSuccessCVAR}
\end{figure}

Figure~\ref{fig:smallSupportMCPSuccess} shows success rates for the MLCP solver and the small support algorithm in the MSD game, measured by the number of games each of them successfully solved out of 20, as the game size increases. For both the MSD and the CVaR game, $\gamma$ was set to 0.5. For the CVaR game, $\alpha$ was set to 0.05. Additionally, the screening step for the small support algorithm was initialized from two random points in the simplex and one warm start from the solution of an MLCP program for the subsampled game, over which the minimum was taken. The number of optimizer iterations for the screening step were capped at 1,000. The final regret minimization was initialized from 20 random points in the simplex, over which the minimum was taken; failure to certify $\eta_K(x)\le \epsilon$ was counted as failure. The screening step and the final regret minimization were performed using the SLSQP solver within the scipy library in Python.

Figures~\ref{fig:smallSupportMCPTimeCVAR} and~\ref{fig:smallSupportMCPSuccessCVAR} showcase the runtime and success rates for the two algorithms, respectively, for solving the CVaR game. In contrast to the MSD game, both algorithms are unable to solve large CVaR games and, as one would expect, the small support algorithm performs worse in the smallest games.

\section{Dual derivations for specific coherent utility measures}
\label{sec:dualSpecificCUMG}

This appendix derives the complementarity formulations for specific coherent utility measures directly from their dual representations. The derivations show how the auxiliary variables in the complementarity programs can be interpreted as variables selecting a worst-case dual distribution.

\subsection{Mean-semideviation}
Following the dual representation of mean-semideviation in~\cite{ruszczynski2006optimization}, the order-$p$ mean-semideviation utility admits a risk-envelope representation with density
$\frac{d\mathbb{Q}}{d\mathbb{P}}=1+\gamma_s(\eta-\mathbb{E}^{\mathbb{P}}[\eta])$, where $\eta\geq 0$ and $\|\eta\|_q\leq 1$, with $1/p+1/q=1$. The mean-semideviation used here corresponds to the case $p=1$, so $q=\infty$ and the constraint becomes $0\leq\eta\leq 1$. Hence, for
$\rho_{\mathrm{MSD}}(X)=\mathbb{E}^{\mathbb{P}}[X]-\gamma_s\mathbb{E}^{\mathbb{P}}[\max(0,\mathbb{E}^{\mathbb{P}}[X]-X)]$, the corresponding dual density is
\[
\frac{d\mathbb{Q}}{d\mathbb{P}}
=
1+\gamma_s\left(\eta-\mathbb{E}^{\mathbb{P}}[\eta]\right),
\qquad 0\leq\eta\leq 1.
\]
In the finite-sample setting, let $\eta_k=\eta(\xi_k)$ and define $\lambda_{i,k}=\gamma_s\mathbb{P}(\xi_k)\eta_k$. Then $0\leq\lambda_{i,k}\leq\gamma_s\mathbb{P}(\xi_k)$ and
\[
\mathbb{Q}_i(\xi_k)=\mathbb{P}(\xi_k)+\lambda_{i,k}-\mathbb{P}(\xi_k)\sum_{r=1}^K\lambda_{i,r}.
\]

Let $X_k=u_i(\mathbf{x}_i,\mathbf{x}_{-i}\mid \xi_k)$ and recall 
$\mu_i^{\mathbb{P}}(\mathbf{x}_i,\mathbf{x}_{-i})=\sum_{k=1}^K\mathbb{P}(\xi_k)X_k$.
Therefore,
\begin{align*}
\mathbb{E}^{\mathbb{Q}_i}[X]
&=
\sum_{k=1}^K
\left(
\mathbb{P}(\xi_k)+\lambda_{i,k}
-
\mathbb{P}(\xi_k)\sum_{r=1}^K\lambda_{i,r}
\right)X_k\\
&=
\sum_{k=1}^K\mathbb{P}(\xi_k)X_k
+
\sum_{k=1}^K\lambda_{i,k}X_k
-
\left(\sum_{r=1}^K\lambda_{i,r}\right)
\sum_{k=1}^K\mathbb{P}(\xi_k)X_k\\
&=
\mu_i^{\mathbb{P}}(\mathbf{x}_i,\mathbf{x}_{-i})
+
\sum_{k=1}^K\lambda_{i,k}X_k
-
\left(\sum_{k=1}^K\lambda_{i,k}\right)
\mu_i^{\mathbb{P}}(\mathbf{x}_i,\mathbf{x}_{-i})\\
&=
\mu_i^{\mathbb{P}}(\mathbf{x}_i,\mathbf{x}_{-i})
+
\sum_{k=1}^K\lambda_{i,k}
\left(
X_k-\mu_i^{\mathbb{P}}(\mathbf{x}_i,\mathbf{x}_{-i})
\right)\\
&=
\mu_i^{\mathbb{P}}(\mathbf{x}_i,\mathbf{x}_{-i})
+
\sum_{k=1}^K\lambda_{i,k}
\left(
u_i(\mathbf{x}_i,\mathbf{x}_{-i}\mid \xi_k)
-
\mu_i^{\mathbb{P}}(\mathbf{x}_i,\mathbf{x}_{-i})
\right).
\end{align*}
Since
$
u_i(\mathbf{x}_i,\mathbf{x}_{-i}\mid \xi_k)
=
\sum_{l=1}^{n_i}x_i(a_l)u_i(a_l,\mathbf{x}_{-i}\mid \xi_k)
$
and
$
\mu_i^{\mathbb{P}}(\mathbf{x}_i,\mathbf{x}_{-i})
=
\sum_{l=1}^{n_i}x_i(a_l)\mu_i^{\mathbb{P}}(a_l,\mathbf{x}_{-i}),
$
we obtain
\[
u_i(\mathbf{x}_i,\mathbf{x}_{-i}\mid \xi_k)
-
\mu_i^{\mathbb{P}}(\mathbf{x}_i,\mathbf{x}_{-i})
=
d_{i,k}(\mathbf{x}_i,\mathbf{x}_{-i}).
\]
Hence,
\[
\rho_{\text{MSD}}(u_i(\mathbf{x}_i,\mathbf{x}_{-i}\mid \xi))
=
\min_{0\leq \lambda_{i,k}\leq \gamma_s\mathbb{P}(\xi_k)}
\left[
\mu_i^{\mathbb{P}}(\mathbf{x}_i,\mathbf{x}_{-i})
+
\sum_{k=1}^K\lambda_{i,k}
d_{i,k}(\mathbf{x}_i,\mathbf{x}_{-i})
\right].
\]
For fixed $\mathbf{x}_i$ and $\mathbf{x}_{-i}$, the only variables in the inner dual problem are $\lambda_{i,k}$. Since
\[
\min_{0\leq \lambda_{i,k}\leq \gamma_s\mathbb{P}(\xi_k)}
\left[
\mu_i^{\mathbb{P}}(\mathbf{x}_i,\mathbf{x}_{-i})
+
\sum_{k=1}^K\lambda_{i,k}d_{i,k}(\mathbf{x}_i,\mathbf{x}_{-i})
\right]
\]
has a constant first term with respect to $\lambda_i$, its inner optimality conditions are exactly those of the separable linear problem
$
\min_{0\leq \lambda_{i,k}\leq \gamma_s\mathbb{P}(\xi_k)}
\sum_{k=1}^K\lambda_{i,k}d_{i,k}(\mathbf{x}_i,\mathbf{x}_{-i}).
$
Thus, for each $k$, the optimal $\lambda_{i,k}$ solves
$\min_{0\leq \lambda_{i,k}\leq \gamma_s\mathbb{P}(\xi_k)}\lambda_{i,k}d_{i,k}(\mathbf{x}_i,\mathbf{x}_{-i})$. Hence, if $d_{i,k}(\mathbf{x}_i,\mathbf{x}_{-i})>0$, the minimizer is $\lambda_{i,k}=0$; if $d_{i,k}(\mathbf{x}_i,\mathbf{x}_{-i})<0$, the minimizer is $\lambda_{i,k}=\gamma_s\mathbb{P}(\xi_k)$; and if $d_{i,k}(\mathbf{x}_i,\mathbf{x}_{-i})=0$, any value in the interval is optimal.

Introducing $z_{i,k}=\min(0,d_{i,k}(\mathbf{x}_i,\mathbf{x}_{-i}))$, these three cases can be written equivalently as
\begin{subequations}
\label{eq:appMSDdualInner}
\begin{align}
    &0\leq \lambda_{i,k}\perp d_{i,k}(\mathbf{x}_i,\mathbf{x}_{-i})-z_{i,k}\geq 0,
    \quad \forall k\in\{1,\dots,K\},\\
    &0\leq \gamma_s\mathbb{P}(\xi_k)-\lambda_{i,k}\perp -z_{i,k}\geq 0,
    \quad \forall k\in\{1,\dots,K\}.
\end{align}
\end{subequations}
Indeed, if $d_{i,k}>0$, then $z_{i,k}=0$ and the first complementarity condition forces $\lambda_{i,k}=0$. If $d_{i,k}<0$, then $z_{i,k}=d_{i,k}$ and the second complementarity condition forces $\lambda_{i,k}=\gamma_s\mathbb{P}(\xi_k)$. If $d_{i,k}=0$, then $z_{i,k}=0$ and any $\lambda_{i,k}\in[0,\gamma_s\mathbb{P}(\xi_k)]$ satisfies the conditions.

Now define, for fixed $\mathbf{x}_{-i}$,
\[
\phi_i(\mathbf{x}_i,\lambda_i)
=
\mu_i^{\mathbb{P}}(\mathbf{x}_i,\mathbf{x}_{-i})
+
\sum_{k=1}^K\lambda_{i,k}d_{i,k}(\mathbf{x}_i,\mathbf{x}_{-i}).
\]
The MSD utility is $\rho_{\mathrm{MSD}}(u_i(\mathbf{x}_i,\mathbf{x}_{-i}\mid\xi))=\min_{\lambda_i}\phi_i(\mathbf{x}_i,\lambda_i)$ over the box constraints above. Since $\phi_i(\cdot,\lambda_i)$ is affine in $\mathbf{x}_i$, Danskin's theorem implies that the gradient of $\phi_i$ with respect to $\mathbf{x}_i$ at any optimal $\lambda_i$ is a supergradient of the MSD utility. Therefore, the $l$-th action risk-adjusted action value is
\[
v_{i,l}(\mathbf{x}_{-i})
=
\frac{\partial \phi_i(\mathbf{x}_i,\lambda_i)}{\partial x_i(a_l)}
=
\mu_i^{\mathbb{P}}(a_l,\mathbf{x}_{-i})
+
\sum_{k=1}^K\lambda_{i,k}
\left(
u_i(a_l,\mathbf{x}_{-i}\mid \xi_k)
-
\mu_i^{\mathbb{P}}(a_l,\mathbf{x}_{-i})
\right).
\]
If the inner minimizer $\lambda_i$ is not unique, then Danskin's theorem gives the full superdifferential as the convex hull of the corresponding gradients over all optimal choices of $\lambda_i$. Thus, any such gradient can be used in the outer best-response KKT conditions, and the general case allows any convex combination of them.

Combining the inner optimality conditions for the MSD dual variables with the outer best-response KKT conditions over $\mathbf{X}_i$ gives the following system for player $i$:
\begin{subequations}
\label{eq:appMSDdualMCP}
\begin{align}
    &0\leq \lambda_{i,k}\perp d_{i,k}(\mathbf{x}_i,\mathbf{x}_{-i})-z_{i,k}\geq 0,
    \quad \forall k\in\{1,\dots,K\},\\
    &0\leq \gamma_s\mathbb{P}(\xi_k)-\lambda_{i,k}\perp -z_{i,k}\geq 0,
    \quad \forall k\in\{1,\dots,K\},\\
    &0\leq \alpha_i-v_{i,l}(\mathbf{x}_{-i})\perp x_i(a_l)\geq 0,
    \quad \forall l\in\{1,\dots,n_i\},\\
    &\mathbf{1}^\top\mathbf{x}_i=1,\qquad \alpha_i\in\mathbb{R}.
\end{align}
\end{subequations}
The first two lines encode optimality of the worst-case MSD dual variables for the fixed profile $(\mathbf{x}_i,\mathbf{x}_{-i})$. The third and fourth lines encode optimality of $\mathbf{x}_i$ in player $i$'s best-response problem. Stacking \eqref{eq:appMSDdualMCP} over all players yields the MSD mixed complementarity formulation.

\subsection{Order-$p$ mean-semideviation}
This proof follows the same style of proof as the one with $p=1$; the reader is advised to read that before reading this terse proof.
For $1<p<\infty$, let $q$ denote the conjugate exponent, so $1/p+1/q=1$. The order-$p$ mean-semideviation utility is
\[
\rho_{\text{MSD},p}(X)
=
\mathbb{E}^{\mathbb{P}}[X]
-
\gamma_s
\left\|
\max(0,\mathbb{E}^{\mathbb{P}}[X]-X)
\right\|_p.\;,
\]
where $\|Y\|_p=(\mathbb{E}^{\mathbb{P}}[|Y|^p])^{1/p}$. In the finite-sample setting, this norm is
$\|Y\|_p=(\sum_{k=1}^K\mathbb{P}(\xi_k)|Y(\xi_k)|^p)^{1/p}$.
Its dual density can be written as $\frac{d\mathbb{Q}}{d\mathbb{P}}=1+\gamma_s(\eta-\mathbb{E}^{\mathbb{P}}[\eta])$, where $\eta\geq 0$ and $\|\eta\|_q\leq 1$. Hence, in the finite-sample setting, player $i$'s dual problem is
\[
\rho_{\text{MSD},p}(u_i(\mathbf{x}_i,\mathbf{x}_{-i}\mid \xi))
=
\min_{\eta_{i,k}\geq 0,\ \sum_{k=1}^K\mathbb{P}(\xi_k)\eta_{i,k}^q\leq 1}
\left[
\mu_i^{\mathbb{P}}(\mathbf{x}_i,\mathbf{x}_{-i})
+
\gamma_s\sum_{k=1}^K\mathbb{P}(\xi_k)\eta_{i,k}
d_{i,k}(\mathbf{x}_i,\mathbf{x}_{-i})
\right].
\]
Unlike the $p=1$ case, the inner minimization for $1<p<\infty$ is not separable across samples because the weighted $\ell_q$ constraint $\sum_{k=1}^K\mathbb{P}(\xi_k)\eta_{i,k}^q\leq 1$ couples the variables $\eta_{i,k}$.
Let $\theta_i\geq 0$ be the multiplier for $\sum_{k=1}^K\mathbb{P}(\xi_k)\eta_{i,k}^q\leq 1$. The inner KKT conditions are
\begin{subequations}
\label{eq:appMSDpInner}
\begin{align}
    &0\leq \eta_{i,k}\perp
    \gamma_s\mathbb{P}(\xi_k)d_{i,k}(\mathbf{x}_i,\mathbf{x}_{-i})
    +
    q\theta_i\mathbb{P}(\xi_k)\eta_{i,k}^{q-1}
    \geq 0,
    \quad \forall k\in\{1,\dots,K\},\\
    &0\leq \theta_i\perp
    1-\sum_{k=1}^K\mathbb{P}(\xi_k)\eta_{i,k}^q
    \geq 0.
\end{align}
\end{subequations}
By Danskin's theorem use (similar to $p=1$ case), an optimal $\eta_i$ induces
\[
v_{i,l}(\mathbf{x}_{-i})
=
\mu_i^{\mathbb{P}}(a_l,\mathbf{x}_{-i})
+
\gamma_s\sum_{k=1}^K\mathbb{P}(\xi_k)\eta_{i,k}
\left(
u_i(a_l,\mathbf{x}_{-i}\mid \xi_k)
-
\mu_i^{\mathbb{P}}(a_l,\mathbf{x}_{-i})
\right).
\]
Therefore, the order-$p$ MSD equilibrium conditions are
\begin{subequations}
\label{eq:appMSDpMCP}
\begin{align}
    &0\leq \eta_{i,k}\perp
    \gamma_s\mathbb{P}(\xi_k)d_{i,k}(\mathbf{x}_i,\mathbf{x}_{-i})
    +
    q\theta_i\mathbb{P}(\xi_k)\eta_{i,k}^{q-1}
    \geq 0,
    \quad \forall k\in\{1,\dots,K\},\\
    &0\leq \theta_i\perp
    1-\sum_{k=1}^K\mathbb{P}(\xi_k)\eta_{i,k}^q
    \geq 0,\\
    &0\leq \alpha_i-v_{i,l}(\mathbf{x}_{-i})\perp x_i(a_l)\geq 0,
    \quad \forall l\in\{1,\dots,n_i\},\\
    &\mathbf{1}^\top\mathbf{x}_i=1,\qquad \alpha_i\in\mathbb{R}.
\end{align}
\end{subequations}
Unlike the $p=1$ case, \eqref{eq:appMSDpMCP} is a mixed nonlinear complementarity formulation because of the $\ell_q$ constraint in the dual envelope.

\subsection{CVaR}
Let $X_k=u_i(\mathbf{x}_i,\mathbf{x}_{-i}\mid \xi_k)$ and recall 
$\mu_i^{\mathbb{P}}(\mathbf{x}_i,\mathbf{x}_{-i})=\sum_{k=1}^K\mathbb{P}(\xi_k)X_k$.
Using the dual representation of CVaR, the risk envelope consists of densities satisfying $0\leq d\mathbb{Q}/d\mathbb{P}\leq 1/\alpha$ and $\mathbb{E}^{\mathbb{P}}[d\mathbb{Q}/d\mathbb{P}]=1$. In the finite-sample setting, with $r_k=\mathbb{Q}(\xi_k)$, this gives
\[
\mathrm{CVaR}_\alpha[X]
=
\min_{\mathbf{r}}
\left\{
\sum_{k=1}^K r_k X_k:
\sum_{k=1}^K r_k=1,\ 
0\leq r_k\leq \frac{\mathbb{P}(\xi_k)}{\alpha}
\right\}.
\]

Since $\rho_{\text{CVaR}}(X)=(1-\gamma_c)\mathbb{E}^{\mathbb{P}}[X]+\gamma_c\mathrm{CVaR}_\alpha[X]$, define $\lambda_{i,k}=\gamma_c r_{i,k}$. Following the Ruszczyński--Shapiro dual representation, the mixed mean-CVaR utility has a risk-envelope representation with density bounded as
\[
1-\gamma_c
\leq
\frac{d\mathbb{Q}}{d\mathbb{P}}
\leq
1-\gamma_c+\frac{\gamma_c}{\alpha}.
\]
In the finite-sample setting, this means
\[
(1-\gamma_c)\mathbb{P}(\xi_k)
\leq
q_{i,k}
\leq
\left(1-\gamma_c+\frac{\gamma_c}{\alpha}\right)\mathbb{P}(\xi_k),
\qquad
\sum_{k=1}^K q_{i,k}=1.
\]
Writing $q_{i,k}=(1-\gamma_c)\mathbb{P}(\xi_k)+\lambda_{i,k}$ gives
$0\leq\lambda_{i,k}\leq \gamma_c\mathbb{P}(\xi_k)/\alpha$ and
$\sum_{k=1}^K\lambda_{i,k}=\gamma_c$, which yields the formulation below.
\[
\rho_{\text{CVaR}}(u_i(\mathbf{x}_i,\mathbf{x}_{-i}\mid \xi))
=
\min_{\substack{0\leq\lambda_{i,k}\leq \frac{\gamma_c}{\alpha}\mathbb{P}(\xi_k)\\ \sum_{k=1}^K\lambda_{i,k}=\gamma_c}}
\left[
(1-\gamma_c)\mu_i^{\mathbb{P}}(\mathbf{x}_i,\mathbf{x}_{-i})
+
\sum_{k=1}^K\lambda_{i,k}u_i(\mathbf{x}_i,\mathbf{x}_{-i}\mid \xi_k)
\right].
\]
For fixed $\mathbf{x}_i$ and $\mathbf{x}_{-i}$, this is a linear program in $\lambda_i$. Let $z_i$ be the multiplier for $\sum_{k=1}^K\lambda_{i,k}=\gamma_c$, and let $-\nu_{i,k}\geq 0$ be the multiplier for the upper bound $\lambda_{i,k}\leq \frac{\gamma_c}{\alpha}\mathbb{P}(\xi_k)$. The KKT conditions of the inner dual problem are
\begin{subequations}
\label{eq:appCVaRdualInner}
\begin{align}
    &0\leq \frac{\gamma_c}{\alpha}\mathbb{P}(\xi_k)-\lambda_{i,k}\perp -\nu_{i,k}\geq 0,
    \quad \forall k\in\{1,\dots,K\},\\
    &0\leq \lambda_{i,k}\perp
    u_i(\mathbf{x}_i,\mathbf{x}_{-i}\mid \xi_k)-z_i-\nu_{i,k}
    \geq 0,
    \quad \forall k\in\{1,\dots,K\},\\
    &\sum_{k=1}^K\lambda_{i,k}=\gamma_c.
\end{align}
\end{subequations}
Using $u_i(\mathbf{x}_i,\mathbf{x}_{-i}\mid\xi_k)=\sum_{l=1}^{n_i}x_i(a_l)u_i(a_l,\mathbf{x}_{-i}\mid\xi_k)$, For fixed $\mathbf{x}_{-i}$, define
\[
\phi_i(\mathbf{x}_i,\lambda_i)
=
(1-\gamma_c)\mu_i^{\mathbb{P}}(\mathbf{x}_i,\mathbf{x}_{-i})
+
\sum_{k=1}^K\lambda_{i,k}u_i(\mathbf{x}_i,\mathbf{x}_{-i}\mid\xi_k),
\]
where $\lambda_i$ solves the inner CVaR dual problem. Since $\phi_i(\cdot,\lambda_i)$ is affine in $\mathbf{x}_i$, Danskin's theorem implies that the derivative of $\phi_i$ with respect to $x_i(a_l)$ at an optimal $\lambda_i$ gives a supergradient component of $\rho_{\mathrm{CVaR}}(u_i(\mathbf{x}_i,\mathbf{x}_{-i}\mid\xi))$. Using $\frac{\partial}{\partial x_i(a_l)}\mu_i^{\mathbb{P}}(\mathbf{x}_i,\mathbf{x}_{-i})=\mu_i^{\mathbb{P}}(a_l,\mathbf{x}_{-i})$ and $\frac{\partial}{\partial x_i(a_l)}u_i(\mathbf{x}_i,\mathbf{x}_{-i}\mid\xi_k)=u_i(a_l,\mathbf{x}_{-i}\mid\xi_k)$, we obtain
\[
v_{i,l}(\mathbf{x}_{-i})
=
\frac{\partial \phi_i(\mathbf{x}_i,\lambda_i)}{\partial x_i(a_l)}
=
(1-\gamma_c)\mu_i^{\mathbb{P}}(a_l,\mathbf{x}_{-i})
+
\sum_{k=1}^K\lambda_{i,k}u_i(a_l,\mathbf{x}_{-i}\mid\xi_k).
\]
If the inner CVaR dual minimizer $\lambda_i$ is not unique, Danskin's theorem gives the superdifferential as the convex hull of these gradients over all optimal choices of $\lambda_i$.
Therefore, the CVaR equilibrium conditions are
\begin{subequations}
\label{eq:appCVaRdualMCP}
\begin{align}
    &0\leq \frac{\gamma_c}{\alpha}\mathbb{P}(\xi_k)-\lambda_{i,k}\perp -\nu_{i,k}\geq 0,
    \quad \forall k\in\{1,\dots,K\},\\
    &0\leq \lambda_{i,k}\perp
    \sum_{l=1}^{n_i}x_i(a_l)u_i(a_l,\mathbf{x}_{-i}\mid \xi_k)-z_i-\nu_{i,k}
    \geq 0,
    \quad \forall k\in\{1,\dots,K\},\\
    &0\leq \alpha_i-v_{i,l}(\mathbf{x}_{-i})\perp x_i(a_l)\geq 0,
    \quad \forall l\in\{1,\dots,n_i\},\\
    &\sum_{k=1}^K\lambda_{i,k}=\gamma_c,\qquad
    \mathbf{1}^\top\mathbf{x}_i=1,\qquad \alpha_i\in\mathbb{R}.
\end{align}
\end{subequations}
Stacking \eqref{eq:appCVaRdualMCP} over all players recovers the CVaR mixed complementarity formulation. In this derivation, $z_i$ arises as the equality multiplier in the dual CVaR problem and coincides with the usual CVaR threshold variable in the primal representation.

\section{Additional Experiments} \label{sec:addtionalexperiments}

Intuitively, one expects players in risk-averse games to obtain benefits similar to those found in single-agent variance-penalized optimization, such as out-of-sample performance improvements, variance reduction, and probabilistic guarantees. Our additional numerical experiments present some evidence for such advantages of CUMG games in data-driven settings. We implement and solve for the equilibrium in three games.
First, with the underlying distribution estimated closely (large-$K$), the mean-semideviation (MSD) equilibrium set in a cordination game setup is more robust to distributional uncertainty over the player's payoffs. Second, in small-$K$ setting in a variant of prisoner's dilemma, the players choose conservatively and, hence, achieve out-of-sample performance improvement of their actual reward by not overfitting to sampling noise. Finally, the CVaR game demonstrates a monotonic decrease in the player's payoff variance in robustness parameter $\gamma_c$ at equilibrium and yields probabilistic guarantees on the payoff in a general sum game.

\textbf{Large-K Coordination}:
We first return to the setting of Example \ref{ex:coordinationGame} with general empirical probability $\hat{p}$. Recall that $\hat{p}$ is the probability of drawing the State A in the coordination game. In this setting, our mean-semideviation CUMG with parameter $\gamma_s \in [0,1]$ yields the payoff function
\begin{figure}[t]
    \centering
    \begin{subfigure}[b]{0.4\textwidth}
        \centering
        \includegraphics[width=0.95\textwidth]{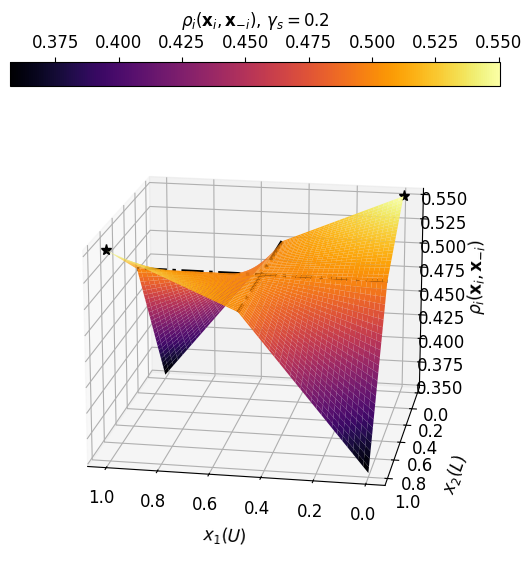}
        \caption{MSD payoffs for the Coordination game with $\hat p = 0.6$ and $\gamma_s = 0.2$.}
        \label{fig:msdCoordG0.2}
    \end{subfigure}
    \hfill
    \begin{subfigure}[b]{0.4\textwidth}
        \centering
        \includegraphics[width=0.95\textwidth]{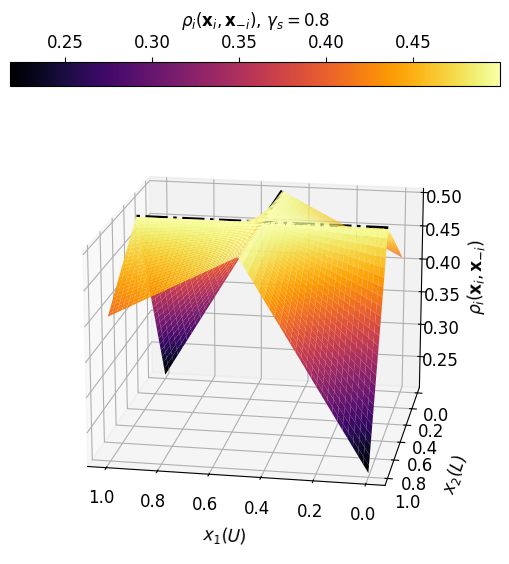}
        \caption{MSD payoffs for the Coordination game with $\hat p = 0.6$ and $\gamma_s = 0.8$.}
        \label{fig:msdCorrdG0.8}
    \end{subfigure}
    
    \caption{MSD-Coordination game payoffs with $\hat p = 0.6$ and different risk aversion levels. $\cdot-\cdot-$ lines mark the mixed Nash equilibria whenever either player plays $x_i=0.5$ and $\star$ denote the pure Nash equilibria of the game. Notice that the equilibria set shrinks in size and becomes more robust to variance and skew in the empirical distribution at higher $\gamma_s$.}
    \label{fig:msdCoord}
\end{figure}
\[
    \rho_i(\mathbf{x}_1, \mathbf{x}_2) =
    \frac{1}{2} + 
    \frac{2\hat{p}-1}{2} 
    \bigl(1-2x_1(U)\bigr)
    \bigl(1-2x_2(L)\bigr) - 
    \gamma_s(1-\hat{p})\hat{p}
    \Bigl|
    \bigl(1-2x_1(U)\bigr)
    \bigl(1-2x_2(L)\bigr)
    \Bigr|
\]

As before, if $x_1(U) = \frac12$ or $x_2(L) = \frac12$, both players receive $\rho_1(\mathbf{x}_1, \mathbf{x}_2) = \rho_2(\mathbf{x}_1, \mathbf{x}_2) = \frac{1}{2}$ independent of $\hat{p}$ and the other player's strategy, thereby attaining a conservative MSD equilibrium. Increasing $\gamma_s$ makes this equilibrium more robust to greater distributional skew in $\hat{p}$. When the empirical distribution is sufficiently skewed, specifically, when $\hat p$ lies outside an explicit interval derived below, additional pure-strategy MSD equilibria emerge.

The threshold value of $\hat{p}$ which allows other MSD equilibria is determined by comparing the marginal gain from action alignment under the empirical mean payoff with the marginal penalty induced by the downside semideviation. Specifically, players are incentivized towards $(U, L)$ and $(D,R)$ whenever
$
    \frac{2\hat{p}-1}{2} > \gamma_s(1-\hat{p})\hat{p}
    \implies \hat{p} > \frac{\gamma_s-1+\sqrt{1+\gamma_s^2}}{2\gamma_s} = \bar{p}
$
and towards $(U, R)$ and $(D, L)$ if
$
    \frac{1-2\hat{p}}{2} > \gamma_s(1-\hat{p})\hat{p}
    \implies \hat{p} < \frac{\gamma_s+1-\sqrt{1+\gamma_s^2}}{2\gamma_s} = 1-\bar{p}
$.
Consequently, pure action MSD equilibria exist if $\hat{p} \notin [1-\bar{p}, \, \bar{p}]$. For instance, if $\gamma_s = 0.2$ and $\hat{p} > 0.55$, then $(U, L)$ and $(D, R)$ are pure-strategy equilibria of the game which disappear if $\gamma_s$ is increased to $0.8$. This is illustrated in the MSD-payoffs shown in Figure \ref{fig:msdCoord} with $\hat p = 0.6$. Since $\bar{p}(\gamma_s)$ is an increasing function in $\gamma_s$, the interval $[1-\bar{p}, \bar{p}]$ expands as $\gamma_s$ increases. Note that payoffs in the pure strategy equilibria are dependent on $\hat{p}$. At higher $\gamma_s$, these equilibria are eliminated and the MSD-equilibrium set only contains the conservative equilibrium where the players achieve a constant payoff of $0.5$ independent of $\hat{p}$. Consequently, stronger downside risk aversion eliminates pure-strategy equilibria under a wider range of empirical distributions, rendering the equilibrium set more robust but also more conservative.

\textbf{Small-K Prisoner's Dilemma}:
An advantage of penalized optimization in single-agent settings is the variance-reduction achieved at the optima which leads to performance improvements in the test distribution compared to empirical optimization. In this example, we show that risk-averse mean-semideviation equilibria demonstrate similar properties. In this example, our true payoff matrices are
\[
\begin{array}{c@{\quad}c @{\quad}c}
    \begin{array}{c}
        \textbf{Typical Dilemma: $\mathbb{P}(\xi_1)=\frac{1}{2}$}\\[4pt]
        \begin{array}{c|cc}
            & \text{Coord.} & \text{Defect}\\ \hline
            \text{Coord.} & (3,3) & (0,5)\\
            \text{Defect} & (5,0) & (1,1)
        \end{array}
    \end{array}
    &
    \begin{array}{c}
        \textbf{Deadlock : $\mathbb{P}(\xi_2) = \frac{1}{4}$}\\[4pt]
        \begin{array}{c|cc}
            & \text{Coord.} & \text{Defect}\\ \hline
            \text{Coord.} & (3,3) & (0,0)\\
            \text{Defect} & (0,0) & (0,0)
        \end{array}
    \end{array}
    & 
    \begin{array}{c}
        \textbf{Ext. Evidence: $\mathbb{P}(\xi_3)=\frac{1}{4}$}\\[4pt]
        \begin{array}{c|cc}
            & \text{Coord.} & \text{Defect}\\ \hline
            \text{Coord.} & (0,0) & (0,0)\\
            \text{Defect} & (0,0) & (1,1)
        \end{array}
\end{array}
\end{array}
\]
\begin{figure}[t]
    \centering
\includegraphics[width=0.9\textwidth]{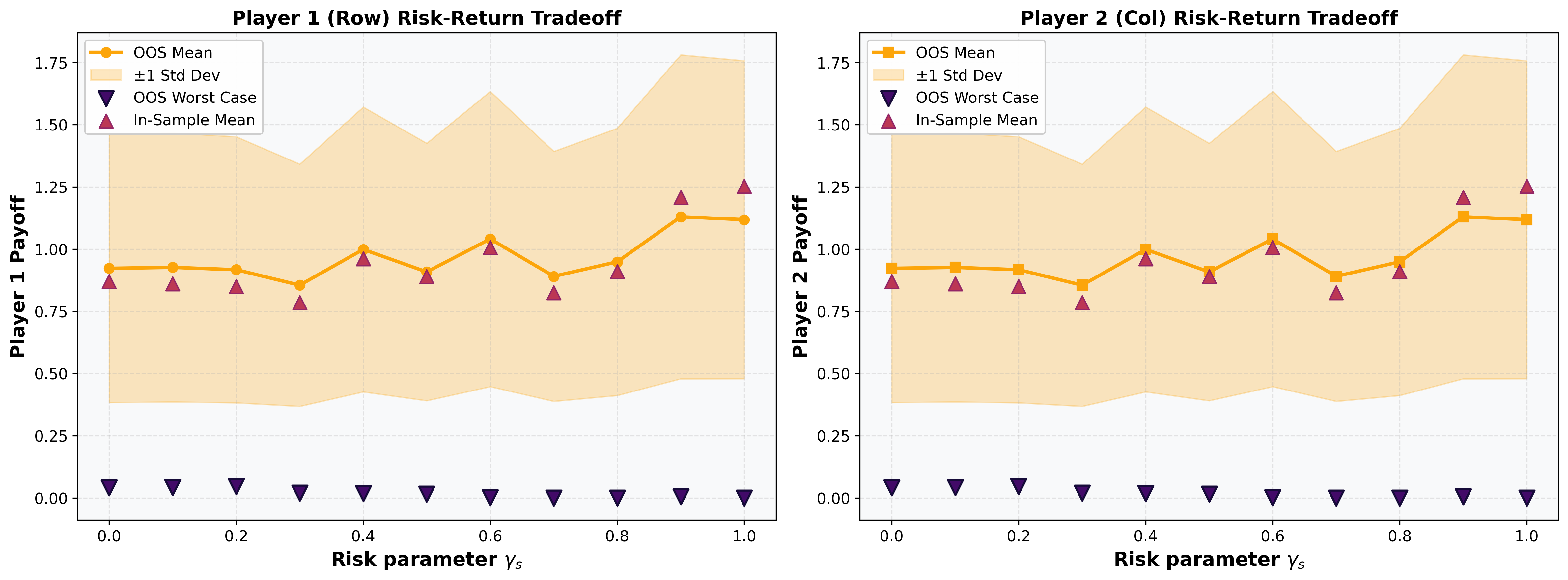}
    \caption{OOS-mean payoff $\pm 1$ s.d. of the two players as a function of $\gamma_s$ in the Small-$K$ Prisoner's Dilemma. OOS performance increases with $\gamma_s$ due to greater robustness to sampling variation.
    }
\label{fig:prisonerDilemma}
\end{figure}
We take \emph{five} samples from the true distribution and compute the resulting sample MSD equilibrium using \texttt{PATH} solver~\cite{Ferris_Munson_1999}. This is repeated 100 times to compute 100 such sample equilibria. Out-of-sample mean payoff and standard deviation of these sample equilibrium strategies is computed using a larger dataset of 10,000 samples. The results for out-of-sample performance are visualized in Figure \ref{fig:prisonerDilemma}. Similar to single-agent settings, we observe that a higher downside risk penalty $\gamma_s$ leads to an increase in out-of-sample performance. However, the increase is not monotonic and is accompanied by a falling worst-case payoff. This is expected since non-unique equilibria and changes in the game equilibrium induced by $\gamma_s$ preclude monotonic statements without additional structure. Settings in which mean-payoff and standard deviation are monotonic in $\gamma_s$ are left open for further research.

\textbf{Large-K CVaR}:
Consider the following non-zero sum game from \cite{10.1007/978-3-030-64793-3_22} where only (row) player 1 faces uncertainty in their payoff \(\; \begin{array}{c|cc}
      & y_1 & y_2\\ \hline
x_1   & (1+\xi,3) & (0,2)\\
x_2   & (2,0)     & (-1,1)
\end{array}\; \).
Let $\xi$ take values in $\{-1, 1\}$ with probability of each state being $1/2$. For ease of notation, we use the letters $x_i$ and $y_i$ to denote both the pure actions and their probabilities within mixed strategies. Using action-value complementarity conditions we can solve this game in a closed form by choosing mixed strategies that equalize the action values for the other player. Specifically, player 2 chooses $\mathbf{y}$ to set $\rho_1(x_1, \mathbf{y}) = \rho_1(x_2, \mathbf{y})$ and player 1 chooses $\mathbf{x}$ to set $\rho_2(\mathbf{x}, y_1) = \rho_2(\mathbf{x}, y_2)$. Expanding these, we find equilibrium strategies $\mathbf{x}^*, \mathbf{y}^*$ in a closed-form which are plotted in Figure \ref{fig:cVarGameExample} and obtained as
\begin{align*}
& \alpha\! <\! 0.5\!:\; y^*_1 = \frac{1}{2+\gamma_c}, \;
z^*_1 = \frac{3}{2}y^*_1 - \frac{1}{2}, \;
\rho_1(\mathbf{x}^*, \mathbf{y}^*) = (1-\gamma_c)(y^*_1), \; 
\mathbb{V}[u_1(\mathbf{x}^*, \mathbf{y}^* \mid \xi)] = \frac{(y^*_1)^2}{4}\\
& \alpha\! \geq\! 0.5\!: \; y^*_1\! =\! \frac{\alpha}{2\alpha \!+\! \gamma_c(1-\alpha)}, 
z^*_1\! =\! \frac{5}{2}y^*_1 - \frac{1}{2}, 
\rho_1(\mathbf{x}^*, \mathbf{y}^*) \!=\!
\Big(2 + \frac{\alpha-1}{2\alpha}\gamma_c\Big)y_1^* - \frac{1}{2}, 
\mathbb{V}[u_1(\mathbf{x}^*, \mathbf{y}^* \mid \xi)]\! =\! \frac{(y^*_1)^2}{4}
\end{align*}
\begin{figure}[t]
    \centering
    \includegraphics[width=0.9\textwidth]{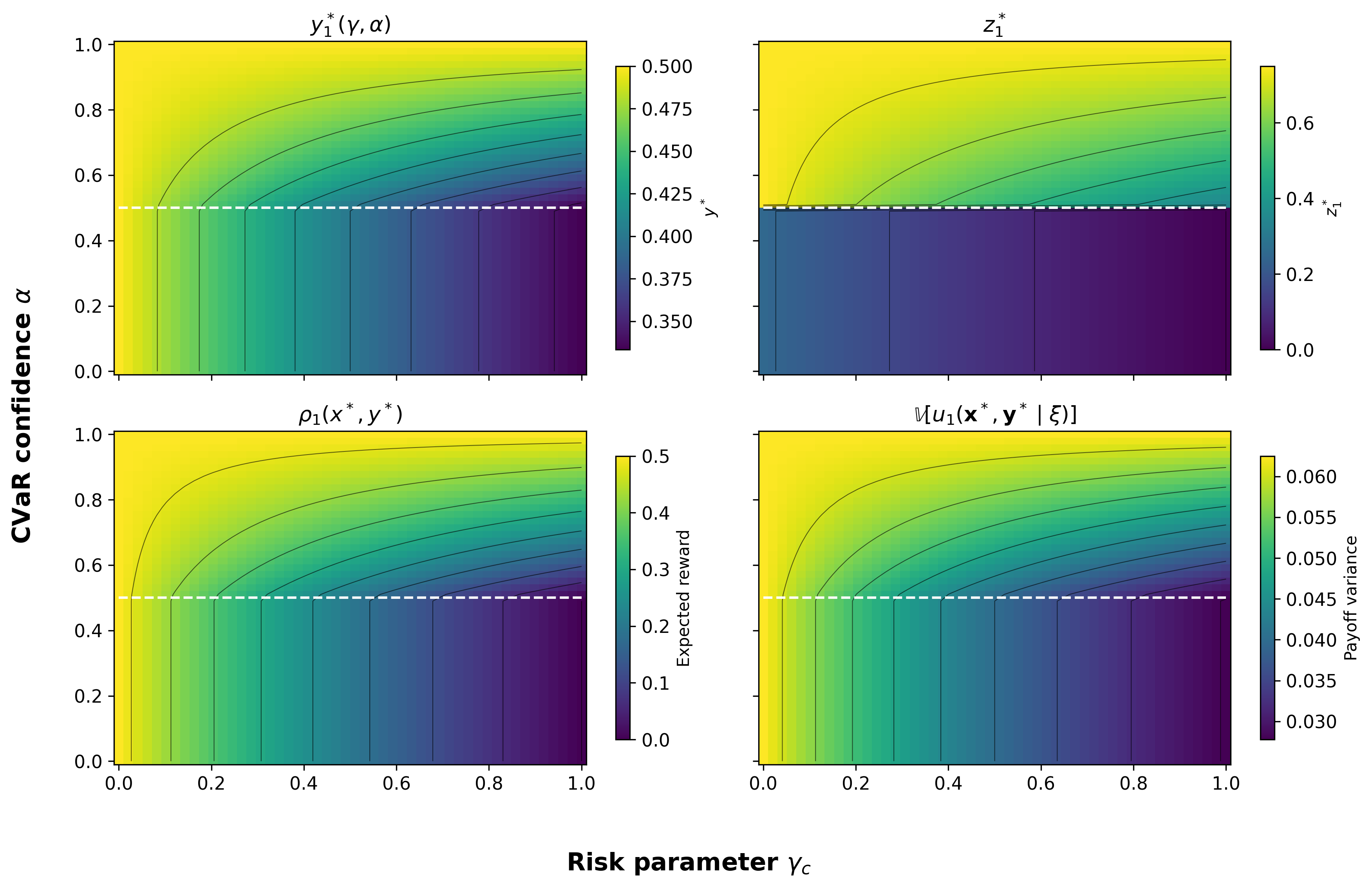}
    \caption{Player 2's strategy $y^*_1$ and player 1's risk-adjusted payoff $\rho_1(\mathbf{x}^*, \mathbf{y}^*)$, VaR, $z^*_1$, and payoff variance $\mathbb{V}[u_1(\mathbf{x}^*, \mathbf{y}^* \mid \xi)]$ as a function of $\alpha$ and $\gamma_c$ in equilibrium. Player 1 experiences a regime change at $\alpha=0.5$. $\mathbb{V}[u_1(\mathbf{x}^*, \mathbf{y}^* \mid \xi)]$ monotonically decreases in $\gamma_c$.}
    \label{fig:cVarGameExample}
\end{figure}
In the above, $z^*_1$ denotes the payoff-VaR obtained by player 1 at the equilibrium. By the definition of VaR, player 1 is guaranteed the payoff $z_1^*$ with probability at least $1-\alpha$. The value of the game for player 2 is $1.5$, induced by $x^*_1=x^*_2=0.5$. The game has a kink at $\alpha = 0.5$, with a change in regime for player 2's strategy $\mathbf{y}^*$ which subsequently induces regime shifts for player 1's risk-adjusted payoff $\rho_1(\mathbf{x}^*, \mathbf{y}^*)$, payoff VaR $z^*_1$, and payoff variance $\mathbb{V}[u_1(\mathbf{x}^*, \mathbf{y}^* \mid \xi)]$. 
For $\alpha < 0.5$, $\rho_1(\mathbf{x}^*, \mathbf{y}^*)$, $z^*_1$, and $y^*_1$ are independent of $\alpha$. Conversely, $\alpha \geq 0.5$ encodes a higher risk-tolerance and therefore raises the expected payoff of the player. Increasing $\alpha$ raises player 1's payoff threshold $z^*_1$ while reducing the probability of attaining it. Player 1's expected payoff and payoff variance $\mathbb{V}[u_1(\mathbf{x}^*, \mathbf{y}^* \mid \xi)]$ decreases monotonically in $\gamma_c$.

\section{Towards Correlated Equilibrium}
\label{sec:correq}
In this section, we show that correlated equilibrium~\citep{aumann1974subjectivity} for DRGs cannot be defined as in finite matrix games. Instead, DRG can be viewed as a lifted game with players' action spaces redefined as their probability simplices, replacing their original pure action spaces. Consequently, DRG correlated equilibrium follows the standard definition of correlated equilibrium in continuous games~\citep{5a403f44-5862-3764-a087-1c6484407df4}. Even in finite games, correlated equilibria can be defined using deviation functions and are a special case of the general $\Phi$-equilibrium~\citep{greenwald2003general}. In particular, given an action recommendation (sampled from the correlated equilibrium), it suffices to show that deviations to pure strategies are not beneficial. Due to the linearity of the mixed-strategy payoff in pure action profile payoffs, non-beneficial pure strategy deviations prevent the existence of profitable deviations to mixed strategies. As we had shown earlier, this linearity fails in DRGs and prevents us from using the same definition. We illustrate the issue here with the game from Example~\ref{ex:coordinationGame}. 


Consider the correlated strategy $\mu$ such that $\mu(U,L) = \mu(D,R) = 0.5$ in the two player DRG game (ambiguity set $[0.3,0.7]$) of Example~\ref{ex:coordinationGame}. The marginal strategies are $\mu(L \mid U) = 1$ and $\mu(R \mid D) = 1$. From Equation~\ref{eq:examplepayoff}, it can be seen that $\rho_i(U,L) = 0.3 = \rho_i(D, L) $ and also $\rho_i(D,R) = 0.3 = \rho_i(U, R) $ for both players. Then, for the row player recommended to play $U$, we have $\mu(L \mid U) \rho_1(U,L) \geq \mu(L \mid U) \rho_1(D,L) $ and when recommended to play $D$, we have $\mu(R \mid D) \rho_i(D,R) \geq \mu(R \mid D) \rho_i(U,R)$. Similar result holds for the column player, and hence this is a correlated equilibrium according to this finite matrix game definition of a correlated equilibrium. However, the mixed strategy $\mathbf{x}_1$ for row player with $x_1(U) = 0.5$ gives $\rho_1(\mathbf{x}_1, L) = 0.5$ and then $\mu(L \mid U) \rho_1(U,L) \ngeq \mu(L \mid U) \rho_1(\mathbf{x}_1,L)$. Thus, deviation to $\mathbf{x}_1$ is beneficial. This counterexample provides the reasoning for why the definition for a correlated equilibrium in DRG must be defined as done for continuous games~\citep{stein2011correlated}.

\textbf{Correlated Equilibrium in Distributionally Robust Games}: Let $\phi_i: \mathbf{X}_i \rightarrow \mathbf{X}_i$ be a measurable function. 
A \emph{correlated equilibrium} of a distributionally robust game is a joint probability measure $\mathbb{C}$ over the space of players' mixed strategies $\mathbf{X}$ (in the lifted game) such that
\[
    \mathbb{E}^\mathbb{C}[\rho_i(\phi_i(\mathbf{x}_i), \mathbf{x}_{-i}) - \rho_i(\mathbf{x}_i, \mathbf{x}_{-i})] \leq 0
\]
for all $i \in \{1, \dots, m\}$ and all measurable $\phi_i$. Note that $\phi_i$ here captures the notion of a player deviating from their ``recommended'' mixed strategy to another mixed strategy. 
The measurability of $\phi_i$ is a required property of deviations in continuous games~\citep{stein2011correlated}. 

\textbf{Infinite-support correlated equilibrium}:
We now revisit Example~\ref{ex:coordinationGame} in the mean-semideviation case with $\hat p=1/2$. Let $x_1(U)$ denote the probability that the row player assigns to $U$ and let $x_2(L)$ denote the probability that the column player assigns to $L$. From the payoff expression derived for the MSD version of this example, both players have utility
\[
\rho_i(\mathbf{x}_1,\mathbf{x}_2)
=
\frac12
-
\frac{\gamma_s}{4}
\left|
\bigl(1-2x_1(U)\bigr)
\bigl(1-2x_2(L)\bigr)
\right|.
\]
Thus, for any fixed $\mathbf{x}_2\in\mathbf{X}_2$, the row player's utility is maximized by any $\mathbf{x}_1$ satisfying $x_1(U)=1/2$. Similarly, whenever $x_1(U)=1/2$, the column player's utility is equal to $1/2$ for every $\mathbf{x}_2\in\mathbf{X}_2$.

Let $\bar{\mathbf{x}}_1\in\mathbf{X}_1$ be given by $\bar{x}_1(U)=\bar{x}_1(D)=1/2$, and let $\mathbb{C}$ be the probability measure obtained by placing $\bar{\mathbf{x}}_1$ on the row player and drawing $\mathbf{x}_2$ uniformly from $\mathbf{X}_2$. Equivalently, $\mathbb{C}$ is supported on the segment $\{(\bar{\mathbf{x}}_1,\mathbf{x}_2):\mathbf{x}_2\in\mathbf{X}_2\}$.
We show that $\mathbb{C}$ satisfies the continuous-game correlated equilibrium condition. For player $1$, any measurable deviation $\phi_1:\mathbf{X}_1\rightarrow\mathbf{X}_1$ maps the only recommended row strategy $\bar{\mathbf{x}}_1$ to some mixed strategy $\phi_1(\bar{\mathbf{x}}_1)$. Since $\bar{x}_1(U)=1/2$ maximizes $\rho_1(\cdot,\mathbf{x}_2)$ for every $\mathbf{x}_2$, we have
\[
\rho_1(\phi_1(\bar{\mathbf{x}}_1),\mathbf{x}_2)-\rho_1(\bar{\mathbf{x}}_1,\mathbf{x}_2)\leq 0
\quad\text{for all }\mathbf{x}_2\in\mathbf{X}_2.
\]
Taking expectation with respect to $\mathbb{C}$ gives the correlated-equilibrium inequality for player $1$. For player $2$, since $\bar{x}_1(U)=1/2$, we have $\rho_2(\bar{\mathbf{x}}_1,\phi_2(\mathbf{x}_2))=\rho_2(\bar{\mathbf{x}}_1,\mathbf{x}_2)=1/2$ for every measurable deviation $\phi_2:\mathbf{X}_2\rightarrow\mathbf{X}_2$. Hence the correlated-equilibrium inequality also holds for player $2$.

Therefore, $\mathbb{C}$ is a correlated equilibrium of the lifted continuous game. Its support is the segment $\{(\bar{\mathbf{x}}_1,\mathbf{x}_2):\mathbf{x}_2\in\mathbf{X}_2\}$, which is infinite. This illustrates that correlated equilibria in DRGs are naturally probability measures over the continuous strategy space $\mathbf{X}$, rather than finite distributions over the original pure-action profiles. While general approaches to computation of correlated equilibrium in continuous games can be used here~\citep{stein2011correlated}, further exploration of the computation of correlated equilibria specifically in distributionally robust games, as well as the existence of small-support correlated equilibria, remains open for future research.

\end{document}